\documentclass[11pt,letterpaper]{article}

\usepackage[margin=1in]{geometry}
\usepackage{theorem,latexsym,graphicx,amssymb}
\usepackage{amsmath}
\usepackage{float}
\usepackage{xspace}
\usepackage{paralist}
\usepackage{enumerate,multicol}
\usepackage{cases}
\usepackage{caption}
\usepackage{graphicx}
\usepackage{tikz,pgfmath}
\usetikzlibrary{matrix,positioning,quotes}
\usetikzlibrary{shapes.multipart}
\usetikzlibrary{calc}
\usetikzlibrary{arrows,decorations.markings}
\usetikzlibrary{matrix}
\usepackage{url}
\usepackage[ruled,linesnumbered]{algorithm2e}
\usepackage{hyperref}
\usepackage[noend]{algpseudocode}
\usepackage{xcolor}
\usepackage{multirow}
\usepackage{multicol}
\usepackage{graphicx}
\usepackage{subcaption}
\usepackage{varwidth}
\usepackage{wrapfig}
\usepackage{bbm}
\usepackage{enumitem}

\newenvironment{proof}{{\bf Proof:  }}{\hfill\rule{2mm}{2mm}\vspace*{5pt}}
\newenvironment{proofof}[1  ]{{\vspace*{5pt} \noindent\bf Proof of #1:  }}{\hfill\rule{2mm}{2mm}\vspace*{5pt}}
\numberwithin{figure}{section}
\numberwithin{equation}{section}
\newtheorem{definition}{Definition}[section]
\newtheorem{remark}{Remark}[section]

\newtheorem{theorem}{Theorem}[section]
\newtheorem{lemma}{Lemma}[section]
\newtheorem{claim}{Claim}[section]

\newtheorem{observation}{Observation}[section]

\SetKwInOut{Parameter}{Parameters}

\newcommand{\alg}{\textsf{ALG}}
\newcommand{\opt}{\textsf{OPT}}

\newcommand{\ue}{\textbf{Unbiased Estimator}}

\newcommand{\indic}{\mathbbm{1}}
\newcommand{\E}{\mathop{{}\mathrm{E}}}
\newcommand{\Var}{\mathrm{Var}}

\newcommand{\given}{\;\vert\;}
\newcommand{\nonfrac}{0.718}
\newcommand{\nonint}{0.666}
\newcommand{\nonhardness}{0.75}
\newcommand{\nj}{j}

\newcommand{\nonp}{\mathcal{P}}

\newcommand{\eqdef}{\overset{\mathrm{def}}{=\mathrel{\mkern-3mu}=}}

\newcommand{\vect}[1]{\ensuremath{\mathbf{#1}}}
\newcommand{\types}{\vect{t}}
\newcommand{\typesmi}[1][i]{\types_{\text{-}#1}}
\newenvironment{reminder}[1]{\bigskip
	\noindent {\bf Reminder of #1.  }\em}{\smallskip}
	
\newcommand{\hongxunfixed}[1]{\textcolor{green}{}}

\newcommand{\rtypes}{\vect{s}}

\title{(Fractional) Online Stochastic Matching via Fine-Grained Offline Statistics}

\author{Zhihao Gavin Tang\thanks{ITCS, Shanghai University of Finance and Economics. Email: \texttt{tang.zhihao@mail.shufe.edu.cn}}
\and Hongxun Wu\thanks{IIIS, Tsinghua University. Email: \texttt{wuhx18@mails.tsinghua.edu.cn}}
\and Jinzhao Wu\thanks{CFCS, Peking University. Email: \texttt{jinzhao.wu@pku.edu.cn}}}

\begin{document}

\maketitle

\begin{abstract}
Motivated by display advertising on the internet, the online stochastic matching problem is proposed by Feldman, Mehta, Mirrokni, and Muthukrishnan (FOCS 2009). Consider a stochastic bipartite graph with offline vertices on one side and with i.i.d. online vertices on the other side. The algorithm knows the offline vertices and the distribution of the online vertices in advance. Upon the arrival of each online vertex, its type is realized and the algorithm immediately and irrevocably decides how to match it. In the vertex-weighted version of the problem, each offline vertex is associated with a weight and the goal is to maximize the total weight of the matching.

In this paper, we generalize the model to allow non-identical online vertices and focus on the fractional version of the vertex-weighted stochastic matching. We design fractional algorithms that are $0.718$-competitive and $0.731$-competitive for non i.i.d. arrivals and i.i.d. arrivals respectively. We also prove that no fractional algorithm can achieve a competitive ratio better than $0.75$ for non i.i.d. arrivals.
Furthermore, we round our fractional algorithms by applying the recently developed multiway online correlated selection by Gao et al. (FOCS 2021) and achieve $0.666$-competitive and $0.704$-competitive integral algorithms for non i.i.d. arrivals and i.i.d. arrivals. 
Our results for non i.i.d. arrivals are the first algorithms beating the $1-1/e \approx 0.632$ barrier of the classical adversarial setting. Our $0.704$-competitive integral algorithm for i.i.d. arrivals slightly improves the state-of-the-art $0.701$-competitive ratio by Huang and Shu (STOC 2021). 
\end{abstract} 
\section{Introduction}
\label{sec:intro}

Since the seminal work of Karp, Vazirani, and Vazirani~\cite{stoc/KarpVV90}, online bipartite matching has been extensively studied in the online algorithms literature. In the classical setting, the vertices on one side of an underlying bipartite graph are known upfront to the algorithm. We refer to this set of vertices as offline vertices. The vertices on the other side of the graph, known as online vertices, are revealed in a sequence. Upon the arrival of each vertex, the algorithm makes an irrevocable matching decision between the vertex and its offline neighbors. The goal is to maximize the size of the matching produced by the algorithm. 
Karp et al. proposed the \textsf{Ranking} algorithm that achieves the optimal competitive ratio of $1-\frac{1}{e}$. Later, Aggarwal et al.~\cite{soda/AggarwalGKM11} generalized the algorithm to the vertex-weighted setting and attained the same competitive ratio.

Motivated by display advertising on the internet, Feldman et al.~\cite{focs/FeldmanMMM09} initiated the study of online stochastic matching. In the stochastic setting, each online vertex is drawn i.i.d. from a priori known distribution and the realization of each vertex is revealed on its arrival. Feldman et al.~\cite{focs/FeldmanMMM09} designed the first $0.67$-competitive algorithm that bypasses the $1-\frac{1}{e}$ impossibility result of the worst-case model, under an extra assumption of integral arrival rate. Later on, a line of subsequent works improved the competitive ratio to $0.729$~\cite{esa/BahmaniK10,mor/ManshadiGS12,mor/JailletL14,algorithmica/BrubachSSX20a} for unweighted graphs, and extended the results to vertex-weighted graphs~\cite{mor/JailletL14} and edge-weighted graphs~\cite{wine/HaeuplerMZ11,algorithmica/BrubachSSX20a}. 
Without the assumption of integral arrival rate, Manshadi, Oveis Gharan, and Saberi \cite{mor/ManshadiGS12} designed a $0.702$-competitive algorithm, and the ratio is improved to $0.706$ by Jaillet and Lu~\cite{mor/JailletL14} and to $0.711$ by Huang and Shu~\cite{stoc/HuangS21}. In the vertex-weighted online stochastic matching, the state-of-the-art competitive ratio is $0.701$ by Huang and Shu~\cite{stoc/HuangS21}. Manshadi, Oveis Gharan, and Saberi \cite{mor/ManshadiGS12} also proved that no algorithm can be better than $0.823$-competitive.

\subsection{Our Results}
\paragraph{Fractional Online Stochastic Matching.} In this paper, we study the \emph{fractional} version of online stochastic matching that allows us to fractionally match between a pair of vertices, as long as the total matched portion of each vertex is at most one. Furthermore, we generalize the online stochastic matching problem to \emph{non-identical} vertices. That is, the type of each vertex is drawn independently while the distributions of all vertices can be distinct.
 
We propose $0.731$-competitive and $\nonfrac$-competitive fractional algorithms for online stochastic matching with i.i.d. arrivals and non i.i.d. arrivals, respectively. Our algorithms and analysis automatically apply to the \emph{vertex-weighted} version without any adjustment.
We also provide a $\nonhardness$ hardness result for the fractional online stochastic matching problem with non i.i.d. arrivals.

\begin{table}[h]
\centering
\begin{tabular}{|c|cc|ccc|}
\hline
\multirow{2}{*}{Arrival}    & \multicolumn{2}{c|}{Previous Results}                                      & \multicolumn{3}{c|}{Our Results (Vertex-weighted)}                                                                \\ \cline{2-6} 
                            & \multicolumn{1}{c|}{Integral}                     & Hardness               & \multicolumn{1}{c|}{Integral}               & \multicolumn{1}{c|}{Fractional}             & Hardness              \\ \hline
\multirow{2}{*}{I.I.D.}     & \multicolumn{1}{c|}{Unweighted: 0.711~\cite{stoc/HuangS21}}                   & \multirow{2}{*}{0.823~\cite{mor/ManshadiGS12}} & \multicolumn{1}{c|}{\multirow{2}{*}{0.704}} & \multicolumn{1}{c|}{\multirow{2}{*}{0.731}} & \multirow{2}{*}{-}    \\
                            & \multicolumn{1}{c|}{Vertex-weighted: 0.701~\cite{stoc/HuangS21}}              &                        & \multicolumn{1}{c|}{}                       & \multicolumn{1}{c|}{}                       &                       \\ \hline
\multirow{2}{*}{Non I.I.D.} & \multicolumn{1}{c|}{\multirow{2}{*}{$1-1/e \approx 0.632$~\cite{stoc/KarpVV90,soda/AggarwalGKM11}}} & \multirow{2}{*}{-}     & \multicolumn{1}{c|}{\multirow{2}{*}{\nonint}} & \multicolumn{1}{c|}{\multirow{2}{*}{\nonfrac}} & \multirow{2}{*}{0.75} \\
                            & \multicolumn{1}{c|}{}                             &                        & \multicolumn{1}{c|}{}                       & \multicolumn{1}{c|}{}                       &                       \\ \hline
\end{tabular}
\caption{A summary of the state-of-the-art results and our results.}
\end{table}

\paragraph{Online Rounding via Online Correlated Selection.} 
A general and widely-applicable technique in the online algorithms literature is a two-step approach that first design a fractional online algorithm and then apply an online rounding scheme to it. 
As implications of our fractional algorithms, we apply the recently developed online rounding tool, namely \emph{multiway online correlated selection}, as a blackbox to our fractional algorithm and achieve $0.704$-competitive and $0.666$-competitive integral algorithms for the vertex-weighted online stochastic matching problem with i.i.d. arrivals and non i.i.d. arrivals, respectively.
The online correlated selection technique was introduced in the breakthrough work of Fahrbach et al.~\cite{focs/FahrbachHTZ20} as a key subroutine for the edge-weighted online bipartite matching problem. Recently, Blanc and Charikar~\cite{corr/BlancC21}, and Gao et al.~\cite{corr/GaoHHNYZ21} extended the rounding scheme to multiway online correlated selection. For the convenience of our work, we apply the multiway online correlated selection of Gao et al.~\cite{corr/GaoHHNYZ21}.

For the non i.i.d. arrival setting, our fractional $\nonfrac$-competitive and integral $\nonint$-competitive algorithms are the first algorithms beating the $1-1/e$ barrier of the classical adversarial setting.\footnote{The Ranking algorithm achieves a competitive ratio of $1-1/e$ for adversarial arrivals and breaks this bound in the i.i.d. setting. However, it does not beat the $1-1/e$ barrier for non-i.i.d. arrivals when the instance is the (deterministic) upper-triangle graph for which Ranking is exactly $1-1/e$-competitive.} 

For the i.i.d. arrival setting, our fractional algorithm significantly improves the $0.711$ ratio of Huang et al.~\cite{stoc/HuangS21}, and is even slightly better than the state-of-the-art $0.729$ ratio of Jaillet and Lu~\cite{mor/JailletL14}, under the extra integral arrival rate assumption. Our integral algorithm slightly improves the state-of-the-art $0.701$ competitive ratio for vertex-weighted graphs.

\subsection{Our Techniques}
\paragraph{Beyond Two-choice Algorithms.} Prior to our work, the arguably most successful approach~\cite{mor/ManshadiGS12,mor/JailletL14,stoc/HuangS21} for attacking the online stochastic matching problem is a two-choice algorithm. Upon the arrival of each vertex, it makes at most two tries to match its neighbors, and if both attempts failed, the vertex is left unmatched even if there exists other unmatched neighbors. The restriction of two-choice facilitates the analysis, while at the same time, leads to obviously suboptimal behaviors of the algorithm.
We import the multiway online correlated selection as a rounding scheme that allows us to explore beyond two-choice algorithms.

\paragraph{Unbiased Estimators: Fine-Grained Offline Statistics.} Similar to previous works, our algorithms heavily rely on the offline statistics of the instance and indeed, we use more fine-grained offline statistics. All previous works explicitly or implicitly use the following offline statistics which we call independent estimators: $x_{u,v}(t_v) = \Pr[(u,v) \in \opt \mid t_v]$, i.e. the probability that edge $(u,v)$ is matched in the optimal matching conditioning on the realized type $t_v$ of the online vertex $v$. 
A first natural fractional algorithm that we study is to fractionally match each edge $(u,v)$ by $x_{u,v}(t_v)$ on the arrival of every $v$. 
In Section~\ref{sec:non-iid}, we fully characterize the performance of this algorithm in the non i.i.d. setting. Surprisingly, our analysis shows that this algorithm is the best among a much larger family of algorithms in the non i.i.d. setting.

Furthermore, we consider more fine-grained offline statistics, e.g.
\[
x_{u,v_j}(t_{v_1},t_{v_2},\ldots,t_{v_j}) = \Pr \left[ (u,v_j) \in \opt \mid t_{v_1},t_{v_2},\ldots,t_{v_j} \right],
\]
where $v_k$ denotes the $k$-th arriving vertex. This value is the probability that $(u,v_j)$ is matched in the optimal matching conditioning on the realized types $\{t_{v_k}\}_{k\le j}$ of all arrived vertices. By considering such statistics, our matching decision is inherently adaptive to the historical information of the instance. More generally, we can select an arbitrary subset of arrived vertices that includes the current vertex, and calculate the conditional probability with respect to the types of those vertices. We name such offline statistics as unbiased estimators since the expectation of such random variables $x_{u,v}$ equals to the offline probability of $(u,v)$ being matched in the optimal matching, for arbitrary selection of the subset.

Our fractional algorithm for the i.i.d. arrival setting in Section~\ref{sec:iid} is a mix of different unbiased estimators. To the best of our knowledge, we are the first to utilize such fine-grained statistics for the online stochastic matching problem.

\paragraph{Unbiased Estimator with Minimum Variance.} Instead of using the size of the fractional matching or the size of the rounded matching as a natural objective function, we provide a unified framework by reducing the maximum matching problems to the design of an unbiased estimator with minimum variance. Fixing an offline vertex, the expected fraction it receives is the same for all unbiased estimators. On the other hand, the total fraction received by each vertex is a random variable and in some of the realizations, it can be larger than $1$. This is wasteful since we cannot collect the part of the gain that exceeds $1$. In an ideal case, if the random variable is never greater than $1$, our fractional algorithm would then be optimal. Unfortunately, it is an impossible task to design an unbiased estimator with such properties, due to the online nature of the problem. Intuitively, if we can design an unbiased estimator with small variance, it should lead to a good performance. We establish the connection between the variance of an unbiased estimator to competitive ratio in Section~\ref{sec:min_variance}.

\subsection{Paper Organization}
We formally define the family of unbiased estimator algorithms in Section~\ref{sec:prelim}. In Section~\ref{sec:min_variance}, we establish a connection between the second moment of an unbiased estimator to its competitive ratio. In Section~\ref{sec:warm-up}, as a warm up, we provide a simple algorithm that breaks the $1-1/e$ barrier with non i.i.d. arrivals. In Section~\ref{sec:iid}, we present our main algorithm for the i.i.d. arrival setting. In Section~\ref{sec:non-iid}, we present our main algorithm for the non i.i.d. arrival setting and the hardness result.
\subsection{Related Work}

Simultaneous to our work, Huang, Shu, and Yan~\cite{corr/HuangSY22} apply the online correlated selection techniques to the i.i.d. online stochastic matching and also exploit the power of multiple-choice algorithms. They also study the edge-weighted setting with free disposal. The online correlated selection techniques have also been applied to the Adwords problem~\cite{focs/HuangZZ20} and the online bipartite matching problem with reusable resources~\cite{corr/DelongFNS21}.

Devanur et al.~\cite{jacm/DevanurJSW19} studied a similar non-identical setting for general online resource allocation problems. On the other hand, they focused on the case of small bids and their results are not directly comparable to ours. Closely related to the i.i.d. stochastic matching problem is the random arrival model. In this model, we assume that the online vertices of an underlying graph arrive in a random order. It is observed that any algorithm in the random arrival model would apply for the stochastic model. For unweighted graphs, Karande, Mehta, and Tripathi~\cite{stoc/KarandeMT11} and Mahdian and Yan~\cite{stoc/MahdianY11} proved that the Ranking algorithm is $0.696$-competitive. For vertex-weighted graphs, Huang et al.~\cite{talg/HuangTWZ19} and Jin and Williamson~\cite{wine/JinW20} achieved a $0.662$ competitive ratio.

There is an extensive study of fractional online matching algorithms in the adversarial setting, including $b$-matching~\cite{tcs/KalyanasundaramP00}, fractional edge-weighted online bipartite matching~\cite{wine/FeldmanKMMP09}, online matching with concave return~\cite{stoc/DevanurJ12}, fully online matching~\cite{jacm/HuangKTWZZ20,soda/HuangPTTWZ19, focs/HuangTWZ20}, online matching with general vertex arrival~\cite{icalp/WangW15}, adwords~\cite{focs/MehtaSVV05, esa/BuchbinderJN07}. On the other hand, the state-of-the-art integral algorithms are mostly not directly related to the fractional algorithms in these models. In other words, they do not admit a good online rounding scheme. Recently, Buchbinder, Naor, and Wajc~\cite{corr/BuchbinderNW21} revisit the classical online bipartite matching problem and study how much randomness is necessary to beat the $1/2$ barrier for deterministic algorithms, for which they apply the same two-step approach of rounding a fractional algorithm as us. 
 
\section{Preliminaries} 
\label{sec:prelim}

\paragraph{Online Stochastic Matching.} We study the online vertex-weighted stochastic matching problem with non i.i.d. vertex arrivals. Consider a bipartite graph with offline vertices on one side, and with stochastic online vertices on the other side. We use $L$ to denote the offline vertices and $R = \{v_j\}$ to denote the online vertices. Each offline vertex $u \in L$ is associated with a weight $w_u$. The type of each online vertex $v_j$ is drawn independently from a priori known distribution $D_j$. Here, each type specifies its edges incident to the offline vertices. The offline vertices together with their associated weights, and the distributions $D_j$'s of online vertices are known in advance. We use $T_j$ to denote the support of $D_j$. The type $t_j \in T_j$ of each $v_j$ is only realized on its arrival and the algorithm makes immediate and irrevocable matching decisions. The goal is to maximize the total weight of the matched offline vertices. We assume the arrival order of the vertices is unknown to the algorithm but is fixed in advance. We use $\types$ to denote the realized types $(t_1,t_2,\cdots,t_n)$ of all vertices and $\types_{\le j}, \types_{>j}$ to denote the realized types $(t_1, t_2, \cdots, t_j)$ of the first $j$ vertices and the types $(t_{j+1},\cdots, t_n)$ of the vertices after $v_j$. For every index set $I \subseteq [n]$, we use $\types_I$ to denote the types $(t_i)_{i\in I}$.

We shall also study the i.i.d. setting in which all $D_j$'s are the same. We make no assumptions on the arrival rate of each type.

\subsection{Fractional Online Stochastic Matching}

We focus on the design and analysis of fractional algorithms, and then apply the multiway online correlated selection by Gao et al.~\cite{corr/GaoHHNYZ21} as our rounding scheme.

On the arrival of each vertex $v_j$, based on the realized type $t_j$ of $v_j$, we construct a vector $\vect{x}_j \in [0,1]^{L}$ such that $x_{u,j} > 0$ only if $u$ is a neighbor of $v_j$ and $\sum_{u \in L} x_{u,j} \le 1$. Then we match $v_j$ to each $u \in L$ by a fraction of $x_{u,j}$.
For the ease of our presentation, we use $y_u = \sum_{j} x_{u,j}$ to denote the total fraction that $u$ received, and allow it to be greater than $1$. Then, the performance of our algorithm equals
\[
\alg = \sum_{u \in L} w_u \cdot \min (y_u, 1).
\]
The vector $\vect{x}_j$ is naturally decided based on the realized types of all arrived vertices. We abuse the notation $\vect{x}_j$ to denote a function that maps from types of vertices to (fractional) matching decisions.
We consider the following family of fractional algorithms that we call \emph{unbiased estimators}. 

\begin{definition}[Unbiased Estimator] $\vect{x}_j : T_1 \times T_2 \times \cdots \times T_j \to [0,1]^L$ is \emph{unbiased} if:
\begin{itemize}
	\item $\sum_{u \in L} x_{u,j}(\types_{\le j}) \le 1$ for all $\types_{\le j} \in T_1 \times T_2 \times \cdots \times T_j$;
	\item $x_{u,j}(\types_{\le j}) > 0$ only if the vertex $v_j$ with type $t_j$ has an edge to $u$. 
	\item $\E_{\types_{\le j}}[x_{u,j}(\types_{\le j})] = \Pr[(u,v_j) \in \opt ]$ for all $u \in L$.
\end{itemize}
Further, we say an fractional algorithm is an unbiased estimator, if for every step $j\in[n]$, the constructed vector $\vect{x}_j$ is unbiased.
\end{definition}
\begin{remark}
All the estimators $x_{u,j}$ used in the work shall be the probability that edge $(u,v_j)$ is in the optimal matching, conditioning on the types of a subset of online vertices. With a sample access to each online vertex, this family of statistics can be estimated by standard Monte-Carlo algorithm within arbitrary additive accuracy (with high probability). Hence, our algorithms can be implemented efficiently with only an $\epsilon$ loss in the competitive ratio. 
\end{remark}

\subsection{Rounding via Online Correlated Selection}
We import the multiway online correlated selection formulation by Gao et al.~\cite{corr/GaoHHNYZ21} and present their setting and result for the ease of our application.

\paragraph{Multiway Online Selection Problem.~\cite{corr/GaoHHNYZ21}}  
Considers a set of elements $U$ and a selection process that proceeds in $n$ rounds. Each round $j \in [n]$ is associated with a non-negative vector $\vect{x}_j$ with $\sum_{u \in U} x_{u,j} \le 1$. The vectors are unknown at the beginning and are revealed to a multiway online selection algorithm at the corresponding rounds. Let $U_j = \{u\in U: x_{u,j} > 0\}$ be the set of elements with positive masses in round $j$. Upon observing the vector $\vect{x}_j$ for round $j$, the algorithm selects an element from $U_j$. Let $y_u = \sum_{j \in [n]} x_{u,j}$ be the cumulative mass of each element $u$. 

\begin{theorem} [Theorem 6, \cite{corr/GaoHHNYZ21}] \label{thm:ocs}
There exists a multiway online correlated selection such that any element $u$ with accumulated mass $y_u$ is selected with probability at least 
\[
p(y_u) \eqdef 1 - \exp\left( - y_u - \frac{1}{2}\cdot y_u^2 - \left(\frac{4 - 2\sqrt{3}}{3}\right) \cdot y_u^3\right)~.
\]
\end{theorem}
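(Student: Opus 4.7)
The plan is to replay the construction of Gao et al., which itself generalizes the two-way online correlated selection of Fahrbach et al. The functional form $p(y) = 1-\exp\bigl(-y - \tfrac12 y^2 - \tfrac{4-2\sqrt 3}{3} y^3\bigr)$ is highly suggestive: the linear term is the baseline one gets from independent per-round sampling (yielding $1-e^{-y}$), the quadratic improvement comes from pairwise negative correlation between rounds, and the cubic improvement from negative correlation among triples. My proof strategy targets each of these three ingredients in turn.

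Step 1 (construction). In each round $j$, sample an element from $\vect{x}_j$, but couple the randomness with earlier rounds via shared auxiliary coins. With some tunable probability, ``pair'' round $j$ with an earlier round $j'$ or, more rarely, ``triple'' it with two earlier rounds, using the coins to ensure that within each group the selections are negatively correlated for every element $u$ simultaneously, while the per-round marginals remain exactly $\vect{x}_j$. The feasibility of such couplings is nontrivial because the active sets $U_j$ differ across rounds, so some care is needed when the element $u$ whose mass we wish to boost is present only in a subset of the grouped rounds.

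Step 2 (analysis). Fix an offline element $u$. Condition on the random pairing/tripling structure; the non-selection event for $u$ factorizes across groups. An isolated round $j$ contributes $(1-x_{u,j})$; a pair $(j_1,j_2)$ contributes at most $(1-x_{u,j_1})(1-x_{u,j_2}) - \gamma\, x_{u,j_1}x_{u,j_2}$ for a suitable $\gamma>0$; a triple contributes an analogous bound with a cubic negative term. Taking expectations over the coins and combining $-\log(1-x)\ge x$ with the per-group improvements yields
\[
-\log \Pr[u \text{ not selected}] \;\ge\; y_u + \tfrac12 y_u^2 + c\, y_u^3,
\]
where $c = (4-2\sqrt{3})/3$ emerges by tuning the pairing and tripling probabilities to maximize the guaranteed cubic improvement. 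A final Jensen/convexity step converts this expected-log bound into the claimed bound on $\Pr[u \text{ selected}]$.

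The main obstacle will be Step 1: the correlation mechanism must be online (the round-$j$ decision depends only on $\vect{x}_1,\dots,\vect{x}_j$), must preserve per-round marginals exactly, and must induce negative correlation up through triples for \emph{every} element simultaneously. Getting the cubic constant \emph{tight} (rather than merely positive) requires a fine-grained case analysis of how three rounds interact when their active sets overlap nontrivially, together with a careful optimization of the coin probabilities that balances the quadratic and cubic gains against each other. Once the construction is fixed, the remainder of the proof reduces to routine algebra.
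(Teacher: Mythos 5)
This theorem is not proved in the paper at all: it is imported verbatim from Gao, He, Huang, Nie, Yuan, and Zhong, with the citation placed directly in the theorem header, and the surrounding text explicitly says ``We import the multiway online correlated selection formulation by Gao et al.~and present their setting and result for the ease of our application.'' So there is no in-paper proof to compare your proposal against; the right move here was simply to cite the source, which the paper does. Attempting to reconstruct the proof is out of scope for this statement.

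On the merits of the sketch itself, it would not go through as written. First, Step~1 is precisely the content of the theorem and you explicitly defer it (``the main obstacle will be Step~1''); the actual Gao et al.~construction is not an explicit pairing/tripling scheme with tunable coin probabilities but a carefully designed sampling automaton whose state encodes a discounted history, and the cubic constant $\tfrac{4-2\sqrt 3}{3}$ comes out of an optimization internal to that automaton's analysis, not from ``tuning the tripling probability.'' Second, and more concretely, your closing Jensen step is backwards: conditioning on the grouping structure gives a bound on $\E_{\text{coins}}\bigl[-\log\Pr[u\ \text{not selected}\mid\text{coins}]\bigr]$, but since $-\log$ is convex, Jensen yields $-\log\E[X]\le\E[-\log X]$, so a lower bound on the conditional quantity does \emph{not} transfer to a lower bound on $-\log\Pr[u\ \text{not selected}]$. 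You would need the inequality in the opposite direction, which Jensen does not provide. The correct analysis in Gao et al.~bounds the unconditional non-selection probability directly via a recursion over rounds rather than by averaging a per-group log-bound.
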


We remark that the original paper requires $\sum_{u \in U}x_{u,j}=1$ for each $\vect{x}_j$. In order to apply the algorithm to arbitrary vector $\sum_{u \in U} x_{u,j} \le 1$, we introduce a dummy vertex $u_{o}$ that connects to every online vertex and for each $j$, let $x_{u_o,j} =1- \sum_{u \in U}x_{u,j}$.

We also observe the concavity of function $p(y)$, that shall be used later in the proof of Lemma~\ref{lem:noniidworstper} and Lemma~\ref{lem:split}.

\begin{lemma} \label{lem:concave}
The function $p(y) = 1 - \exp\left( - y - \frac{1}{2}\cdot y^2 - \left(\frac{4 - 2\sqrt{3}}{3}\right) \cdot y^3\right)$ is concave.
\end{lemma}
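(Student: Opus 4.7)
The plan is to write $p(y) = 1 - e^{-q(y)}$ where $q(y) = y + \tfrac{1}{2} y^2 + \tfrac{4-2\sqrt{3}}{3} y^3$, and then reduce concavity to a polynomial inequality via two differentiations. By the chain rule,
\[
p''(y) = e^{-q(y)} \bigl( q''(y) - (q'(y))^2 \bigr).
\]
Since the exponential factor is strictly positive, it suffices to establish $(q'(y))^2 \ge q''(y)$ for all $y \ge 0$ (the range of cumulative masses that the theorem's statement envisages).

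Next, I would compute explicitly with $c \eqdef 4 - 2\sqrt{3} \in (0,1)$, obtaining $q'(y) = 1 + y + c y^2$ and $q''(y) = 1 + 2 c y$. Expanding and subtracting gives
\[
(q'(y))^2 - q''(y) = 2(1-c)\, y + (1+2c)\, y^2 + 2c\, y^3 + c^2\, y^4.
\]
Because $c = 4 - 2\sqrt{3} < 1$, every coefficient on the right-hand side is non-negative, so the expression is non-negative for all $y \ge 0$. Hence $p''(y) \le 0$, which gives the desired concavity.

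The only thing to check carefully is that the specific value $c = \tfrac{4-2\sqrt{3}}{3}\cdot 3 = 4 - 2\sqrt{3}$ really satisfies $c < 1$, which keeps the linear coefficient $2(1-c)$ non-negative. This is where the proof would break if the cubic coefficient were chosen slightly larger; with the parameters from Gao et al.\ the inequality is tight enough to work but requires no further ingenuity beyond this sign check, so I do not anticipate a real obstacle.
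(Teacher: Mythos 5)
Your proof is correct and takes essentially the same route as the paper: both compute $p''(y) = e^{-q(y)}\bigl(q''(y) - (q'(y))^2\bigr)$ and show the resulting degree-4 polynomial is sign-definite for $y \ge 0$ by inspecting its coefficients, the only difference being that you set $c = 4 - 2\sqrt{3}$ while the paper uses $c = \tfrac{4-2\sqrt{3}}{3}$ (the paper's key observation $6c - 2 < 0$ is exactly your $2(1-c) > 0$). Your note that the restriction to $y \ge 0$ is needed is a small but genuine clarification the paper leaves implicit.
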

\begin{proof}
Let $c=\frac{4 - 2\sqrt{3}}{3}$, we calculate the second-order derivative of $p(y)$:
\begin{align*}
p'(y)& = \left( 1+y+3c y^2 \right)\cdot \exp\left( - y - \frac{1}{2}\cdot y^2 - \left(\frac{4 - 2\sqrt{3}}{3}\right) \cdot y^3\right); \\
p''(y) & = \left((1+6cy) - (1+y+3cy^2)^2\right) \cdot \exp\left( - y - \frac{1}{2}\cdot y^2 - \left(\frac{4 - 2\sqrt{3}}{3}\right) \cdot y^3\right) \\
& = \left( (6c-2) y - (1+6c)y^2 - 6cy^3 - 9c^2y^4\right) \cdot \exp\left( - y - \frac{1}{2}\cdot y^2 - \left(\frac{4 - 2\sqrt{3}}{3}\right) \cdot y^3\right) < 0,
\end{align*}
where the last inequality follows from the fact that $6c-2\approx -0.928 < 0$.
\end{proof}

We then apply this OCS rounding scheme as a black-box to our fractional algorithm. See the following pseudocode for a formal description of our integral algorithm.

\begin{algorithm}[H]
	\caption{Unbiased Estimator + Online Correlated Selection}
	\begin{algorithmic}
	\State On the arrival of vertex $v_j$,
	\State \qquad 1. Construct a vector $\vect{x}_j \in [0,1]^{L}$ according our \ue \ algorithm.
	\State \qquad 2. Select a vertex $u \in L$ by applying the multiway OCS w.r.t. $\vect{x}_j$.
	\State \qquad 3. Match $v_j$ to $u$ if $u$ is not matched yet.
	\end{algorithmic}
\end{algorithm}

\subsection{Competitive Ratio}
We provide a unified analysis for the fractional algorithms and their corresponding integral versions by rounding with OCS. 
We conclude a competitive ratio of $\Gamma$ of our algorithms by proving a stronger statement that for every offline vertex, its matched fraction (probability of being matched) is at least $\Gamma$ times the probability it is matched in the optimal matching. We remark that a standard competitive analysis only need to show that the total expected matching is at least $\Gamma$ times the optimal matching.

As a reward, our analysis works for vertex-weighted graphs without any loss in the competitive ratio. In contrast, even in the i.i.d. arrival setting, prior works~\cite{mor/JailletL14, stoc/HuangS21, algorithmica/BrubachSSX20a} need to do different analyses for the unweighted and vertex-weighted settings. 
Formally, we prove the following lemma, that allows us to study each offline vertex separately.

\begin{lemma} \label{lem:vertex-by-vertex}
	For any unbiased estimator, its competitive ratio for the (non-i.i.d. vertex-weighted) fractional online stochastic matching problem is at least
	\[
	\min_{u \in L} \frac{\E_{\types}\left[\min(y_u(\vect{t}),1)\right]}{\E_{\types}\left[y_u(\vect{t}) \right]}~,\text{where } y_u(\types) = \sum_{j \in [n]} x_{u,j}(\types_{\le j})~.
	\]
	Moreover, if we round the unbiased estimator with OCS, its competitive ratio is at least 
	\[
	\min_{u \in L} \frac{\E_{\types}\left[p(y_u(\vect{t}))\right]}{\E_{\types}\left[y_u(\vect{t}) \right]}~.
	\]
\end{lemma}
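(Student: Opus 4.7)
The plan is to rewrite both $\E[\alg]$ and $\E[\opt]$ as sums of per-offline-vertex contributions, and then observe that the overall competitive ratio is a weighted average of per-vertex ratios and hence at least the smallest of them.

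First I would handle $\E[\opt]$. Since each offline vertex $u$ is matched to at most one online vertex in any realization, the events $\{(u,v_j)\in\opt\}$ for $j\in[n]$ are pairwise disjoint, so
\[
\E[\opt] \;=\; \sum_{u\in L} w_u \cdot \Pr[u\in\opt] \;=\; \sum_{u\in L} w_u \sum_{j\in[n]} \Pr[(u,v_j)\in\opt].
\]
The third bullet of the unbiasedness definition, summed over $j$, converts the inner sum into $\E_{\types}[y_u(\types)]$, giving $\E[\opt] = \sum_{u\in L} w_u\, \E_{\types}[y_u(\types)]$. For the fractional algorithm, by definition $\alg = \sum_{u\in L} w_u \min(y_u(\types),1)$, so
\[
\frac{\E[\alg]}{\E[\opt]} \;=\; \frac{\sum_{u\in L} w_u\,\E_{\types}[\min(y_u(\types),1)]}{\sum_{u\in L} w_u\,\E_{\types}[y_u(\types)]},
\]
which is a convex combination of the per-vertex ratios with weights $w_u\,\E_{\types}[y_u]$ and is therefore at least $\min_u \E_{\types}[\min(y_u,1)]/\E_{\types}[y_u]$. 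This yields the first bound.

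For the OCS-rounded version, $\alg = \sum_{u\in L} w_u\,\indic[u\text{ matched}]$. Because $u$ becomes matched exactly when OCS selects it at least once --- at the first round in which OCS picks $u$, the vertex must still be available --- Theorem~\ref{thm:ocs} applied conditional on the full type vector $\types$ gives $\Pr[u\text{ matched}\mid\types] \ge p(y_u(\types))$. Taking expectations and plugging the resulting lower bound $\E[\alg] \ge \sum_u w_u\,\E_{\types}[p(y_u(\types))]$ into the same weighted-average argument proves the second bound.

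The only genuine step in this plan is the identity $\E_{\types}[y_u] = \Pr[u\in\opt]$, which combines the disjointness of matching events in $\opt$ with the unbiasedness definition. Everything else is the trivial inequality that a weighted average is at least its minimum, so I expect no significant obstacle; the lemma is essentially a bookkeeping statement that licenses a vertex-by-vertex analysis in the rest of the paper, and the only thing I would double-check is that Theorem~\ref{thm:ocs} indeed applies after conditioning on $\types$ (the vectors $\vect{x}_j$ become deterministic given $\types$, so it does).
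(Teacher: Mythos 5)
Your proof is correct and follows essentially the same route as the paper: rewrite $\E[\opt]$ via the disjointness of the events $\{(u,v_j)\in\opt\}$ and the unbiasedness identity to get $\E[\opt]=\sum_u w_u\,\E_{\types}[y_u]$, then bound the ratio by a weighted average, and for the rounded version apply Theorem~\ref{thm:ocs} conditionally on $\types$. Your added remark that $u$ is matched exactly when OCS selects it at least once (because at the first such round $u$ is still available) makes explicit a point the paper leaves implicit, but there is no substantive difference in method.
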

\begin{proof}
We directly write $x_{u,j}, y_u$ and omit the inputs $\types_{\le j}, \types$ for notation simplicity.
According to the definition of unbiased estimator, we have 
\begin{multline*}
\E[\opt] = \sum_{u \in L} w_u \cdot \Pr[u \text{ is matched in } \opt] \\
= \sum_{u \in L} w_u \cdot \sum_{j \in [n]} \Pr[(u,v_j) \in \opt]= \sum_{u \in L} w_u \cdot \sum_{j \in [n]} \E_{\types_{\le j}}[x_{u,j}] = \sum_{u \in L} w_u \cdot \E_{\types}[y_u]~.	
\end{multline*}

For our fractional algorithm, $\E_{\types}[\textsf{FRAC}] = \sum_{u \in L} w_u \cdot \E_{\types}[\min (y_u,1)]$. Therefore,
\[
\frac{\E[\textsf{FRAC}]}{\E[\opt]} = \frac{\sum_{u \in L} w_u \cdot \E_{\types} [\min(y_u,1)]}{\sum_{u \in L} w_u \cdot \E_{\types} [y_u]} \ge \min_{u \in L} \frac{\E_{\types}[\min(y_u,1)]}{\E_{\types}[y_u]}~.
\]
 
Next, we consider the integral version by rounding with OCS. Recall the second condition of the definition of unbiased estimator and the property of OCS, the candidate vertex $v_j$ from the second step of our algorithm must be a neighbor of $v_j$. In other words, our attempt for matching $(v_j,u)$ is valid. 
To evaluate the performance of our algorithm, we observe the two types of randomness involved. The first type of randomness comes from the nature of the stochastic environment, i.e. the types $\types$ are drawn from the product distribution $D_1 \times D_2 \times \cdots \times D_n$. The second type of randomness comes from the random selection of multiway OCS. Fix any type vector $\types$, the vectors $\vect{x}_j$'s constructed in the first step of our algorithm are fixed according to the first condition above. Applying Theorem~\ref{thm:ocs}, we know that each vertex $u \in L$ is matched with probability at least $p(y_u)$. Consequently, $\E[\alg(\types)] \ge \sum_{u \in L} w_u \cdot p(y_u).$ By taking the randomness over $\types$, we have
\[
\E[\alg] \ge \sum_{u \in L} w_u \cdot \E_{\types} [p(y_u)]~.
\]
We conclude the lemma by noticing that 
\[
\frac{\E[\alg]}{\E[\opt]} \ge \frac{\sum_{u \in L} w_u \cdot \E_{\types} [p(y_u)]}{\sum_{u \in L} w_u \cdot \E_{\types} [y_u]} \ge \min_{u \in L} \frac{\E_{\types}[p(y_u)]}{\E_{\types}[y_u]}~.
\]
\end{proof}

\section{Unbiased Estimator with Minimum Variance}
\label{sec:min_variance}

Equipped with Lemma~\ref{lem:vertex-by-vertex}, it suffice to design an unbiased estimator such that the values $\E[f(y_u)]/\E[y_u]$ are large simultaneously for all offline vertices $u \in L$, where
\[
f(y) = 
\begin{cases}
 	\min(y,1) & \text{for the fractional version} \\
	p(y) & \text{for the integral version}
\end{cases}
\]
Fix an arbitrary offline vertex $u$, recall that $\E[y_u] = \Pr[u \text{ is matched in }\opt]$ for any unbiased estimator. Hence, we are aiming for an unbiased estimator with the largest possible $\E[f(y_u)]$.
There is a rich space of unbiased estimators and in principle, different functions of $f(y)$ might result in different optimal unbiased estimators. 

On the other hand, observe that the two functions we care about, $\min(y,1)$ and $p(y)$, are both concave. This suggests us to design an unbiased estimator with \emph{minimum variance}. 
Intuitively speaking, for two random variables $y_1,y_2$ with the same mean, $\E[f(y_1)]$ is more likely to be larger than $\E[f(y_2)]$, if the variance $\Var[y_1]$ of $y_1$ is smaller than the variance $\Var[y_2]$ of $y_2$. 

This intuitive claim fails to hold for arbitrary random variables. But for the purpose of our algorithm design, it is sufficient to use $\Var[y]$ or $\E[y^2]$ as a proxy for the objective function $\E[f(y)]$. 

The next lemma formalize this idea and reduces our algorithm design problem to unbiased estimator design with minimum variance. We approximate the functions $f(y)=\min(y,1)$ and $f(y)=p(y)$ by quadratic functions and prove that $\E[f(y)]$ must be large, provided $\E[y^2]$ is small.

\begin{lemma}\label{lem:variance-LP}
Suppose a non-negative random variable $y$ satisfies that $\E[y] = \mu$ and $\E[y^2] \leq \gamma$. 
\begin{enumerate}[label=(\alph*)]
	\item If $\mu \in [0,1]$ and $\gamma = \mu + \frac{1}{2}\mu^2 $, then $\E[\min(y,1)] \geq 0.646 \cdot \E[y]$;
	\item If $\mu \in [0,1]$ and $\gamma = \mu + \frac{1}{2}\mu^2 $, then $\E[p(y)] \geq 0.634 \cdot \E[y]$;
	\item If $\mu \in [0,1]$ and $\gamma = 1.05771\mu + 0.231\mu^2$, then $\E[\min(y,1)] \geq 0.731 \cdot \E[y]$.
	\item If $\mu \in [0,1]$ and $\gamma = 1.05771\mu + 0.231\mu^2$, then $\E[p(y)] \geq 0.704 \cdot \E[y]$.
\end{enumerate}
\end{lemma}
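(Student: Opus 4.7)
The plan is to identify the worst-case distribution of $y$ subject to $\E[y]=\mu$ and $\E[y^2]\le\gamma$, and reduce each part to a one-variable optimization. Both $f=\min(\cdot,1)$ and $f=p$ are concave (the latter by Lemma~\ref{lem:concave}), so $\E[f(y)]$ over the convex set of feasible distributions is minimized at an extreme point; by a standard moment-problem argument this extreme point is supported on at most two atoms and attains $\E[y^2]=\gamma$ with equality (otherwise one could spread mass and strictly decrease $\E[f(y)]$ by concavity). Writing the support as $\{a,b\}$ with $0\le a\le\mu\le b$ and probabilities $p_a=(b-\mu)/(b-a)$, $p_b=(\mu-a)/(b-a)$, the second-moment equation becomes the algebraic identity $(\mu-a)(b-\mu)=\gamma-\mu^2$, leaving one free parameter $u:=\mu-a\in[0,\mu]$ (so $v:=b-\mu=(\gamma-\mu^2)/u$).

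For parts (a) and (c), $f(y)=\min(y,1)$ is piecewise linear, so I would use $\E[\min(y,1)]=\mu-\E[(y-1)^+]$ and upper-bound $\E[(y-1)^+]=u(v-(1-\mu))/(u+v)$ in the relevant case $b>1$. Elementary calculus yields the maximizer $u^*=-(1-\mu)+\sqrt{(1-\mu)^2+\gamma-\mu^2}$ with maximum value $u^*/2$, giving
\[
\E[\min(y,1)]\;\ge\;\tfrac{1}{2}\Bigl(1+\mu-\sqrt{(1-\mu)^2+\gamma-\mu^2}\Bigr).
\]
For part (a) with $\gamma=\mu+\mu^2/2$, the radicand simplifies to $1-\mu+\mu^2/2$; I would show by differentiation that the ratio of the right-hand side to $\mu$ is non-increasing on $(0,1]$, so its minimum at $\mu=1$ equals $1-\sqrt{2}/4\approx 0.6464\ge 0.646$. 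Part (c) follows the same template with $\gamma=1.05771\mu+0.231\mu^2$; the ratio evaluates to roughly $0.7315$ at $\mu=1$, and the analogous monotonicity check in $\mu$ yields the claimed $\ge 0.731$ bound.

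For parts (b) and (d), the two-point reduction gives $\E[p(y)]=p_a\,p(a)+p_b\,p(b)$ as a function of $u$, but the transcendence of $p$ prevents an elementary closed form for the minimizer. My plan is, for each $\mu\in[0,1]$, to produce a quadratic lower bound $p(y)\ge c_\mu+A_\mu y-B_\mu y^2$ (with $B_\mu\ge 0$) chosen so that equality holds on the worst-case two-point support $\{a,b\}$; the three coefficients $(c_\mu,A_\mu,B_\mu)$ can be fixed by matching value and derivative at $y=a$ and value at $y=b$. Then
\[
\E[p(y)]\;\ge\;c_\mu+A_\mu\mu-B_\mu\E[y^2]\;\ge\;c_\mu+A_\mu\mu-B_\mu\gamma,
\]
and the claim reduces to checking that this expression is at least $0.634\mu$ (respectively, $0.704\mu$) for all $\mu\in[0,1]$. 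The main obstacle is verifying that $p(y)\ge c_\mu+A_\mu y-B_\mu y^2$ holds \emph{globally}, not just on $\{a,b\}$: this requires sign analysis of $p(y)-(c_\mu+A_\mu y-B_\mu y^2)$ via its derivatives together with the concavity of $p$ from Lemma~\ref{lem:concave}, and may need auxiliary numerical checks at the worst-case $\mu$ where the bound becomes tight. By contrast, the algebra for parts (a) and (c) is routine after the two-point reduction.
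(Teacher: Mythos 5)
Your plan for parts (a) and (c) is a genuinely different route from the paper and it works. The paper's proof is purely dual: it exhibits explicit quadratic certificates $ay^2+by+d\leq f(y)$ with $a\leq 0$ and plugs in $\gamma$ and $\mu$; the certificates are asserted but not verified. Your approach is primal: reduce to two-point distributions, solve the one-variable optimization exactly, and obtain the closed form $\E[\min(y,1)]\geq\tfrac12\bigl(1+\mu-\sqrt{(1-\mu)^2+\gamma-\mu^2}\bigr)$. I checked the algebra: the maximizer is indeed $u^*=-(1-\mu)+\sqrt{(1-\mu)^2+K}$ with $K=\gamma-\mu^2$, the maximum is $u^*/2$, the required side conditions ($b>1$, $a\geq 0$) hold for the paper's $\gamma$'s, and the monotonicity-in-$\mu$ claim reduces to $\mu^2>0$ for (a) and to $(0.94229)^2<4\cdot 0.231$ for (c) after cancellation. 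Your bound is in fact exactly $1-\sqrt2/4\approx 0.6464$ and $\approx 0.7313$, slightly sharper than the paper's stated $0.646$ and $0.731$, so for these two parts your route is cleaner and fully verifiable by hand.

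One place you are loose: you invoke ``a standard moment-problem argument'' for the claim that the worst case is supported on two atoms. With three active moment constraints (normalization, $\E[y]$, $\E[y^2]$), extreme points of the moment set can generically have three atoms, so that citation does not directly give what you want. The clean justification, for $f=\min(y,1)$, is to condition on the partition $\{y<1\}\cup\{y\geq 1\}$: setting $a=\E[y\mid y<1]$, $b=\E[y\mid y\geq 1]$, $q=\Pr[y\geq 1]$ preserves $\E[y]$ and $\E[\min(y,1)]$ exactly (since $f$ is linear on each piece) while Jensen decreases $\E[y^2]$, so the two-point surrogate is feasible and no worse. (For general concave $f$ the same conditioning gives $\E_\nu[f(y)]\geq\E_{2\text{pt}}[f(y)]$, so the reduction still holds.) For parts (b) and (d) you revert to the paper's certificate method; note that if you establish the two-point reduction for $p$, you could instead numerically minimize the one-parameter two-point value directly and avoid the global-verification worry, whereas if you go the certificate route the two-point analysis is not needed at all — in either case the residual verification burden is no heavier than in the paper, which also leaves its quadratics unverified.
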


The proof of the Lemma is deferred to Appendix~\ref{appendix:variance-LP}. In Section~\ref{sec:warm-up} (respectively Section~\ref{sec:iid}), we construct an unbiased estimator such that the condition $\gamma=\mu+\frac{1}{2}\mu^2$ holds (respectively the condition $\gamma = 1.05771\mu + 0.231\mu^2$) for all $y_u$'s simultaneously, in the non i.i.d. arrival setting (respectively in the i.i.d. arrival setting).

\section{Warm up: Breaking $1 - \frac{1}{e}$ for Non I.I.D. Arrivals}\label{sec:warm-up}

In this section, we study the non-identical setting and propose an algorithm with competitive ratio strictly better than $(1-1/e)$. 
We first introduce the following two unbiased estimators. 

\paragraph{Independent Estimator.} On the arrival of each vertex $v_j$, fix its type $t_j$ and calculate the probability that $(u,v_j)$ is matched in the optimal matching when the types of all other vertices are resampled. I.e.,
\[
x_{u,j}(t_j) = \Pr_{\vect{\tilde{t}}_{\text{-}j}}\left[(u,v_j) \in \opt \left(t_j,\vect{\tilde{t}}_{\text{-}j}\right) \right]~.
\]
\paragraph{Fully-Correlated Estimator.} On the arrival of each $v_j$, fix the types $\types_{\le j}$ of all previously arrived vertices and calculate the probability that $(u,v_j)$ is matched in the optimal matching when the types of the remaining vertices are resampled. I.e.,
\[
x_{u,j}(\types_{\le j}) = \Pr_{\vect{\tilde{t}}_{>j}} \left[(u,v_j) \in \opt \left(\types_{\le j}, \vect{\tilde{t}}_{>j} \right) \right]~.
\]

The independent estimator ignores the realized types of previous vertices and calculates the probability $(u,v_j)$ is matched in \opt~conditioning on the type of $v_j$ being $t_j$. This is considerably the most natural unbiased estimator and the idea of resampling all remaining vertices is standard in the prophet inequality literature and the online contention resolution scheme literature, e.g. the resampling process as in~\cite{sigecom/EzraFGT20}. The fully-correlated estimator makes use of the full history and calculates the probability $(u,v_j)$ is matched in \opt~conditioning on the types $\types_{\le j}$ of all arrived vertices. 
It is straightforward to check both estimators satisfy the conditions of unbiased estimator.

\paragraph{Even Mix of IE/FE.} Consider an even mix between the independent and fully-correlated estimators, i.e. 
\[
x_{u,j}(\types_{\le j}) = \frac{1}{2}\cdot\Pr_{\vect{\tilde{t}}_{\text{-}j}}\left[(u,v_j) \in \opt \left(t_j,\vect{\tilde{t}}_{\text{-}j}\right) \right] + \frac{1}{2}\cdot\Pr_{\vect{\tilde{t}}_{>j}} \left[(u,v_j) \in \opt \left(\types_{\le j}, \vect{\tilde{t}}_{>j} \right) \right].
\]

\begin{theorem} \label{thm:even-mixing}
An even mix between independent estimator and fully-correlated estimator is $0.646$-competitive for the non i.i.d. vertex-weighted fractional online stochastic matching. Furthermore, by rounding it with OCS, the algorithm is $0.634$-competitive.
\end{theorem}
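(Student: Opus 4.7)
The plan is to combine Lemma~\ref{lem:vertex-by-vertex} with parts~(a) and~(b) of Lemma~\ref{lem:variance-LP}. By Lemma~\ref{lem:vertex-by-vertex} it suffices to argue vertex by vertex, and for each fixed offline vertex $u \in L$ unbiasedness gives $\mu_u \eqdef \E[y_u] = \Pr[u \in \opt] \in [0,1]$ where $y_u = \sum_{j} x_{u,j}(\types_{\le j})$. Lemma~\ref{lem:variance-LP}(a,b) then says that the single second-moment bound
\[
\E[y_u^2] \;\le\; \mu_u + \tfrac{1}{2}\mu_u^2
\]
immediately yields both $\E[\min(y_u,1)] \ge 0.646 \cdot \mu_u$ and $\E[p(y_u)] \ge 0.634 \cdot \mu_u$, so the entire theorem reduces to proving this one inequality for every $u \in L$.

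For the second-moment computation I would write $y_u = \tfrac{1}{2}(Y^{\mathrm{IE}} + Y^{\mathrm{FE}})$, with $Y^{\mathrm{IE}} = \sum_j a_{u,j}(t_j)$ and $Y^{\mathrm{FE}} = \sum_j b_{u,j}(\types_{\le j})$, where $Z_{u,j} \eqdef \indic[(u,v_j) \in \opt]$, $a_{u,j} = \E[Z_{u,j} \given t_j]$, and $b_{u,j} = \E[Z_{u,j} \given \types_{\le j}]$. Setting $W_j = a_{u,j} + b_{u,j}$, I would expand $4\E[y_u^2] = \sum_j \E[W_j^2] + 2\sum_{j<k} \E[W_j W_k]$ into its diagonal pieces together with the four off-diagonal families $\E[a_{u,j} a_{u,k}]$, $\E[a_{u,j} b_{u,k}]$, $\E[b_{u,j} a_{u,k}]$, $\E[b_{u,j} b_{u,k}]$ for $j<k$. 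Three structural facts drive their evaluation: (i) the tower property $\E[b_{u,j} \given t_j] = a_{u,j}$, which gives $\E[a_{u,j} b_{u,j}] = \E[a_{u,j}^2]$ on the diagonal and, via conditional Jensen, $\E[b_{u,j}^2] \ge \E[a_{u,j}^2]$; (ii) for $j<k$, independence of $t_k$ from $\types_{\le j}$ gives $\E[a_{u,j} a_{u,k}] = \E[b_{u,j} a_{u,k}] = p_j p_k$ with $p_j \eqdef \Pr[(u,v_j) \in \opt]$; and (iii) another tower step rewrites $\E[a_{u,j} b_{u,k}]$ and $\E[b_{u,j} b_{u,k}]$ as expectations of $a_{u,j}(t_j) \cdot \Pr[(u,v_k)\in\opt \given t_j]$ and $b_{u,j}(\types_{\le j}) \cdot \Pr[(u,v_k)\in\opt \given \types_{\le j}]$, respectively.

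The decisive input is the ``$u$ is matched at most once in $\opt$'' constraint, which holds pointwise and therefore conditionally: $\sum_k \Pr[(u,v_k) \in \opt \given \mathcal{F}] \le 1$ for any sub-sigma-algebra $\mathcal{F}$. Taking $\mathcal{F} = \sigma(t_j)$ and $\mathcal{F} = \sigma(\types_{\le j})$ and peeling off the $k=j$ term yields the telescoping bounds $\sum_{k>j} \Pr[(u,v_k)\in\opt \given t_j] \le 1 - a_{u,j}(t_j)$ and the analogous inequality for $b_{u,j}$. Plugging these into the cross-term expressions from (iii) and combining with $2\sum_{j<k} p_j p_k \le \mu_u^2$ leaves
\[
2\sum_{j<k} \E[W_j W_k] \;\le\; 2\mu_u^2 + 4\mu_u - 2\sum_j \E[a_{u,j}^2] - 2\sum_j \E[b_{u,j}^2],
\]
while the diagonal satisfies $\sum_j \E[W_j^2] = 3\sum_j \E[a_{u,j}^2] + \sum_j \E[b_{u,j}^2]$ by (i). Adding the two bounds gives $4\E[y_u^2] \le 4\mu_u + 2\mu_u^2 + \sum_j \E[a_{u,j}^2] - \sum_j \E[b_{u,j}^2]$, and conditional Jensen disposes of the residual since $\sum_j \E[b_{u,j}^2] \ge \sum_j \E[a_{u,j}^2]$, yielding exactly $\E[y_u^2] \le \mu_u + \tfrac{1}{2}\mu_u^2$.

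The main obstacle I anticipate is keeping the bookkeeping across the four off-diagonal families straight, and arranging the decomposition so that the $\E[a_{u,j}^2]$ and $\E[b_{u,j}^2]$ contributions from the diagonal and off-diagonal sums appear with opposite signs, so that the final residual is sign-controlled by the $\E[b_{u,j}^2] \ge \E[a_{u,j}^2]$ direction of conditional Jensen rather than the wrong one. Once this cancellation is arranged, Lemma~\ref{lem:vertex-by-vertex} and Lemma~\ref{lem:variance-LP} deliver both the fractional and the OCS-rounded competitive ratios without further work.
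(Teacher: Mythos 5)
Your proposal is correct, and the bookkeeping checks out: you recover exactly the paper's second-moment bound $\E[y_u^2] \le \mu_u + \tfrac12 \mu_u^2$, after which Lemma~\ref{lem:vertex-by-vertex} and Lemma~\ref{lem:variance-LP}(a,b) finish the argument just as you describe. The underlying ingredients are the same as the paper's: negative correlation from the pointwise constraint $\sum_k Z_{u,k}\le 1$ (the paper's inequality~\eqref{eqn:neg}), independence for the IE cross-terms, and conditional Jensen giving $\E[b_{u,j}^2]\ge\E[a_{u,j}^2]$ (the paper's Lemma~\ref{lem:less}). What differs is the arrangement. The paper first applies the convexity shortcut $\E\bigl[(\tfrac12(\hat y+\tilde y))^2\bigr]\le\tfrac12\bigl(\E[\hat y^2]+\E[\tilde y^2]\bigr)$, so the IE--FE cross-correlation $\E[\hat y\,\tilde y]$ never needs to be computed, and then bounds the two estimators' second moments separately via Lemma~\ref{lem:neg}. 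You instead expand the full quadratic form $\E[(\sum_j W_j)^2]$ and bound the genuine cross-family $\E[a_{u,j}b_{u,k}]$ directly by the same at-most-once argument, with the diagonal $\E[a_{u,j}b_{u,j}]=\E[a_{u,j}^2]$ absorbed via the tower property. Both routes yield the identical final inequality, and the cancellation you arrange is precisely what the paper's AM--GM step combined with Lemma~\ref{lem:less} encodes; your version is a little more transparent about where the cross-correlation goes, at the cost of tracking four off-diagonal families rather than two.
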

\begin{proof}
Fix an arbitrary vertex $u \in L$. Next, we bound the second moment of $y_u$ of our algorithm by studying the independent estimator and fully-correlated estimator separately.
For notation simplicity, we omit the subscript $u$ in all $x_{u,j}$'s and $y_u$'s. We use $\hat{x}_j, \hat{y}$ (resp. $\tilde{x}_j, \tilde{y}$) to denote $x_{u,j}, y_u$ by applying independent estimation (resp. by applying fully-correlated estimation). Then, our algorithm uses $x_j = \frac{1}{2}(\hat{x}_j+\tilde{x}_j)$ and $y = \frac{1}{2}(\hat{y}+\tilde{y})$.

Observe that $\hat{x}_j$'s are independent, while $\tilde{x}_j$'s are not. However, since $\tilde{x}_{j}$ and $\tilde{x}_{k}$ share the same types $\types_{\le \min(j,k)}$, they are naturally negatively correlated. On the other hand, the variance of each $\tilde{x}_{k}$ is larger than each $\hat{x}_k$ as more randomness is realized in $\tilde{x}_k$. We formalize these two properties in the following lemma.

\begin{lemma}
\label{lem:neg}
Suppose $\E[\hat{y}] = \E[\tilde{y}] = \Pr[u \text{ is matched in }\opt] = \mu$. We have
\begin{enumerate}
	\item $\E[\hat{y}^2] \le \mu^2 + \sum_{j \in [n]} \E[\hat{x}_j^2]$
	\item $\E[\tilde{y}^2] \le 2 \mu - \sum_{j \in [n]} \E[\tilde{x}_j^2]$
\end{enumerate}
\end{lemma}
\begin{proof}
We study independent estimator first. By the independence among $\hat{x}_j$'s. We have
\[
 \E[\hat y^2] = \E[\hat{y}]^2 + \Var[\hat{y}] = \mu^2 + \sum_{j \in [n]}\Var[\hat{x}_j] \le \mu^2 + \sum_{j \in [n]} \E[\hat{x}_j^2].
\]
Next, we study the fully-correlated estimator.	
We first prove the following negative correlation between $\tilde{x}_j$ and $\sum_{k>j}\tilde{x}_k$.
Fixing $\types_{\leq j}$, we have

\begin{multline}
\label{eqn:neg}
\tilde x_{j} (\types_{\leq j}) + \E_{\types_{>j}}\left[ \sum_{k>j} \tilde x_{k}(\types_{\leq k}) \ \middle\vert \  \types_{\leq j}  \right] =  \Pr[(u,v_j) \in \opt \mid \types_{\le j}] + \sum_{k>j} \Pr[(u,v_k) \in \opt \mid \types_{\le j}] \\
\leq \Pr[u \in \opt \mid \types_{\le j}] \leq 1.
\end{multline}

Consequently, we have
\begin{align*}
\E[\tilde y^2] &= \E\left[\left( \sum_{j \in [n]} \tilde x_{j} (\types_{\leq j}) \right)^2 \right] = \sum_{j \in [n]} \E[\tilde x_{j}^2 (\types_{\leq j})] + 2 \sum_{j \in [n]} \E\left[\tilde x_{j} (\types_{\leq j}) \cdot  \sum_{k>j} \tilde x_{k} (\types_{\leq k}) \right] \\
&= \sum_{j \in [n]} \E[\tilde x_{j}^2 (\types_{\leq j})] + 2 \sum_{j \in [n]} \E_{\types_{\le j}}\left[\tilde x_{j} (\types_{\leq j}) \cdot \E_{\types_{>j}}\left[ \sum_{k>j} \tilde x_{k} (\types_{\leq k}) \ \middle\vert \ \types_{\le j} \right] \right]\\
&\leq \sum_{j \in [n]} \E[\tilde x_{j}^2 (\types_{\leq j})] + 2 \sum_{j \in [n]} \E\left[\tilde x_{j} (\types_{\leq j}) (1 - \tilde x_{j} (\types_{\leq j})) \right] = 2\mu - \sum_{j=1}^n \E[\tilde x_{j}^2 (\types_{\leq j})],
\end{align*}
where the inequality is by Equation~\eqref{eqn:neg}. We finish the proof of the lemma.
\end{proof}

\begin{lemma} \label{lem:less}
For all $j \in [n]$, $\E[\hat x_{j}^2] \leq \E[\tilde x_{j}^2].$
\end{lemma}
\begin{proof}
By definition, $\hat{x}_j$ only depends on type $t_j$.
\begin{align*}
& \hat x_{j}(t_j) = \Pr_{\tilde{\types}_{\text{-}j}}[(u,v_j) \in \opt(t_j,\tilde{\types}_{\text{-}j})], \\
& \tilde{x}_{j}(\types_{\le j}) = \Pr_{\tilde{\types}_{>j}}[(u,v_j) \in \opt(\types_{\le j},\tilde{\types}_{>j})].
\end{align*}
Fixing $t_j$, $\hat{x}_j$ is a constant and $\E_{\types_{<j}}[\tilde{x}_j \mid t_j] = \hat{x}_j(\types_j)$. Together with the convexity of the function $x^2$, this implies
\[
\hat x_{j}^2 = \E_{\types_{<j}}[\tilde{x}_{j}(\types_{\leq j}) \mid t_j]^2 \leq \E_{\types_{<j}}[\tilde{x}_{j}^2(\types_{\leq j}) \mid t_j].
\]
Taking the randomness over $t_j$ concludes the proof of the lemma.
\end{proof}

Finally, we have
\begin{align*}
\E[y^2] &= \E \left[ \left(\frac{\hat{y} + \tilde{y}}{2}\right)^2 \right] \le \frac{1}{2} \cdot \left(\E[\hat{y}^2] + \E[\tilde{y}^2] \right)\\
&\leq \frac{1}{2}\mu^2 + \mu + \frac{1}{2} \sum_{j\in[n]} \left( \E[\hat x^2_{j}] - \E[\tilde x_{j}^2] \right) \leq \frac{1}{2}\mu^2 + \mu.
\end{align*}
Here, the first inequality follows by AM-GM inequality; the second inequality is by Lemma~\ref{lem:neg}; and the third inequality is by Lemma~\ref{lem:less}.
Observe that the random variable $y$ satisfies the first and second condition of Lemma~\ref{lem:variance-LP}, we have that 
\[
\E[\min(y,1)] \ge 0.646 \cdot \E[y] \quad \text{and} \quad \E[p(y)] \ge 0.634 \cdot \E[y].
\]
 This holds for every vertex $u \in L$. We conclude the proof of the theorem by Lemma~\ref{lem:vertex-by-vertex}.

\end{proof}

\section{I.I.D. Arrivals} \label{sec:iid}
\newcommand\numberthis{\addtocounter{equation}{1}\tag{\theequation}}

In this section, we prove our main theorem for the i.i.d. arrival setting.

\begin{theorem} \label{thm:iid}
There exists a $0.731$-competitive unbiased estimator for the i.i.d. vertex-weighted fractional online stochastic matching. Furthermore, by rounding it with OCS, the algorithm is $0.704$-competitive.
\end{theorem}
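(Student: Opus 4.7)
The plan is to construct an unbiased estimator specialized to the i.i.d.\ setting whose sum $y_u = \sum_j x_{u,j}$ satisfies $\E[y_u^2]\le 1.05771\,\E[y_u]+0.231\,(\E[y_u])^2$ for every offline vertex $u\in L$, then read off the two competitive ratios from Lemma~\ref{lem:variance-LP}(c)--(d) via Lemma~\ref{lem:vertex-by-vertex}. This reuses the exact reduction template of Section~\ref{sec:warm-up}: the only real task is to upgrade the second-moment bound $\E[y_u^2]\le \mu+\tfrac{1}{2}\mu^2$ obtained there to the sharper form above, now exploiting the fact that $D_1=\cdots=D_n=D$.

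To do so, I would work with the full family of unbiased estimators $x_{u,j}^S(\types) = \Pr[(u,v_j)\in\opt \mid \types_S]$ indexed by subsets $S\subseteq[j]$ with $j\in S$; this interpolates between the independent estimator ($S=\{j\}$) and the fully-correlated one ($S=[j]$) used in the warm-up. The algorithm's $\vect{x}_j$ would be a convex combination over such subsets, with mixture weights to be tuned. The i.i.d.\ symmetry then plays two roles. First, it strengthens the Lemma~\ref{lem:neg}-style negative-correlation inequality at each conditioning level: the identity $\sum_{k>j}\Pr[(u,v_k)\in\opt \mid \types_{\le j}]\le 1-\Pr[(u,v_j)\in\opt \mid \types_{\le j}]$ can be applied componentwise, and the resulting cross-term bounds tighten because swapping positions leaves the distribution invariant. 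Second, it lets me relate the per-step second moment $\E[(x_{u,j}^S)^2]$ to a symmetric average across positions $j\in[n]$, sharpening the Lemma~\ref{lem:less}-style comparison. Expanding $\E[y_u^2]=\sum_j\E[x_{u,j}^2]+2\sum_{j<k}\E[x_{u,j}x_{u,k}]$ and plugging in these bounds should yield a quadratic in $\mu=\E[y_u]$ whose coefficients depend on the mixture, which I would then optimize numerically.

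Once the inequality $\E[y_u^2]\le 1.05771\,\mu+0.231\,\mu^2$ holds for every $u\in L$, Lemma~\ref{lem:variance-LP}(c) gives $\E[\min(y_u,1)]\ge 0.731\,\E[y_u]$ and Lemma~\ref{lem:variance-LP}(d) gives $\E[p(y_u)]\ge 0.704\,\E[y_u]$; Lemma~\ref{lem:vertex-by-vertex} then delivers both the fractional $0.731$ and the OCS-rounded $0.704$ competitive ratios simultaneously and in a vertex-weighted manner, with no additional work. The main obstacle is pinning down the mixture so that the precise constants $1.05771$ and $0.231$ emerge. Unlike the warm-up, where the inequality $\E[\hat{x}_j^2]\le \E[\tilde{x}_j^2]$ cancels the diagonal contributions of the two estimators for free, the i.i.d.\ bound trades a slightly larger linear coefficient for a substantially smaller quadratic one; this suggests a mixture skewed toward the more strongly-correlated (and hence higher per-step variance) estimators, and a delicate numerical balance between diagonal and off-diagonal contributions rather than the clean cancellation available in the non-i.i.d.\ setting.
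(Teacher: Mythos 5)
Your high-level plan is indeed the paper's: build an unbiased estimator whose second moment satisfies $\E[y_u^2]\le 1.05771\,\mu + 0.231\,\mu^2$, then invoke Lemma~\ref{lem:vertex-by-vertex} and Lemma~\ref{lem:variance-LP}(c)--(d). But the proposal stops exactly where the real work begins, and the route you sketch would likely get stuck. You propose mixing over \emph{arbitrary} subset-resampling estimators $x^S_{u,j}$ with $S\subseteq[j]$, $j\in S$. The paper deliberately restricts to \emph{windowed} estimators $x^{(r)}_{u,j}$ conditioned on the interval $[j-r+1,j]$, with coefficients $a^{(r)}_j=\beta/n$ for $r<j$ and $a^{(j)}_j=1-(j-1)\beta/n$, $\beta=0.79$. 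The interval structure is not cosmetic: it guarantees that for $j<k$, the intersection of the two conditioning windows is again an interval ending at $j$, so that $j$ is inside the intersection and $k$ is outside it. That is exactly what lets Lemma~\ref{lem:xixj} rewrite each cross term $\E[x^{(r_1)}_{j}x^{(r_2)}_{k}]$ as the closed form $\E[P_\ell(\rtypes_\ell)(1-P_\ell(\rtypes_\ell))]/(\ell(n-\ell))$, upgrading the negative-correlation intuition to an equality. With arbitrary $S$ you lose control of whether $j$ lies in $I_j\cap I_k$, and no comparable formula is available.

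Second, your sketch omits the abstraction that makes the second moment computable at all: the quantity $P_\ell(\rtypes)=\Pr[\exists j\in J,\ (u,v_j)\in\opt\mid\types_J=\rtypes]$, which by i.i.d.\ exchangeability depends only on $\ell=|J|$. Both the diagonal terms (Lemma~\ref{lem:x^2}, bounded by $\E[P_\ell^2]/\ell$) and the cross terms reduce to moments of $P_\ell$, after which the proof needs only $\E[P_\ell]=\mu\ell/n$, $\E[P_\ell^2]\ge(\mu\ell/n)^2$, and a sign check on a polynomial in $\ell/n$. Your phrases ``applied componentwise'' and ``symmetric average across positions'' gesture at i.i.d.\ symmetry but do not produce these identities; in particular, the step that the events $\{(u,v_{j'})\in\opt\}$ are \emph{disjoint} over $j'$, which converts sums of conditional probabilities into $P_\ell$ and $1-P_\ell$, is missing. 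Without that reduction there is no concrete quadratic in $\mu$ to optimize, so the ``tune numerically'' stage cannot get off the ground. The paper's proof of Lemma~\ref{lem:variance} supplies exactly this machinery, and it is the heart of the theorem rather than a bookkeeping afterthought.
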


In Section~\ref{subsec:design}, we first describe the space of all unbiased estimators we study and the intuitions we collect from the warm-up analysis. In Section~\ref{subsec:alg}, we provide our algorithm. The full analysis of our algorithm is provided in Section~\ref{subsec:proof-iid}.

\subsection{Space of Unbiased Estimators} \label{subsec:design}
We study the following family of estimators and their convex combinations.
\paragraph{Subset-Resampling Estimators.} Fix an arbitrary index set $I_j \subseteq [j]$ that contains $j$ itself (i.e. $j \in I_j$). Consider the following estimator:
\[
x_{u,j}^{I_j}(\types_{I_j}) = \Pr_{\types_{[n]\setminus I_j}}\left[ (u,v_j) \in \opt \mid \types_{I_j} \right],
\]
where $\types_{I_j}$ denotes the types $t_i$ of each $i \in I_j$. It is straightforward to check this is an unbiased estimator. 
Observe that the independent estimator corresponds to $I_j = \{j\}$ and the fully-correlated estimator corresponds to $I_j = [j]$. 

Recall that we are aiming for an unbiased estimator with minimum variance. Fix an arbitrary vertex $u\in L$, the objective is to minimize
\[
\E_{\types}\left[y_u^2\right] = \sum_{j=1}^n \E_{\types_{\leq j}}\left[x_{u,j}^2 \right] + 2\sum_{j = 1}^n \sum_{k=j+1}^n  \E_{\types_{\leq k}}\left[x_{u,j} \cdot x_{u,k} \right].
\]

Thus, we are aiming for two goals at the same time: 1) to minimize the variance of each $x_{u,j}$, and 2) to negatively correlate $x_{u,j}$ and $x_{u,k}$.

For the first goal, we have the following intuition that generalizes Lemma~\ref{lem:less}.
\begin{itemize}
\item For all $I \subseteq I'$, $\E \left[ \left(x_{u,j}^{I} \right) ^2 \right] \le \E \left[ \left(x_{u,j}^{I'}\right)^2 \right]$. 
This is because $x_{u,j}^{I}(\types_{I}) = \E_{\types_{I'\setminus I}} \left[ x_{u,j}^{I'} \ \middle\vert \ \types_{I} \right], \forall \types_{I}$.
An immediate implication of this observation is that the independent estimator gives the smallest $\left(x_{u,j} \right)^2$ among all choices of $I$. Moreover, we should consider using index sets with smaller sizes in order to minimize the first part.
\end{itemize}

For the second goal, we have the following intuition from Lemma~\ref{lem:neg}. Recall that in the analysis of fully-correlated estimator, we proved that $\sum_{k\ge j} x_{u,k}^{[k]} \le 1$ in Equation~\eqref{eqn:neg}. We interpret this property as a negative correlation between $x_{u,j}^{[j]}$ and $x_{u,k}^{[k]}$. Indeed, since the two random variables share the same types $\types_{\le j}$, the more likely $(u,v_j)$ is matched in the optimal matching, the less likely $(u,v_k)$ is matched in the optimal matching.
More generally, we observe this negative correlation is actually caused by $t_j$ alone. 
Indeed, we have that
\begin{itemize}
\item For any $I_j \subseteq [j], I_k \subseteq [k], j<k$, $\E \left[ x_{u,j}^{I_j} \cdot x_{u,k}^{I_k \cup \{j\}} \right] \le \E \left[ x_{u,j}^{I_j} \cdot x_{u,k}^{I_k} \right]$. This observation suggests a different design of the unbiased estimator by adding $j$ into $I_k$. In the extreme case when we have $j \in I_k$ for all $j<k$, the only choice is to apply the fully-correlated estimator.
\end{itemize}

Observe that the two goals are contradictory to each other. Therefore, we must quantify the trade-off between the two intuitions and design our unbiased estimator in a careful way. In the warm-up section, we simply apply an even mix between the two extreme estimators, the independent estimator and the fully-correlated estimator. Next, we provide an improved estimator for the i.i.d. arrival setting.

\subsection{Our Algorithm}
\label{subsec:alg}
We consider a restricted subfamily of subset-resampling estimators.
\paragraph{Windowed Estimators.} We use $x_{u,j}^{(r)}$ to denote $x_{u,j}^{[j-r+1,j]}$, which we call windowed estimators, as it fix the types $\types_{[j-r+1,j]}$ of the last $r$ arrived vertices and resample the remaining types. I.e.,
$$x^{(r)}_{u,j}\left(\types_{[j-r+1,j]}\right) = \Pr_{\types_{[n] \setminus [j-r+1,j]}}[(u,v_j) \in \opt \mid \types_{[j - r + 1, j]}].$$

\paragraph{Mix of WE.} We apply the correlated estimation $x_{u,j}(\types)$ as a linear combination of windowed estimators: for each $j \in [n]$, 
\[
x_{u,j}(\types_{\le j}) = \frac{\beta}{n} \sum_{r=1}^{j - 1} x^{(r)}_{u,j} + \left(1 - \frac{j - 1}{n} \beta \right) x^{(j)}_{u,j},
\]
where  $\beta=0.79$ is a optimized constant used for all $j$'s. 

The following lemma is the most technical of our analysis for i.i.d. arrivals. We will present its proof in Section~\ref{subsec:proof-iid}. Finally, applying the lemma together with Lemma~\ref{lem:vertex-by-vertex} and \ref{lem:variance-LP}, we conclude the proof of Theorem~\ref{thm:iid}.
\begin{lemma}
\label{lem:variance}
The above mix of windowed estimator satisfies that 
\[
\E[y_u^2] \le 1.05771 \cdot \E[y_u] + 0.231 \cdot \left( \E[y_u] \right)^2, \forall u \in L.
\]
\end{lemma}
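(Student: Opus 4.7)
The plan is to split $y_u$ into its fully-correlated part $Y = \sum_j \alpha_j \tilde{x}_j$ (with $\tilde{x}_j = x^{(j)}_{u,j}$ and $\alpha_j = 1-(j-1)\beta/n$) and its short-window part $\frac{\beta}{n}T = \frac{\beta}{n}\sum_j\sum_{r<j} x^{(r)}_{u,j}$, then expand $\E[y_u^2] = \frac{\beta^2}{n^2}\E[T^2] + \frac{2\beta}{n}\E[TY] + \E[Y^2]$ and bound each piece using three structural facts about the windowed estimators. (i) By the tower rule, for $r \le s$, $x^{(r)}_{u,j} = \E[x^{(s)}_{u,j}\mid\types_{[j-r+1,j]}]$, so $\E[x^{(r)}_{u,j} x^{(s)}_{u,j}] = \E[(x^{(r)}_{u,j})^2] \le \Pr[(u,v_j)\in\opt]$. (ii) Under i.i.d.\ arrivals, whenever the index sets $[j-r+1,j]$ and $[k-s+1,k]$ are disjoint (equivalently $s\le k-j$ for $j<k$), $x^{(r)}_{u,j}$ and $x^{(s)}_{u,k}$ are independent. (iii) For any $\sigma$-algebra $\mathcal{F}$, $\sum_{i\in[n]} \Pr[(u,v_i)\in\opt\mid\mathcal{F}]\le 1$, generalizing~\eqref{eqn:neg}.

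For $\E[Y^2]$ I would adapt Lemma~\ref{lem:neg}: for $j<k$ the cross term equals $\E[\tilde{x}_j\cdot \E[\tilde{x}_k\mid \types_{\le j}]]$, and summing over $k>j$ using (iii) (together with the monotonicity $\alpha_k\le \alpha_{j+1}$) gives a bound of the shape $\alpha_{j+1}\,\E[\tilde{x}_j(1-\tilde{x}_j)]$ per $j$. For $\E[TY]$ and $\E[T^2]$, I would split each off-diagonal cross term $\E[x^{(r)}_{u,j}x^{(s)}_{u,k}]$ by whether the index sets are disjoint or overlap. Disjoint pairs evaluate exactly to $\Pr[(u,v_j)\in\opt]\cdot\Pr[(u,v_k)\in\opt]$ by~(ii), and consolidate, across the $\Theta(n^2)$ such pairs with coefficients $\beta/n$ or $\alpha_j$, into a quadratic in $\E[y_u]$. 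Overlapping pairs admit $\E[x^{(r)}_{u,j}x^{(s)}_{u,k}] = \E[p_j p_k] \le \E[p_j(1-p_j)]$ by applying~(iii) on the common sub-window $M=[\max(j-r+1,k-s+1),j]$, where $p_i = \Pr[(u,v_i)\in\opt\mid\types_M]$. The crucial step is to sum over the overlapping $(k,s)$ for each fixed $(j,r)$ collectively, via $\sum_{k>j} p_k \le 1 - p_j$, rather than pair-by-pair; this is what prevents an $O(n)$ blow-up and produces a linear-in-$\E[y_u]$ term that is partly offset by $\sum \E[(x^{(r')}_{u,j})^2]$ savings.

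The main obstacle is this final regrouping together with the optimization in $\beta$. After invoking the i.i.d.\ symmetry $\Pr[(u,v_j)\in\opt]=\E[y_u]/n$ and carefully collapsing the double sums (e.g., reindexing by the gap $k-j$ and by how much $s$ exceeds $k-j$), the diagonal, overlapping, and disjoint pieces should combine into a polynomial of the form $a(\beta)\,\E[y_u] + b(\beta)\,(\E[y_u])^2$. The two endpoints of the admissible $\beta$-range serve as sanity checks: $\beta=0$ recovers the pure fully-correlated bound $(a,b)=(2,0)$ from Lemma~\ref{lem:neg}, while the largest feasible $\beta\approx n/(n-1)$ pushes weight toward the short windows (independence) and should approach a bound of the form $(1,1)$ up to lower-order terms. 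The target constants $(a,b) = (1.05771,\,0.231)$ are precisely the optimal interior balance at $\beta=0.79$, presumably identified by minimizing the worst case of $a(\beta) + b(\beta)\mu$ over $\mu\in[0,1]$ and setting a derivative in $\beta$ to zero. This calculation---rather than the structural setup---is where I expect the main technical weight of the proof to lie.
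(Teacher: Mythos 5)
Your framework captures the right high-level ingredients (same/different $j$ split, disjoint-vs-overlapping windows, the tower rule, and $\sum_i\Pr[(u,v_i)\in\opt\mid\mathcal{F}]\le 1$), and your facts (i)--(iii) are all correct as stated. But there are two places where your sketch diverges from the paper's argument in ways that create genuine gaps.

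First, the ``collective summing'' step for overlapping pairs does not go through as written. For a fixed $(j,r)$, as $(k,s)$ ranges over overlapping pairs, the common sub-window $M=[j-r+1,j]\cap[k-s+1,k]$ changes, and therefore so do the conditional probabilities $p_i=\Pr[(u,v_i)\in\opt\mid\types_M]$. The bound $\sum_{k>j}p_k\le 1-p_j$ is a single application of fact (iii) for a \emph{single} $\sigma$-algebra; you cannot combine $p_k$'s that live under different conditionings into one inequality. The paper sidesteps this entirely: Lemma~\ref{lem:xixj} uses i.i.d.\ symmetry to compute each overlapping term \emph{exactly} as $\E[P_\ell(1-P_\ell)]/(\ell(n-\ell))$, a function of $\ell=|M|$ alone (via the symmetrized quantities $P_\ell$ of Definition~\ref{def:Pl}), so the sum over $(j,r,k,s)$ with intersection size $\ell$ reduces to bookkeeping coefficients, not a probabilistic inequality.

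Second, fact (i) throws away information that the paper needs. You bound $\E[(x^{(r)}_{u,j})^2]\le\mu/n$ immediately, but the paper's final step (equation~\eqref{equ:final}) keeps $\E[P_\ell^2]$ as a free quantity, shows its net coefficient after combining the diagonal and overlapping contributions is \emph{negative} for every $\ell$, and only then uses the Jensen \emph{lower} bound $\E[P_\ell^2]\ge(\mu\ell/n)^2$. The sign of that coefficient is what lets the $\E[P_\ell^2]$-dependence be eliminated at the favorable end of its range; if you upper-bound $\E[(x^{(r)}_{u,j})^2]$ at the start, you give up exactly this cancellation, and the resulting $\mu$-coefficient will not come out as small as $1.05771$. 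Your intuition that the ``savings'' $\sum\E[(x^{(r')}_{u,j})^2]$ are needed is right, but tracking them requires keeping $\E[P_\ell^2]$ symbolic through both parts, not consuming it at the outset. So the proposal is a reasonable outline but is missing the one device---the $P_\ell$ symmetrization and exact per-pair computation---that makes the tight constants attainable.
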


\subsection{Proof of Lemma~\ref{lem:variance}}\label{subsec:proof-iid}

Fix an arbitrary offline vertex $u \in L$. For notation simplicity, we shall omit the subscript $u$ for all the notations defined in the above subsections, e.g. we use $x_j, x_j^{(r)}, y$ to denote $x_{u,j}, x_{u,j}^{(r)}, y_u$. Let $a^{(r)}_j = \begin{cases} \frac{\beta}{n} & r < j \\ 1 - \frac{j - 1}{n}\beta & r = j \end{cases}$ be the coefficients of we defined in our algorithm. Then, we have $x_{j} = \sum_{r=1}^j a^{(r)}_j x^{(r)}_{j}$. 
Let $\mu = \E[y]$. We shall prove that $\E[y^2] \le 1.05771 \mu + 0.231 \mu^2$.

Similar as Section~\ref{subsec:design}, we expand $\E[y^2]$ into two parts.
\begin{align*}
\E[y^2] &=  \sum_{i=1}^n \E[x_{j}^2] + 2 \sum_{j=1}^n \sum_{k=j+1}^n \E[x_{j} x_{k}]\\
&= \sum_{j=1}^n \sum_{r_1=1}^j \sum_{r_2 = 1}^j a^{(r_1)}_j a^{(r_2)}_j \E\left[x^{(r_1)}_{j}x^{(r_2)}_{j}\right] + 2  \sum_{j=1}^n \sum_{k=j + 1}^n \sum_{r_1 = 1}^j \sum_{r_2 = 1}^k a_j^{(r_1)} a_k^{(r_2)} \E\left[x^{(r_1)}_{j} x^{(r_2)}_{k}\right]
\end{align*}

\begin{itemize}
\item In \ref{subsec:P_l}, we first define a few useful objectives that are used in both parts.
\item The detailed caculation of the two parts is presented in \ref{subsec:first_part} and \ref{subsec:second_part} respectively. 
\item Finally, we prove Lemma \ref{lem:variance} in \ref{subsec:sumup}.
\end{itemize}

\subsubsection{Definition of $P_\ell(\rtypes)$} \label{subsec:P_l}
 
Recall that $S$ is the support of the type distribution $D$. The following definition will be used throughout our calculation. 
 
 \begin{definition} \label{def:PJ}
 For any set $J \subset [n]$ and vector $\vect{s} \in S^{|J|} $, define $P_J : S^{|J|} \mapsto [0,1]$ to be:
 $$P_J(\vect{s}) = \Pr \left[ \exists j\in J, (u, v_j) \in \opt \mid \types_{J} = \vect{s} \right].$$ 
\end{definition}

In a word, $P_J(\rtypes)$ is the probability that $u$ is matched to an online vertex in $J$ conditioning on the types of online vertices in $J$ being $\rtypes$. 

\begin{observation}	\label{obs:sameJ}
For any $J_1$, $J_2$, if $|J_1| = |J_2|$, $P_{J_1}(\vect{s}) = P_{J_2}(\vect{s})$. 
\end{observation}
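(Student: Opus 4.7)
The plan is to reduce the claim to the exchangeability of i.i.d. random variables together with the fact that $\opt$ is determined by the realized bipartite graph in a way that is symmetric under relabeling of online vertices. Fix any bijection $\sigma : J_1 \to J_2$ and extend it arbitrarily to a permutation $\sigma$ of $[n]$. I would then show the two probabilities coincide by the composition of (i) a distributional symmetry for $\types$, and (ii) an equivariance property for $\opt$.

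For (i), since $D_1 = D_2 = \cdots = D_n = D$ in the i.i.d. setting, the permuted type vector $\sigma \cdot \types$, defined by $(\sigma \cdot \types)_{\sigma(i)} = t_i$, has the same joint distribution as $\types$. For (ii), recall that $\opt(\types)$ is the (weight-)maximum offline matching in the bipartite graph whose edges on the online side are determined pointwise by the types. Under any symmetric tie-breaking rule for $\opt$ (e.g.\ uniformly at random among maximum matchings, or any fixed deterministic rule that does not use the labels of online vertices), one has the equivariance
\[
(u, v_j) \in \opt(\types) \iff (u, v_{\sigma(j)}) \in \opt(\sigma \cdot \types).
\]

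Combining (i) and (ii), I would rewrite
\[
P_{J_1}(\rtypes) \;=\; \Pr\!\left[\exists\, j \in J_1 : (u, v_j) \in \opt(\types) \,\middle\vert\, \types_{J_1} = \rtypes\right],
\]
then change variables $j' = \sigma(j)$ and push the permutation through both the conditioning event and $\opt$ using (ii), and finally apply (i) to replace $\sigma \cdot \types$ by a fresh copy of $\types$. The resulting expression is exactly
\[
\Pr\!\left[\exists\, j' \in J_2 : (u, v_{j'}) \in \opt(\types) \,\middle\vert\, \types_{J_2} = \rtypes\right] \;=\; P_{J_2}(\rtypes).
\]

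The only real obstacle is point (ii): if the paper allowed $\opt$ to be any arbitrary maximum matching chosen with ties broken in a way that depends on indices, then the equivariance would fail. I would address this by observing that the only property of $\opt$ used anywhere is the marginal probability $\Pr[(u, v_j) \in \opt \mid \cdot]$, so we lose no generality by fixing a symmetric tie-breaking rule (e.g.\ uniformly at random over optimal matchings, or after an arbitrarily small random perturbation of weights that makes $\opt$ almost surely unique). Under such a rule the equivariance is immediate, and the observation follows.
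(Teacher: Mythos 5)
Your proof is correct and formalizes exactly the symmetry argument the paper invokes in one sentence: permute online indices so that $J_1$ maps order-preservingly to $J_2$, use that i.i.d.\ types make $\sigma\cdot\types$ equidistributed with $\types$, and use that a label-symmetric choice of $\opt$ is equivariant under such relabelings. Your aside on tie-breaking is a legitimate subtlety the paper glosses over, and your resolution (only the marginals $\Pr[(u,v_j)\in\opt\mid\cdot]$ are used, so one may fix a symmetric tie-breaking rule without loss of generality) is the right one.
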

\begin{proof}
This follows from the fact that all online vertices has the same type distribution. Therefore they are symmetric to the offline optimum. 	
\end{proof}

Hence we can simply denote it using the size of the set $\ell = |J|$ instead of the actual set $J$. This results in the following definition. 

\begin{definition} \label{def:Pl}
For $\ell \in [n]$ and vector $\vect{s} \in S^\ell$, we define $P_\ell : S^\ell \mapsto [0,1]$ to be  
$$P_\ell (\vect{s}) = P_J(\vect{s}), \quad \forall J \subseteq [n], |J| = \ell$$
\end{definition}
By Observation \ref{obs:sameJ}, this is well-defined. 
We compute its expectation that we will use later. 

\begin{lemma} \label{lem:E[pl]}
$\E_{\vect{s} \sim D^{\ell}}[P_\ell (\vect{s})] = \frac{\mu \ell}{n}$, where $\mu=\E_{\types \sim D^n}[y(\types)]$.
\end{lemma}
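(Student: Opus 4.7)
The plan is to show this identity in three short steps, using the tower property of conditional expectation, the matching constraint on $\opt$, and the i.i.d.\ symmetry across online vertices.

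First, I would unfold $P_\ell(\vect{s})$ using Definition~\ref{def:Pl}. By Observation~\ref{obs:sameJ}, we may pick any particular $J \subseteq [n]$ with $|J| = \ell$; fix e.g.\ $J = \{1,\dots,\ell\}$. Then by the tower property,
\[
\E_{\vect{s} \sim D^\ell}[P_\ell(\vect{s})] = \E_{\types_J}\!\bigl[\Pr[\exists j \in J, (u,v_j) \in \opt \mid \types_J]\bigr] = \Pr[\exists j \in J, (u,v_j) \in \opt].
\]

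Second, because $\opt$ is a matching, the events $\{(u,v_j) \in \opt\}$ are pairwise disjoint across $j$, so the union probability on the right splits as a sum:
\[
\Pr[\exists j \in J, (u,v_j) \in \opt] = \sum_{j \in J} \Pr[(u,v_j) \in \opt].
\]

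Third, I would invoke i.i.d.\ symmetry: since the type distributions of the online vertices are identical, the joint distribution of $\types$ is exchangeable in $j$, and hence so is the law of $\opt$ (which depends only on the realized multiset of types together with the fixed offline side). This gives $\Pr[(u,v_j) \in \opt]$ independent of $j$. Combining this with the identity $\mu = \E[y_u] = \sum_{j=1}^n \Pr[(u,v_j) \in \opt]$ obtained in the proof of Lemma~\ref{lem:vertex-by-vertex}, we get $\Pr[(u,v_j) \in \opt] = \mu/n$ for every $j$. Summing over the $\ell$ indices in $J$ yields $\E_{\vect{s}}[P_\ell(\vect{s})] = \ell \mu / n$, as desired.

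No step is technically deep; the only point that needs a sentence of justification is the symmetry claim in the third step, namely that the optimal offline matching treats all online vertices symmetrically under i.i.d.\ arrivals. The rest is a direct application of total probability and disjointness of matching events.
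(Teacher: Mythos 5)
Your proof is correct and follows essentially the same route as the paper: tower property, disjointness/symmetry of the events $\{(u,v_j)\in\opt\}$ under i.i.d.\ arrivals, and the unbiasedness identity $\mu = \Pr[u\text{ matched in }\opt]$. The paper compresses the disjoint-union decomposition into the single step $\Pr[\exists j\in[\ell], (u,v_j)\in\opt] = \frac{\ell}{n}\Pr[u\in\opt]$, but the underlying reasoning is identical to yours.
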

\begin{proof}
By the property of I.I.D. arrival, we know that
\begin{align*}
\E_{\vect{s} \sim D^{\ell}}[P_\ell(\vect{s})] &= \E_{\vect{s}}\left[\Pr_{\types}\left[ \exists j \in[\ell], (u, v_j) \in \opt \mid \types_{[\ell]} = \vect{s} \right]\right] \\
&= \Pr_{\types} \left[ \exists j \in[\ell], (u, v_j) \in \opt \right] \\ 
&= \frac{\ell}{n} \Pr[u \in \opt] \tag{since the vertices are i.i.d.}\\ 
&= \frac{\mu \ell}{n} \tag{since $y$ is an unbiased estimator}
\end{align*}

\end{proof}

\subsubsection{The First Part} \label{subsec:first_part}

Let $\rtypes_\ell$ be a type vector of length $\ell$ sampled from $D^{\ell}$. The first step is to express each term $\E\left[x^{(r_1)}_j x^{(r_2)}_j \right]$ in the first part using $\E\left[P_\ell^2(\rtypes_{\ell})\right]$.

\begin{lemma} \label{lem:x^2}
Suppose $1 \leq r_1 \leq r_2 \leq j$. Let $\ell = |[j - r_1 + 1, j] \cap [j-r_2+1,j]| = r_1$ be the length of the intersection of these two windows. We have 
$$\E_{\types}\left[x^{(r_1)}_{j} x^{(r_2)}_{j}\right] \leq \frac{1}{\ell} \cdot \E_{\rtypes_\ell}\left[P_\ell^2 (\rtypes_\ell)\right]~.$$
\end{lemma}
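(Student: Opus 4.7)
\medskip

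\noindent\textbf{Proof plan for Lemma~\ref{lem:x^2}.}

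The plan is to combine two observations: a tower-property collapse that reduces the cross term to a single second moment, and an i.i.d.\ symmetry argument that rewrites this second moment in terms of $P_\ell$.

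\emph{Step 1: collapse the product via the tower property.} Both $x^{(r_1)}_j$ and $x^{(r_2)}_j$ are conditional probabilities of the same event $\{(u,v_j)\in\opt\}$. Since $r_1 \le r_2$, the conditioning set of $x^{(r_1)}_j$, namely $[j-r_1+1,j]$, is contained in that of $x^{(r_2)}_j$, namely $[j-r_2+1,j]$. Hence by iterated expectations,
\[
\E\bigl[x^{(r_2)}_j \,\big|\, \types_{[j-r_1+1,j]}\bigr] \;=\; x^{(r_1)}_j,
\]
and therefore
\[
\E_{\types}\bigl[x^{(r_1)}_j\, x^{(r_2)}_j\bigr]
\;=\; \E\!\left[x^{(r_1)}_j \cdot \E\bigl[x^{(r_2)}_j \,\big|\, \types_{[j-r_1+1,j]}\bigr]\right]
\;=\; \E\!\left[\bigl(x^{(r_1)}_j\bigr)^2\right].
\]
It now suffices to show $\E\bigl[(x^{(r_1)}_j)^2\bigr] \le \tfrac{1}{\ell}\E_{\rtypes_\ell}\bigl[P_\ell^2(\rtypes_\ell)\bigr]$ with $\ell = r_1$.

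\emph{Step 2: rewrite via i.i.d.\ symmetry.} Let $J = [j-r_1+1, j]$, so $|J|=\ell$, and for each $k \in J$ define
\[
q_k(\types_J) \;=\; \Pr\bigl[(u,v_k) \in \opt \,\big|\, \types_J\bigr].
\]
Because all online vertices draw their types i.i.d.\ from $D$ and $\opt$ is a function of the realized graph only (it is insensitive to vertex labels), swapping the roles of $v_k$ and $v_j$ shows that $q_k(\types_J)$ and $q_j(\types_J)$ have the same distribution. In particular $\E[q_k^2] = \E[q_j^2]$ for every $k \in J$, and since $x^{(r_1)}_j = q_j(\types_J)$ we get
\[
\ell \cdot \E\!\left[\bigl(x^{(r_1)}_j\bigr)^2\right]
\;=\; \sum_{k \in J} \E\!\left[q_k^2(\types_J)\right].
\]

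\emph{Step 3: apply mutual exclusion and sum-of-squares bound.} The events $\{(u,v_k)\in \opt\}_{k\in J}$ are pairwise disjoint (since $u$ participates in at most one edge of $\opt$), hence pointwise
\[
\sum_{k\in J} q_k(\rtypes) \;=\; \Pr\bigl[\exists k\in J:\, (u,v_k)\in\opt \,\big|\, \types_J=\rtypes\bigr] \;=\; P_\ell(\rtypes).
\]
As every $q_k \ge 0$, we have $\sum_{k\in J} q_k^2(\rtypes) \le \bigl(\sum_{k\in J} q_k(\rtypes)\bigr)^2 = P_\ell^2(\rtypes)$ pointwise. Taking expectations and combining with Step~2,
\[
\ell \cdot \E\!\left[\bigl(x^{(r_1)}_j\bigr)^2\right]
\;\le\; \E_{\rtypes_\ell}\!\left[P_\ell^2(\rtypes_\ell)\right],
\]
which is the claimed bound.

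The only subtle point is the symmetry argument in Step~2: one has to justify carefully that i.i.d.\ arrivals plus the label-invariance of $\opt$ yield $q_k \stackrel{d}{=} q_j$. After that, the estimate is just $\sum q_k^2 \le (\sum q_k)^2$ for nonnegative reals, so no further quantitative work is needed.
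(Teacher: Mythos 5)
Your proof is correct and follows essentially the same route as the paper: tower property collapses the cross term to $\E[(x^{(r_1)}_j)^2]$, i.i.d.\ symmetry converts this to an average over $k\in J$, and then $\sum_k q_k^2 \le (\sum_k q_k)^2$ together with the disjointness of the events $\{(u,v_k)\in\opt\}$ gives $P_\ell^2$. The only presentational difference is that you isolate the symmetry step as a separate observation with its own notation, while the paper applies it inline.
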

\begin{proof}
\begin{align*}
& \quad \E_{\types}\left[x^{(r_1)}_{j} x^{(r_2)}_{j}\right] \\
& = \E_{\types_{[j-r_2+1,j]}}\left[\Pr \left[ (u, v_j) \in \opt \mid \types_{[j-r_1+1,j]} \right] \cdot \Pr \left[ (u, v_j) \in \opt \mid \types_{[j-r_2+1,j]} \right] \right] \\
& = \E_{\types_{[j-r_1+1,j]}} \left[\Pr \left[ (u, v_j) \in \opt \mid \types_{[j-r_1+1,j]} \right] \cdot \E_{\types_{[j-r_2+1,j-r_1]}}\left[\Pr \left[ (u, v_j) \in \opt \mid \types_{[j-r_2+1,j]} \right] \right]\right] \\
& = \E_{\types_{[j-r_1+1,j]}}\left[ \Pr \left[ (u, v_j) \in \opt \mid \types_{[j-r_1+1,j]} \right] ^2\right] \numberthis \label{equ:part1-iid}
\end{align*}

Recall that all vertices are symmetric. For all $k \in [j-r_1+1,j]$, we have 
$$\E_{\types_{[j-r_1+1,j]}}\left[ \Pr\left[(u, v_j) \in \opt \mid \types_{[j-r_1+1,j]} \right] ^2\right]  = \E_{\types_{[j-r_1+1,j]}}\left[ \Pr \left[(u, v_k) \in \opt \mid \types_{[j-r_1+1,j]} \right] ^2\right] $$
As a result, 
\begin{align*}
\eqref{equ:part1-iid} &= \frac{1}{r_1} \E_{\types_{[j-r_1+1,j]}}\left[\sum_{k=j-r_1+1}^j \Pr \left[(u, v_k) \in \opt \mid \types_{[j-r_1+1,j]} \right]^2\right] \\
&\leq \frac{1}{r_1} \E_{\types_{[j-r_1+1,j]}}\left[\left(\sum_{k=j-r_1+1}^j \Pr \left[(u, v_k) \in \opt \mid \types_{[j-r_1+1,j]} \right] \right)^2\right] \\
&= \frac{1}{r_1}\E_{\types_{[j-r_1+1,j]}}\left[\Pr\left[\exists k \in [j-r_1+1,j], (u, v_k) \in \opt \mid \types_{[j-r_1+1,j]}\right]^2\right] \tag{since the events are disjoint} \\ 
&= \frac{1}{\ell}\E_{\rtypes_{\ell}} \left [P_\ell^2 (\rtypes_\ell) \right] \tag{by $\ell = r_1$, Def.~\ref{def:PJ} and \ref{def:Pl}}	
\end{align*}

\end{proof}

Now we are ready to compute the first part. 

\begin{lemma} \label{lem:square-terms}
$$\sum_{j=1}^n \sum_{r_1=1}^j \sum_{r_2 = 1}^j a^{(r_1)}_j a^{(r_2)}_j \E_{\types}\left[x^{(r_1)}_{j}x^{(r_2)}_{j}\right] \leq \sum_{\ell = 1}^n \frac{\E_{\rtypes_{\ell}}\left[P_{\ell}^2 (\rtypes_{\ell})\right]}{\ell} \left(1 + 2\frac{n - 2\ell}{n}\beta + \frac{3 \ell^2-2\ell n}{n^2} \beta^2\right) + o(1)$$
\end{lemma}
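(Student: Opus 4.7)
The plan is to reindex the triple sum by $\ell = \min(r_1, r_2)$, group the coefficients, and apply Lemma~\ref{lem:x^2} once per group. First, for any $j$ the coefficients satisfy the normalization
\[
\sum_{r=1}^{j} a_j^{(r)} = \sum_{r=1}^{j-1} \frac{\beta}{n} + \left(1 - \frac{j-1}{n}\beta\right) = 1,
\]
which will let me collapse partial sums of $a_j^{(r)}$ cleanly. Applying Lemma~\ref{lem:x^2} gives $\E[x_j^{(r_1)} x_j^{(r_2)}] \le \frac{1}{\min(r_1,r_2)} \E[P_{\min(r_1,r_2)}^2(\rtypes_{\min(r_1,r_2)})]$, so the whole sum is upper bounded by
\[
\sum_{j=1}^n \sum_{\ell=1}^{j} \frac{\E[P_\ell^2]}{\ell} \cdot C_{j,\ell},\qquad
C_{j,\ell} \eqdef \bigl(a_j^{(\ell)}\bigr)^2 + 2\, a_j^{(\ell)} \sum_{r=\ell+1}^{j} a_j^{(r)}.
\]
After swapping the order of summation I will aim to evaluate $C_\ell := \sum_{j=\ell}^n C_{j,\ell}$ in closed form.

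The next step is to compute $C_{j,\ell}$ by cases. When $\ell = j$ the inner sum is empty and $a_j^{(\ell)} = 1 - \frac{\ell-1}{n}\beta$, so $C_{j,\ell} = (1 - \frac{\ell-1}{n}\beta)^2$. When $\ell < j$ we have $a_j^{(\ell)} = \beta/n$, and by the normalization above,
\[
\sum_{r=\ell+1}^{j} a_j^{(r)} \;=\; 1 - \sum_{r=1}^{\ell} a_j^{(r)} \;=\; 1 - \frac{\ell\beta}{n},
\]
giving $C_{j,\ell} = \frac{\beta^2}{n^2} + \frac{2\beta}{n}\bigl(1 - \frac{\ell\beta}{n}\bigr)$, which is independent of $j$. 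Summing over $j = \ell+1, \dots, n$ produces $(n-\ell)$ copies of this constant plus the single $j=\ell$ term. Expanding and collecting terms by powers of $\ell/n$ then gives
\[
C_\ell \;=\; 1 + 2\beta - \frac{4\ell\beta}{n} - \frac{2\ell\beta^2}{n} + \frac{3\ell^2\beta^2}{n^2} + R_\ell,
\]
where $R_\ell$ is a small residual of size $O(1/n) + O(\ell/n^2)$ arising from the low-order expansion of $(1 - \frac{\ell-1}{n}\beta)^2$ and from the $\ell = j$ versus $\ell < j$ boundary.

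The main terms of $C_\ell$ match exactly the factor $1 + 2\tfrac{n-2\ell}{n}\beta + \tfrac{3\ell^2 - 2\ell n}{n^2}\beta^2$ in the target bound, so it remains to show that $\sum_{\ell=1}^n \frac{R_\ell}{\ell}\E[P_\ell^2] = o(1)$. For this I use the pointwise bound $P_\ell \le 1$, so $\E[P_\ell^2] \le \E[P_\ell] = \mu\ell/n$ by Lemma~\ref{lem:E[pl]}. This cancels the $1/\ell$ factor and yields
\[
\sum_{\ell=1}^n \frac{R_\ell}{\ell}\E[P_\ell^2] \;\le\; \frac{\mu}{n}\sum_{\ell=1}^n R_\ell \;=\; \frac{\mu}{n}\cdot O(1) \;=\; O(1/n),
\]
since $\sum_\ell R_\ell$ is $O(1)$ from the $O(1/n)$ per-term bound summed over $n$ terms together with the $O(\ell/n^2)$ tail. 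This $o(1)$ absorbs the discrepancy and completes the proof.

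The main obstacle I anticipate is purely bookkeeping: keeping the boundary case $\ell = j$ straight while ensuring the leading-order cancellations line up with the target expression, and verifying that the residual really does sum to $o(1)$ rather than contributing a persistent constant. The identity $\sum_r a_j^{(r)} = 1$ is what makes the closed-form evaluation of $\sum_{r=\ell+1}^j a_j^{(r)}$ immediate and is the reason the two contradictory design goals can be reconciled cleanly inside a single quadratic in $\ell/n$.
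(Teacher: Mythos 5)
Your proposal is correct and follows essentially the same route as the paper's proof: reindex by $\ell = \min(r_1,r_2)$, apply Lemma~\ref{lem:x^2} per group, split the coefficient $C_{j,\ell}$ into the cases $j=\ell$ and $j>\ell$, sum in $j$, and bound the residual using $\E[P_\ell^2] \le \E[P_\ell] = \mu\ell/n$. The one small addition you make, the normalization identity $\sum_{r=1}^{j} a_j^{(r)} = 1$, is a clean shortcut for evaluating $\sum_{r=\ell+1}^{j} a_j^{(r)}$ but does not change the argument.
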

\begin{proof}
First, let $\ell = \min(r_1, r_2)$ and apply Lemma \ref{lem:x^2}. We have 

\begin{align*}
\sum_{j=1}^n \sum_{r_1=1}^j \sum_{r_2 = 1}^j a^{(r_1)}_j a^{(r_2)}_j \E_{\types}\left[x^{(r_1)}_{j}x^{(r_2)}_{j}\right] &\leq \sum_{j=1}^n \sum_{r_1=1}^j \sum_{r_2 = 1}^j a^{(r_1)}_j a^{(r_2)}_j \frac{1}{\ell} \E_{\rtypes_{\ell}}\left[P_{\ell}^2(\rtypes_{\ell})\right] \tag{by Lemma \ref{lem:x^2}}\\
&= \sum_{\ell = 1}^n \sum_{j=\ell}^n  \left(\left(a_j^{(\ell)}\right)^2 + 2 a_j^{(\ell)} \sum_{r=\ell + 1}^{j} a_j^{(r)} \right) \frac{\E_{\rtypes_{\ell}}\left[P_{\ell}^2(\rtypes_{\ell})\right]}{\ell} \tag{rearranging the order}
\end{align*}

Then we plug in the definition of $a^{(r)}_j$. 

\begin{align*}
&\sum_{\ell = 1}^n \sum_{j=\ell}^n  \left(\left(a_j^{(\ell)}\right)^2 + 2 a_j^{(\ell)} \sum_{r=\ell + 1}^{j} a_j^{(r)} \right) \frac{\E_{\rtypes_{\ell}}\left[P_{\ell}^2(\rtypes_{\ell})\right]}{\ell} \\
=&\sum_{\ell = 1}^n \left(\left(a_\ell^{(\ell)}\right)^2 +  \sum_{j=\ell + 1}^n  \left(a_j^{(\ell)}\right)^2 + 2 \sum_{j=\ell + 1}^n a_j^{(\ell)} \left( a_j^{(j)} + \sum_{r=\ell + 1}^{j - 1} a_j^{(r)} \right) \right)  \frac{\E_{\rtypes_{\ell}}\left[P_{\ell}^2(\rtypes_{\ell})\right]}{\ell} \\
= &\sum_{\ell = 1}^n \left(\left(1 - \frac{\ell - 1}{n} \beta\right)^2 +  \sum_{j=\ell + 1}^n  \left(\frac{\beta}{n}\right)^2 + 2 \sum_{j=\ell + 1}^n \frac{\beta}{n}\left(1 - \frac{\ell}{n} \beta \right) \right)  \frac{\E_{\rtypes_{\ell}}\left[P_{\ell}^2(\rtypes_{\ell})\right]}{\ell} \numberthis \label{equ:part1} 
\end{align*}

Finally, we compute these summations and get the following.

\begin{align*}
(\ref{equ:part1}) = &\sum_{\ell = 1}^n \left(1 + \frac{2n - 4\ell + O(1)}{n} \beta + \frac{3 \ell^2 - 2 n \ell + O(1)(n + \ell)}{n^2} \beta^2 \right)  \frac{\E_{\rtypes_{\ell}}\left[P_{\ell}^2(\rtypes_{\ell})\right]}{\ell} \\
= &\sum_{\ell = 1}^n \left(1 + \frac{2n - 4\ell}{n} \beta + \frac{3 \ell^2 - 2 n \ell}{n^2} \beta^2 \right)  \frac{\E_{\rtypes_{\ell}}\left[P_{\ell}^2(\rtypes_{\ell})\right]}{\ell} + o(1),
\end{align*}

where the last step follows from the following calculation. First, from Lemma \ref{lem:E[pl]}, we know that $$\E_{\rtypes_{\ell}}[P_{\ell}^2(\rtypes_{\ell})] \leq \E_{\rtypes_{\ell}}[P_{\ell}(\rtypes_{\ell})] = \frac{\mu \ell}{n}.$$ 
Thus as $n$ goes to infinity, $$\sum_{\ell=1}^n \left(\frac{O(1)}{n} + \frac{O(1)(n + \ell)}{n^2}\right) \cdot \frac{\E[P_\ell^2]}{\ell} \leq \mu \sum_{\ell=1}^n \left(\frac{O(1)}{n^2}+ \frac{O(1) (n + \ell)}{n^3} \right) = o(1)~.$$
\end{proof}

\subsubsection{The Second Part} \label{subsec:second_part}

Similar with the first part, we first express each term $\E\left[x^{(r_1)}_{i,j} x^{(r_2)}_{i,k}\right]$ using $\E_{\rtypes_\ell}\left[P_\ell(\rtypes_{\ell})\right]$ and $\E_{\rtypes_\ell}\left[P_\ell^2(\rtypes_{\ell})\right]$. 

\begin{lemma} \label{lem:xixj}
For $1 \leq r_1 \leq j, 1 \leq r_2 \leq k$ and $j < k$, let $\ell = |[j - r_1 + 1, j] \cap [k-r_2+1,k]|$ be the length of the intersection of these two windows. Then, 

$$\E_{\types}\left[x^{(r_1)}_{j} x^{(r_2)}_{k}\right] = \begin{cases}
\E_{\rtypes_{\ell}}\left[\frac{P_\ell(\rtypes_{\ell})(1-P_\ell(\rtypes_{\ell}))}{\ell (n - \ell)}\right] & \ell > 0\\
\frac{\mu^2}{n^2} & \ell = 0
\end{cases} $$
\end{lemma}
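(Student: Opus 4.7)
I split the argument into two cases according to whether the two windows overlap.

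\textbf{Case $\ell = 0$.} The windows $[j-r_1+1,j]$ and $[k-r_2+1,k]$ are disjoint. Since $x^{(r_1)}_j$ depends only on $\types_{[j-r_1+1,j]}$ and $x^{(r_2)}_k$ depends only on $\types_{[k-r_2+1,k]}$, and all types are i.i.d., the two estimators are independent. The expectation factors as
\[
\E_{\types}[x^{(r_1)}_j x^{(r_2)}_k] = \E[x^{(r_1)}_j] \cdot \E[x^{(r_2)}_k] = \Pr[(u,v_j)\in\opt]\cdot\Pr[(u,v_k)\in\opt] = \frac{\mu}{n}\cdot\frac{\mu}{n},
\]
using the unbiased estimator property and Lemma~\ref{lem:E[pl]} with $\ell=1$.

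\textbf{Case $\ell > 0$.} Let $J = [j-r_1+1,j]\cap[k-r_2+1,k]$, so $|J|=\ell$. Because $j<k$ one has $\min(j,k)=j\in J$ while $k\notin J$. Writing $A=[j-r_1+1,j]$ and $B=[k-r_2+1,k]$, the sets $\types_{A\setminus J}$ and $\types_{B\setminus J}$ are conditionally independent given $\types_J$. Integrating them out via the tower property,
\[
\E_{\types_{A\setminus J}}\!\left[x^{(r_1)}_j \mid \types_J\right] = \Pr[(u,v_j)\in\opt \mid \types_J],\qquad \E_{\types_{B\setminus J}}\!\left[x^{(r_2)}_k \mid \types_J\right] = \Pr[(u,v_k)\in\opt \mid \types_J],
\]
so
\[
\E[x^{(r_1)}_j x^{(r_2)}_k] = \E_{\types_J}\!\left[\Pr[(u,v_j)\in\opt\mid\types_J]\cdot\Pr[(u,v_k)\in\opt\mid\types_J]\right].
\]

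\textbf{Symmetrization.} By i.i.d.\ symmetry, for any $j'\in J$ and any $k'\in[n]\setminus J$ the quantity $\E_{\types_J}\!\bigl[\Pr[(u,v_{j'})\in\opt\mid\types_J]\cdot\Pr[(u,v_{k'})\in\opt\mid\types_J]\bigr]$ does not depend on the particular choice of $(j',k')$. Multiplying by $\ell(n-\ell)$ and summing over all such pairs collapses the product of sums:
\[
\ell(n-\ell)\cdot\E[x^{(r_1)}_j x^{(r_2)}_k] = \E_{\types_J}\!\left[\Bigl(\sum_{j'\in J}\Pr[(u,v_{j'})\in\opt\mid\types_J]\Bigr)\Bigl(\sum_{k'\notin J}\Pr[(u,v_{k'})\in\opt\mid\types_J]\Bigr)\right].
\]
By Definition~\ref{def:PJ} and Observation~\ref{obs:sameJ}, the first inner sum equals $P_\ell(\types_J)$, and the second equals $\Pr[u\in\opt\mid\types_J]-P_\ell(\types_J)$, which is at most $1-P_\ell(\types_J)$. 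Dividing by $\ell(n-\ell)$ yields the target expression.

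\textbf{Main obstacle.} The natural identity above produces the factor $\Pr[u\in\opt\mid\types_J]-P_\ell(\types_J)$, while the statement writes $1-P_\ell(\types_J)$; the step $\Pr[u\in\opt\mid\types_J]\le 1$ is the one place where equality is relaxed to an upper bound. This is the only nontrivial point, and it is exactly what is needed since Lemma~\ref{lem:variance} upper-bounds $\E[y_u^2]$. The rest of the proof is a careful tracking of which indices lie inside/outside $J$ together with the tower/symmetry manipulation above.
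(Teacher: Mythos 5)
Your proof is correct and follows essentially the same route as the paper's: tower property to collapse to $\types_J$, then i.i.d.\ symmetrization over pairs $(j',k')$ with $j'\in J$, $k'\notin J$ to turn the product into $P_\ell\cdot(\text{complement})$. The one place you deviate is the point you flag yourself as the \emph{main obstacle}, and you are right to flag it: the exact identity gives the factor $\Pr[u\in\opt\mid\types_J]-P_\ell(\types_J)$, not $1-P_\ell(\types_J)$, and the paper silently writes
\[
\Pr[\exists\,k'\notin J,\,(u,v_{k'})\in\opt\mid\types_J] \;=\; 1-\Pr[\exists\,k'\in J,\,(u,v_{k'})\in\opt\mid\types_J],
\]
which holds only if $u$ is matched in $\opt$ with conditional probability $1$. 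In general this is an inequality, so the lemma's $\ell>0$ case should read $\leq$ rather than $=$. This is a harmless slip: the lemma feeds into Lemma~\ref{lem:cross-terms}, where all the coefficients $a_j^{(r)}$ and the resulting combinations are nonnegative for $\beta=0.79$, so an upper bound is exactly what is needed, and the downstream bound on $\E[y^2]$ is unaffected. Your blind proof is sound and, if anything, slightly more careful than the paper's on this step.
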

\begin{proof}
\begin{itemize}
	\item If $\ell = 0$, $\types_{[j-r_1+1,j]}$ is independent of $\types_{[k-r_2+1,k]}$. Therefore, 
\begin{align*}
& \quad \E_{\types}\left[x^{(r_1)}_{j} x^{(r_2)}_{k}\right]  \\
&= \E_{\types_{[j-r_1+1,j] \cup [k-r_2+1,k]}} \left[\Pr\left[(u, v_j) \in \opt \mid \types_{[j-r_1+1,j]}\right] \cdot \Pr\left[(u, v_k) \in \opt \mid \types_{[k-r_2+1,k]}\right]\right]\\
&= \E_{\types_{[j-r_1+1,j]}} \left[\Pr [(u, v_j) \in \opt \mid \types_{[j-r_1+1,j]}]\right] \cdot \E_{\types_{[k-r_2+1,k]}}\left[\Pr[(u, v_k) \in \opt \mid \types_{[k-r_2+1,k]}]\right] \\
&= \Pr_{\types}[(u, v_j) \in \opt] \cdot \Pr_{\types}[(u,v_k) \in \opt]. \numberthis \label{equ:part2-iid}
\end{align*}

Since we have i.i.d. arrival, $u$ is matched to each online vertex with the same probability in the optimal matching. 
\[
\eqref{equ:part2-iid} = \left(\frac{\Pr[u \in \opt]}{n}\right)^2 = \frac{\mu^2}{n^2}.	
\]

\item If $\ell > 0$, let $[a,b] = [j - r_1 + 1, j] \cap [k-r_2+1,k]$. Then $\ell = b - a + 1$. Moreover since $j < k$, we must have $j \in [a,b]$ and $k \not\in [a,b]$. In this case, 
\begin{align*}
& \quad \E_{\types}\left[x^{(r_1)}_{j} x^{(r_2)}_{k}\right] \\
& = \E_{\types_{[j-r_1+1,j] \cup [k-r_2+1,k]}} \left[\Pr[(u, v_j) \in \opt \mid \types_{[j-r_1+1,j]}] \cdot \Pr[(u, v_k) \in \opt \mid \types_{[k-r_2+1,k]}]\right] \\
& =\E_{\types_{[a,b]}} \bigg[ \E_{\types_{[j-r_1+1,j] \setminus [a,b]}}\left[\Pr[(u, v_j) \in \opt \mid \types_{[j-r_1+1,j]}]\right] \cdot \\
& \phantom{=\E_{\types_{[a,b]}} \bigg[ } \E_{\types_{[k-r_2+1,k] \setminus [a,b]}} \left[\Pr[(u, v_k) \in \opt \mid \types_{[k-r_2+1,k]}]\right] \bigg] \\
& = \E_{\types_{[a,b]}} \left[\Pr\left[(u, v_j) \in \opt \mid \types_{[a,b]}\right] \cdot \Pr \left[(u, v_k) \in \opt \mid \types_{[a,b]}\right]\right]. \numberthis \label{equ:part2-iid-2}
\end{align*}
Similar as the proof of Lemma \ref{lem:x^2}, we use the property of I.I.D. arrival. 
For all $j' \in [a,b], k' \not\in [a,b]$, we know that
\begin{align*}
&\E_{\types_{[a,b]}} \left[\Pr\left[(u, v_j) \in \opt \mid \types_{[a,b]}\right] \cdot \Pr\left[(u, v_k) \in \opt \mid \types_{[a,b]}\right]\right] \\ 
= &\E_{\types_{[a,b]}} \left[\Pr\left[(u, v_{j'}) \in \opt \mid \types_{[a,b]}\right] \cdot \Pr\left[(u, v_{k'}) \in \opt \mid \types_{[a,b]}\right]\right].	
\end{align*}

Hence,
\begin{align*}
& \quad (\ref{equ:part2-iid-2}) \\
& = \frac{1}{\ell (n - \ell)} \E_{\types_{[a,b]}} \left[\sum_{j' \in [a,b]} \sum_{k' \in [n] \setminus [a,b]}\Pr\left[(u, v_{j'}) \in \opt \mid \types_{[a,b]}\right] \cdot \Pr\left[(u, v_{k'}) \in \opt \mid \types_{[a,b]}\right]\right] \tag{by the fact that the vertices are i.i.d. and $\ell = b-a+1$} \\
& = \frac{1}{\ell (n - \ell)}  \E_{\types_{[a,b]}} \left[\left(\sum_{j' \in [a,b]} \Pr\left[(u, v_{j'}) \in \opt \mid \types_{[a,b]}\right]\right) \cdot \left(\sum_{k' \in [n] \setminus [a,b]} \Pr\left[(u, v_{k'}) \in \opt \mid \types_{[a,b]}\right]\right)\right] \\
& = \frac{1}{\ell (n - \ell)}\E_{\types_{[a,b]}} \left[\Pr[\exists j' \in [a,b], (u, v_{j'}) \in \opt \mid \types_{[a,b]}] \cdot \Pr[\exists k' \in [n] \setminus [a,b],  (u, v_{k'}) \in \opt \mid \types_{[a,b]}]\right] \tag{since the events are disjoint}\\ 
& = \frac{1}{\ell (n - \ell)} \E_{\types_{[a,b]}} \left[\Pr[\exists j' \in [a,b], (u, v_{j'}) \in \opt \mid \types_{[a,b]}] \cdot \left(1 - \Pr[\exists k' \in [a,b],  (u, v_{k'}) \in \opt \mid \types_{[a,b]}]\right)\right] \\
& = \E_{\rtypes_\ell} \left[\frac{P_\ell(\rtypes_{\ell})(1 - P_\ell(\rtypes_{\ell}))}{\ell (n - \ell)}\right].
\end{align*}
\end{itemize}

Now we finish the proof of both cases. 

\end{proof}

Then we are ready to compute the second part.

\begin{lemma} \label{lem:cross-terms}
\begin{align*}
&2\sum_{j=1}^n \sum_{k=j + 1}^n \sum_{r_1 = 1}^j \sum_{r_2 = 1}^k a_j^{(r_1)} a_k^{(r_2)} \E_{\types}\left[x^{(r_1)}_{j} x^{(r_2)}_{k}\right]  \\ \leq &\sum_{\ell=1}^n \frac{2 \E_{\rtypes_{\ell}}[P_{\ell}(\rtypes_{\ell}) (1 - P_{\ell}(\rtypes_{\ell}))]}{\ell} \left( 1 + \frac{n - 5 \ell}{2n} \beta + \frac{8 \ell^2 - \ell n - n^2}{6n^2} \beta^2 \right) + \frac{1}{3} \mu^2 \beta + o(1)
\end{align*}
\end{lemma}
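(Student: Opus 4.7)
The strategy parallels the proof of Lemma~\ref{lem:square-terms}: I first apply Lemma~\ref{lem:xixj} to rewrite each inner expectation $\E_{\types}[x_j^{(r_1)} x_k^{(r_2)}]$ in closed form, then regroup the quadruple sum according to the intersection length $\ell = |[j-r_1+1, j] \cap [k-r_2+1, k]|$. The $\ell = 0$ terms will produce the $\tfrac{1}{3}\mu^2\beta$ summand, while the $\ell \ge 1$ terms combine to yield the displayed sum over $\ell$.

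For $\ell = 0$, the two windows are disjoint exactly when $r_2 \le k - j$. Since then $r_2 < k$, we have $a_k^{(r_2)} = \beta/n$, while the identity $\sum_{r_1=1}^{j} a_j^{(r_1)} = 1$ collapses the $r_1$-sum. The remaining double sum is $\frac{2\mu^2\beta}{n^3} \sum_{j=1}^{n} \sum_{k=j+1}^{n}(k-j)$, which evaluates to $\frac{1}{3}\mu^2\beta + o(1)$ by a standard cubic summation.

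For $\ell \ge 1$, I fix $\ell$ and enumerate, for each $(j,k)$ with $j \ge \ell$ and $k > j$, the pairs $(r_1,r_2)$ whose windows intersect in a set of size exactly $\ell$. Since the intersection length equals $\min\bigl(r_1,\, r_2 - (k-j)\bigr)$ when positive, such pairs fall into Case~A ($r_1 = \ell$, $r_2 \in [(k-j)+\ell,\, k]$) or Case~B ($r_2 = (k-j)+\ell$, $r_1 \in [\ell, j]$), with overlap at the single point $(r_1,r_2) = (\ell,(k-j)+\ell)$. Using the partial-sum identity $\sum_{r=s}^{j} a_j^{(r)} = 1 - (s-1)\beta/n$, valid for $1 \le s \le j$, the inner contribution at fixed $(j,k)$ simplifies to $\frac{2\beta}{n} - \frac{((k-j)+2\ell-1)\beta^2}{n^2}$ in the generic case $\ell < j$, while the boundary case $\ell = j$ collapses to a single term $a_j^{(j)} a_k^{(k)}$. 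Summing these expressions over $(j, k)$ via $\sum_{m=0}^M m = M(M+1)/2$ and $\sum_{m=0}^M \binom{m+1}{2} = \binom{M+2}{3}$, and collecting leading-order terms in $n$, produces precisely the coefficient $1 + \frac{n-5\ell}{2n}\beta + \frac{8\ell^2 - \ell n - n^2}{6n^2}\beta^2$ of $\frac{2\E[P_\ell(1-P_\ell)]}{\ell}$ claimed in the lemma.

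The main obstacle is the combinatorial book-keeping in the $\ell \ge 1$ case: avoiding double-counting between Cases~A and~B, correctly isolating the boundary $\ell = j$ (where the two cases and the overlap all collapse to the single configuration $(r_1,r_2) = (j,k)$), and matching all three orders in $\beta$ against the stated target. All lower-order deviations in $n$ are absorbed into $o(1)$ using $\E_{\rtypes_\ell}[P_\ell(\rtypes_\ell)(1-P_\ell(\rtypes_\ell))] \le \E_{\rtypes_\ell}[P_\ell(\rtypes_\ell)] = \mu\ell/n$ from Lemma~\ref{lem:E[pl]}, which guarantees that any factor of $1/n$ appearing inside the coefficient of $\frac{\E[P_\ell(1-P_\ell)]}{\ell}$ contributes at most $O(\mu/n) = o(1)$ after summing over $\ell \in [n]$. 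The rest of the argument is an elementary expansion of polynomials in $1/n$, $\ell/n$, and $\beta$.
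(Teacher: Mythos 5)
Your proposal is correct and follows essentially the same approach as the paper: apply Lemma~\ref{lem:xixj}, split on $\ell=0$ versus $\ell\ge 1$, regroup the quadruple sum by intersection length, and absorb lower-order terms into $o(1)$ via the bound $\E_{\rtypes_\ell}[P_\ell(1-P_\ell)] \le \mu\ell/n$. The only cosmetic difference is in the $\ell\ge 1$ enumeration: the paper partitions the $(r_1,r_2)$ pairs disjointly into Case 1 ($r_1=\ell$, $r_2\ge k-j+\ell$) and Case 2 ($r_2=k-j+\ell$, $r_1\ge\ell+1$), whereas you use two overlapping cases and subtract the overlap by inclusion--exclusion; both yield the same coefficient $1+\tfrac{n-5\ell}{2n}\beta+\tfrac{8\ell^2-\ell n-n^2}{6n^2}\beta^2$ after collecting leading orders.
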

\begin{proof}
Let $\ell = |[j - r_1+1,j] \cap [k - r_2 + 1,k]|$ and apply Lemma \ref{lem:xixj}.

\begin{itemize}
	\item When $\ell = 0$ (i.e. $[j - r_1+1,j] \cap [k - r_2 + 1,k] = \emptyset$), we know that $\E_{\types}\left[x^{(r_1)}_{i,j} x^{(r_2)}_{i,k}\right] = \frac{\mu^2}{n^2}$. We first count in this part of contribution. Note since $j < k$, $[j - r_1+1,j] \cap [k - r_2 + 1,k] = \emptyset$ if and only if $r_2 \leq k - j$. Therefore, 
\begin{align*}
&2  \sum_{j=1}^n \sum_{k=j + 1}^n \sum_{r_1 = 1}^j \sum_{r_2 = 1}^k a_j^{(r_1)} a_k^{(r_2)} \E_{\types}\left[x^{(r_1)}_{j} x^{(r_2)}_{k}\right] \indic \left[ [j - r_1+1,j] \cap [k - r_2 + 1,k] = \emptyset \right] \\
= & \frac{2 \mu^2}{n^2} \sum_{j=1}^n \sum_{k=j + 1}^n \sum_{r_1 = 1}^j \sum_{r_2 = 1}^k a_j^{(r_1)} a_k^{(r_2)} \indic \left[ [j - r_1+1,j] \cap [k - r_2 + 1,k] = \emptyset \right] \tag{by Lemma \ref{lem:xixj}} \\
=&\frac{2 \mu^2}{n^2} \sum_{j = 1}^n \left(\sum_{r_1=1}^j  a_j^{(r_1)}  \right ) \cdot \sum_{k = j + 1}^n \left(\sum_{r_2 = 1}^{k - j} a_k^{(r_2)}\right) \numberthis \label{equ:part2-1}
\end{align*}

Then we plug in the definition of $a_j^{(r)}$. Note $\sum_{r_1=1}^j a_j^{(r_1)} = 1$ and $\sum_{r_2=1}^{k - j} a_k^{(r_2)} = \sum_{r_2=1}^{k - j} \frac{\beta}{n} = \frac{k - j}{n} \beta$. (Note $j \geq 1$ so that $k - j < k$.) 

\begin{equation}
\label{equ:part2-p1}
\eqref{equ:part2-1} = \frac{2 \mu^2}{n^2} \sum_{j = 1}^n \sum_{k = j + 1}^n \frac{k - j}{n} \beta = \frac{1}{3} \mu^2 \beta + o(1) 
\end{equation}

\item When $\ell > 0$ (i.e. $[j - r_1+1,j] \cap [k - r_2 + 1,k] \neq \emptyset$), we know that $\E_{\types}\left[x^{(r_1)}_{j} x^{(r_2)}_{k}\right] = \E_{\rtypes_{\ell}}\left[\frac{P_\ell(\rtypes_{\ell})(1-P_\ell(\rtypes_{\ell}))}{\ell (n - \ell)}\right]$. Moreover, Since $j < k$ and $\ell = |[j - r_1+1,j] \cap [k - r_2 + 1,k]|$, there are following two cases.
\begin{itemize}
\item  Either $k - r_2 + 1 \leq j - r_1 + 1$. In this case, $\ell = r_1$ and $r_2 \geq k - (j - r_1) = k - j + \ell$. Also we have $j \geq r_1 = \ell$. 
\item  Or $k - r_2 + 1 > j - r_1 + 1$. In this case, $\ell = j - (k - r_2)$ (i.e. $r_2 = k - j + \ell$) and $r_1 \geq \ell + 1$. Also we have $j \geq r_1 \geq \ell + 1$. 
\end{itemize}

Hence, 
\begin{align*}
&2  \sum_{\ell=1}^n\sum_{j=1}^n \sum_{k=j + 1}^n \sum_{r_1 = 1}^j \sum_{r_2 = 1}^k a_j^{(r_1)} a_k^{(r_2)} \E_{\types}\left[x^{(r_1)}_{j} x^{(r_2)}_{k}\right] \indic\left[ |[j - r_1+1,j] \cap [k - r_2 + 1,k]|= \ell \right]\\
= &2  \sum_{\ell=1}^n\sum_{j=1}^n \sum_{k=j + 1}^n \sum_{r_1 = 1}^j \sum_{r_2 = 1}^k a_j^{(r_1)} a_k^{(r_2)} \E_{\rtypes_{\ell}}\left[\frac{P_\ell(\rtypes_{\ell})(1-P_\ell(\rtypes_{\ell}))}{\ell (n - \ell)}\right] \indic \left[ |[j - r_1+1,j] \cap [k - r_2 + 1,k]|= \ell \right] \tag{Lemma \ref{lem:xixj}} \\
= & \sum_{\ell = 1}^n \frac{2 \E_{\rtypes_{\ell}}[P_\ell(\rtypes_{\ell})(1-P_\ell(\rtypes_{\ell}))]}{\ell (n - \ell)} \left(\sum_{j=\ell}^n a_j^{(\ell)}\sum_{k=j+1}^{n} \sum_{r = k - j + \ell}^{k} a_k^{(r)} + \sum_{j=\ell + 1}^n \sum_{r = \ell + 1}^j a_j^{(r)} \sum_{k=j + 1}^n a_k^{(k - j + \ell)} \right) \numberthis \label{equ:part2-2}
\end{align*}
Then we plug in the definition of $a_j^{(r)}$ to each part. 
\begin{itemize}
\item We first plug it into the first term in (\ref{equ:part2-2}). 
\begin{itemize}
\item When $j = \ell$,
\begin{align}
a_{j}^{(\ell)} \sum_{k=j+1}^{n} \sum_{r=k-j+\ell}^k a_k^{(r)} = a_{\ell}^{(\ell)} \sum_{k=\ell + 1}^n a_k^{(k)} = \left(1 - \frac{\ell}{n} \beta \right) \sum_{k=\ell + 1}^n \left(1 - \frac{k}{n} \beta\right). \label{equ:part2-term1}
\end{align}
\item When $\ell < j \leq n$, since $\sum_{r=k-j+\ell}^k a_k^{(r)} = 1 - \frac{k - j+\ell}{n} \beta $, 
\begin{align*}
\sum_{j=\ell + 1}^n a_j^{(\ell)} \sum_{k=j+1}^{n} \sum_{r=k-j+\ell}^k a_k^{(r)} &= \sum_{j=\ell + 1}^n a_j^{(\ell)} \sum_{k=j+1}^{n} \left(1 - \frac{k - j+\ell}{n} \beta \right) \\
&= \sum_{j=\ell + 1}^n \frac{\beta}{n}  \sum_{k=j+1}^{n} \left(1 - \frac{k - j+\ell}{n} \beta \right) \numberthis  \label{equ:part2-term2}
\end{align*}
\end{itemize}
\item Then we plug it into the second term in (\ref{equ:part2-2}). Since $\sum_{r=\ell+1}^j a_j^{(r)} = 1 - \frac{\ell}{n} \beta$ and $\sum_{k=j+1}^n a_k^{(k-j+\ell)} = \sum_{k=j+1}^n \frac{\beta}{n} = \frac{n - j}{n} \beta$ (which is due to the fact that $k - j + \ell < k$ because $j \geq \ell + 1$), 
\begin{align*}
 \sum_{j=\ell + 1}^n \sum_{r = \ell + 1}^j a_j^{(r)} \sum_{k=j + 1}^n a_k^{(k - j + \ell)} = &\sum_{j=\ell + 1}^n \left(\sum_{r = \ell + 1}^j a_j^{(r)}\right) \left(\sum_{k=j + 1}^n a_k^{(k - j + \ell)}\right)  \\
 =&\sum_{j=\ell+1}^n \left(1 - \frac{\ell}{n} \beta \right) \frac{n - j}{n} \beta \numberthis \label{equ:part2-term3}
\end{align*}
\end{itemize}

\begin{align*}
\eqref{equ:part2-2} = &\sum_{\ell = 1}^n \frac{2 \E_{\rtypes_{\ell}}[P_\ell(\rtypes_{\ell})(1-P_\ell(\rtypes_{\ell}))]}{\ell (n - \ell)} \cdot \left( \eqref{equ:part2-term1} + \eqref{equ:part2-term2} + \eqref{equ:part2-term3} \right) \\
= &\sum_{\ell = 1}^n \frac{2 \E_{\rtypes_{\ell}}[P_\ell(\rtypes_{\ell})(1-P_\ell(\rtypes_{\ell}))]}{\ell (n - \ell)} \cdot \\
& \left( \left(1 - \frac{\ell}{n} \beta\right) \sum_{k=\ell + 1}^n \left(1 - \frac{k}{n} \beta \right) + \sum_{j=\ell + 1}^n \frac{\beta}{n}  \sum_{k=j+1}^{n} \left(1 - \frac{k - j+\ell}{n} \beta \right) + \sum_{j=\ell+1}^n \left(1 - \frac{\ell}{n} \beta \right) \frac{n - j}{n} \beta \right) \\
= & \sum_{\ell=1}^n \frac{2 \E_{\rtypes_{\ell}}[P_\ell(\rtypes_{\ell})(1-P_\ell(\rtypes_{\ell}))]}{\ell}  \left( 1 + \frac{n - 5 \ell + O(1)}{2n} \beta + \frac{8 \ell^2 - \ell n - n^2 + O(1)(\ell + n)}{6n^2} \beta^2 \right) \\
= & \sum_{\ell=1}^n \frac{2 \E_{\rtypes_{\ell}}[P_\ell(\rtypes_{\ell})(1-P_\ell(\rtypes_{\ell}))]}{\ell}  \left( 1 + \frac{n - 5 \ell}{2n} \beta + \frac{8 \ell^2 - \ell n - n^2}{6n^2} \beta^2 \right)  + o(1) \numberthis \label{equ:part2-p2}
\end{align*}

Here the last step follow from the fact that $\E_{\rtypes_{\ell}}[P_\ell(\rtypes_{\ell})(1-P_\ell(\rtypes_{\ell}))] \leq \E_{\rtypes_{\ell}}[P_\ell(\rtypes_{\ell})] = \frac{\mu \ell}{n}$ (by Lemma \ref{lem:E[pl]}) and $$\sum_{\ell=1}^n \frac{\mu \ell}{n}\cdot \frac{1}{\ell} \left(\frac{O(1)}{n} + \frac{O(1)(\ell + n)}{n^2}\right) \leq \mu \sum_{\ell = 1}^n  \left(\frac{O(1)}{n^2} + \frac{O(1)(\ell + n)}{n^3}\right) = o(1).$$
\end{itemize}

Summing up (\ref{equ:part2-p1}) and (\ref{equ:part2-p2}) concludes the proof. 
\end{proof}

\subsubsection{Summary} \label{subsec:sumup}
Now, we have all necessary pieces to conclude the proof of Lemma~\ref{lem:variance}. By Lemma \ref{lem:square-terms} and Lemma \ref{lem:cross-terms}, we have the following.

\begin{align*}
\E_{\types}[y^2] 
&\le \sum_{\ell = 1}^n \frac{\E_{\rtypes_{\ell}}[P^2_\ell(\rtypes_{\ell})]}{\ell} \left( -1 + \frac{n + \ell}{n} \beta + \frac{\ell^2 - 5 \ell n + n^2}{3 n^2} \beta^2 \right)  \\ 
&+ \sum_{\ell=1}^n \frac{2 \E_{\rtypes_{\ell}}[P_\ell(\rtypes_{\ell})]}{\ell} \left( 1 + \frac{n - 5 \ell}{2n} \beta + \frac{8 \ell^2 - \ell n - n^2}{6n^2} \beta^2 \right) + \frac{1}{3} \mu^2 \beta + o(1) \numberthis \label{equ:final}
\end{align*}
By Lemma \ref{lem:E[pl]}, we know $\E_{\rtypes_{\ell}}[P_\ell(\rtypes_{\ell})] = \frac{\mu \ell}{n}$. Hence $\E_{\rtypes_{\ell}}[P^2_\ell(\rtypes_{\ell})] \geq \E_{\rtypes_{\ell}}[P_\ell(\rtypes_{\ell})]^2 = \frac{\mu^2 \ell^2}{n^2}$.

Suppose $\ell = \alpha n$ where $0 \leq \alpha \leq 1$. Recall that $\beta = 0.79$, the coefficient of $\frac{\E_{\rtypes_{\ell}}[P^2_\ell(\rtypes_{\ell})]}{\ell}$ in (\ref{equ:final}) is 
\[
-1 + \frac{n + \ell}{n} \beta + \frac{\ell^2 - 5 \ell n + n^2}{3 n^2} \beta^2 = \frac{\beta^2}{3} \alpha^2+\left(\beta - \frac{5}{3} \beta^2\right) \alpha + \left(-1 + \beta + \frac{1}{3}\beta^2 \right) < 0.21 \alpha^2 - 0.25 \alpha < 0,
\]
for all $0 < \alpha \leq 1$. 
Finally, we plug in our lower bound for $\E_{\rtypes_{\ell}}[P^2_\ell(\rtypes_{\ell})]$ and conclude the proof by:
\begin{align*}
\E_{\types}[y^2] & \leq \sum_{\ell = 1}^n \frac{\mu^2 \ell}{n^2} \left( -1 + \frac{n + \ell}{n} \beta + \frac{\ell^2 - 5 \ell n + n^2}{3 n^2} \beta^2 \right)  \\ 
& + \sum_{\ell=1}^n \frac{2 \mu}{n} \left( 1 + \frac{n - 5 \ell}{2n} \beta + \frac{8 \ell^2 - \ell n - n^2}{6n^2} \beta^2 \right) + \frac{1}{3} \mu^2 \beta + o(1) \\
& \leq \left(-\frac{1}{2} + \frac{7}{6}\beta - \frac{11}{36} \beta^2\right)\mu^2 + \left(2 - \frac{3}{2} \beta + \frac{7}{18} \beta^2 \right) \mu + o(1)\\
& \leq 1.05771\mu + 0.231\mu^2~. \tag{$\beta = 0.79$}
\end{align*}
  
\section{Non I.I.D. Arrivals}
\label{sec:non-iid}

In this section, we focus on the independent estimator for non i.i.d. vertex arrivals and characterize the worst-case instance for it. 

\begin{theorem} \label{thm:noniid}
The independent estimator is $\nonfrac$-competitive for the non i.i.d. vertex-weighted fractional online stochastic matching problem.  By rounding it with OCS, the algorithm is $\nonint$-competitive. Moreover, the independent estimator achieves the best competitive ratio among all subset-resampling estimators.\footnote{See Section~\ref{subsec:design} for the definition of subset-resampling estimators.}.
\end{theorem}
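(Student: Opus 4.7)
The plan is to apply Lemma~\ref{lem:vertex-by-vertex}, reducing the analysis to a per-vertex bound on $\E[\min(y_u, 1)]/\E[y_u]$ (fractional case) and $\E[p(y_u)]/\E[y_u]$ (integral case), where $y_u = \sum_j x_{u,j}(t_j)$ is the total fraction received by offline vertex $u \in L$. The crucial structural observation is that under the independent estimator, $x_{u,j}(t_j) = \Pr[(u,v_j) \in \opt \mid t_j]$ depends only on the independent type $t_j$, so the variables $\{x_{u,j}\}_j$ are mutually independent, take values in $[0,1]$, and have means $p_j = \E[x_{u,j}]$ summing to $\mu = \Pr[u \in \opt] \le 1$.

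Given this, the task reduces to lower bounding $\E[f(y_u)]/\mu$ over all sums of independent $[0,1]$-valued random variables with prescribed marginal means, subject to the additional constraints inherited from the matching structure (most notably, at most one coordinate can be supported on $\{0,1\}$, since two ``forcing'' types cannot simultaneously pin $u$ to distinct $v_j$'s in $\opt$). A standard extreme-point argument shows that the infimum is attained by two-point distributions of the form $x_{u,j} \in \{0, c_j\}$ with $c_j \in (0,1]$; the achievable configurations are tightly parametrized by a family of forcing-type instances in which each $v_j$ exclusively claims $u$ with small probability $p$ and $\opt$ breaks ties first-come-first-served, giving $c_j = (1-p)^{j-1}$. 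In the continuum limit this yields a weighted-Poisson distribution $y_u \sim \int_0^1 e^{-\lambda s}\,dN(s)$ for a rate-$\lambda$ Poisson process $N$, and computing $\E[\min(y,1)]/\mu$ and $\E[p(y)]/\mu$ on this one-parameter family and minimizing over $\lambda \in (0,\infty)$ produces the constants $\nonfrac$ and $\nonint$.

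For the optimality among subset-resampling estimators, I would revisit the same worst-case family and argue that any index set $I_j \supsetneq \{j\}$ weakly increases $\E[y_u^2]$: the per-term second moment grows by a generalization of Lemma~\ref{lem:less}, while the compensating negative-correlation benefit that helped the fully-correlated estimator in Lemma~\ref{lem:neg} fails to materialize under non i.i.d. arrivals (the critical distinction from the i.i.d. analysis of Section~\ref{sec:iid}, where the symmetry of types let us couple conditional expectations across $j$'s). Consequently, no subset-resampling estimator can strictly improve on the independent estimator on the worst-case instance, and the ratio $\nonfrac$ (resp.~$\nonint$) is simultaneously attained and tight.

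The main obstacle will be the precise characterization of the worst-case distribution together with the explicit integral evaluations. Unlike the warm-up in Section~\ref{sec:warm-up}, which uses only the second-moment bound of Lemma~\ref{lem:variance-LP} and therefore cannot exceed $0.646$, reaching $\nonfrac$ requires exploiting the full distributional shape of $y_u$ as a bounded independent sum together with the matching-feasibility constraints on the supports $c_j$. The universality step, showing that no other feasible joint distribution beats the weighted Poisson, will likely require a careful LP-style argument over the space of independent distributions with prescribed means and feasibility constraints; the explicit ratio computations themselves involve integrals of $\min(y,1)$ and $p(y)$ against the weighted-Poisson law that may not admit elementary closed forms and could require dilogarithms or numerical optimization.
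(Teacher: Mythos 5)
Your high-level reduction is on target: use Lemma~\ref{lem:vertex-by-vertex} to reduce to a per-vertex ratio, exploit the mutual independence of the $x_{u,j}$'s under the independent estimator, and argue that the worst case is a one-parameter family of Bernoulli-realized vertices whose limit is a weighted-Poisson instance with geometric supports. That is indeed the shape of the paper's argument, down to the numerical evaluation of $\E[f(y)]/\E[y]$ on the limit family. However, two parts of your proposal are either substantially underdeveloped or incorrect.

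First, the reduction to the extremal family is not a routine extreme-point argument over independent $[0,1]$-valued sums. The paper must first abstract $\opt$ into an arbitrary (possibly randomized, correlated) selection rule $r$, then prove (Lemma~\ref{lem:noniidworstper}) that the infimum over all such rules is achieved by a deterministic permutation rule. This goes through a concave program $\mathsf{CO}$ over a polymatroid whose rank function $g$ is strictly submodular, and an adaptation of a lemma of Gamlath et al.\ showing that extreme points correspond to nested chains $\vect{\tilde S}_1 \subsetneq \cdots \subsetneq \vect{\tilde S}_{|Y|}$ with $|\vect{\tilde S}_k \setminus \vect{\tilde S}_{k-1}| = 1$ — i.e., permutations. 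This step may itself require refining the type distributions. A second, separate lemma (Lemma~\ref{lem:split}) shows that splitting a vertex into two only decreases the ratio, which is what drives the instance to infinitesimal Bernoulli vertices. Your phrase "a standard extreme-point argument" collapses all of this.

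Second, your optimality argument for subset-resampling estimators does not work. You propose showing that enlarging $I_j$ increases $\E[y_u^2]$ while "the compensating negative-correlation benefit that helped... in Lemma~\ref{lem:neg} fails to materialize under non i.i.d. arrivals." That last claim is false: Lemma~\ref{lem:neg} is stated and proved in Section~\ref{sec:warm-up} precisely for non-i.i.d. arrivals, using only that $\opt$ is a matching. More fundamentally, a second-moment dominance argument can only be cashed out through Lemma~\ref{lem:variance-LP}, which yields at best the $0.646$ of the warm-up, not the tight $\nonfrac$, so it cannot establish matching optimality. The paper's argument is of a different nature: on the worst case $\mathcal{W}_n$ the permutation rule prefers the \emph{last}-arriving realized type, so conditioning on $\types_{<j}$ is uninformative about whether $v_j$ is selected, and hence every subset-resampling estimator coincides \emph{exactly} with the independent estimator on $\mathcal{W}_n$; optimality is then immediate. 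Note that your worst-case family uses first-come-first-served tie-breaking ($c_j = (1-p)^{j-1}$). Although this gives the same marginal law of $y_u$ for the independent estimator as the paper's last-come-first-served family ($c_j = (1-p)^{n-j}$), it destroys the optimality argument: under FCFS, the fully-correlated estimator becomes $\{0,1\}$-valued given $\types_{\le j}$ and is a genuinely different algorithm on that instance. The LCFS convention is not cosmetic; it is the load-bearing choice behind the optimality claim.
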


The full proof of Theorem~\ref{thm:noniid} is deferred to Section~\ref{sec:noniid-full}. Moreover, we complement it with the following hardness result.

\begin{theorem} \label{thm:noniidhardness}
No algorithm achieves a competitive ratio better than $0.75$ for the unweighted fractional online stochastic matching problem.
\end{theorem}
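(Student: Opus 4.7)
The plan is to exhibit a minimal non i.i.d. instance on just two offline vertices and two online vertices for which every fractional online algorithm loses a factor of exactly $\tfrac{1}{4}$ relative to the clairvoyant optimum. Take $L = \{a_1, a_2\}$ with unit weights. Let $v_1$ be deterministic, connected to both $a_1$ and $a_2$. Let $v_2$ have two equally likely types: one with neighbor set $\{a_1\}$, the other with $\{a_2\}$. The intuition is that $v_1$ must irrevocably split its unit of fractional mass between $a_1$ and $a_2$ before $v_2$'s type is revealed, so this split is bound to favor the ``wrong'' endpoint half of the time.

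The analysis is in two short steps. First, $\E[\opt] = 2$ deterministically: in either realization a clairvoyant matching of size $2$ exists by assigning $v_2$ to its unique neighbor and $v_1$ to the other offline vertex. Second, for any fractional online algorithm, let $(x_{a_1, 1}, x_{a_2, 1})$ denote the allocation chosen upon the arrival of $v_1$, satisfying $x_{a_1, 1} + x_{a_2, 1} \le 1$. When $v_2$ has only neighbor $a_i$, its best response is to allocate a full unit of mass to $a_i$, so the fractional matching value becomes
\[
\min(y_{a_i}, 1) + \min(y_{a_{3-i}}, 1) \;=\; 1 + x_{a_{3-i}, 1}.
\]
Averaging over the two equally likely realizations of $v_2$ yields
\[
\E[\alg] \;\le\; 1 + \tfrac{1}{2}\bigl(x_{a_1, 1} + x_{a_2, 1}\bigr) \;\le\; \tfrac{3}{2},
\]
and therefore $\E[\alg]/\E[\opt] \le \tfrac{3}{4}$.

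The argument is elementary and no significant technical step is anticipated; the only constraints used are the fractional matching constraint on $v_1$ and the fact that the algorithm has no information about $v_2$'s type when committing $v_1$'s allocation. Consequently the bound applies to \emph{every} fractional algorithm, not only to unbiased estimators, and it is tight for this instance, matched exactly by any algorithm that saturates $v_1$. The only mild subtlety is that the instance uses a deterministic first arrival; this is permitted since the definition allows each distribution $D_j$ to have arbitrary support, including support of size one.
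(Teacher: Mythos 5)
Your proposal is correct and matches the paper's own proof essentially verbatim: the same two-offline, two-online instance with a deterministic first arrival and a coin-flip second arrival, the same observation that a perfect matching always exists so $\E[\opt]=2$, and the same bound $\E[\alg] \le 1 + \tfrac{1}{2}(x_{a_1,1}+x_{a_2,1}) \le \tfrac{3}{2}$ using the capacity of the first online vertex. There is nothing to add.
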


We first introduce some notations and definitions in Section~\ref{subsec:def} and then provide a proof sketch of Theorem~\ref{thm:noniid} in Section~\ref{subsec:strategy}. After that, we present the proof of Theorem~\ref{thm:noniidhardness} in Section~\ref{sec:noniidhardness}. Finally, we show the full proof of Theorem~\ref{thm:noniid} in Section~\ref{sec:noniid-full} and present the details of our experiments in Section~\ref{sec:exper}.

\subsection{Selection Rules and Independent Estimator} \label{subsec:def}
Fix an offline vertex $u$. We are only interested in the matching status of $u$ in the optimal matching and we think of $\opt$ as a \textit{selection rule}. That is, given the type vector $\types = (t_1, t_2, \dots, t_n)$ of online vertices, $\opt$ selects at most one of $j \in [n]$ and matches $u$ to $v_j$. I.e.
$$\opt(\types) = \begin{cases}
	\perp & \text{if $u$ is unmatched in the optimal solution.} \\
	j & \text{if $u$ is matched to $v_j$ in optimal solution.} \\
\end{cases}$$

\paragraph{Selection Rule.}  Formally, we define a (possibly randomized) selection rule $r$ to be a function that maps a type vector to a distribution over $[n] \cup \{\perp\}$: 
$$r: T_1 \times T_2 \times \cdots \times T_n \mapsto \Delta\left( [n] \cup \{\perp\} \right),$$
where $T_i$ is the support of the distribution $D_i$ for every $i \in [n]$ and $\Delta\left( [n] \cup \{\perp\}\right)$ is the family of all distributions supported on $[n] \cup \{\perp\}$. For $j \in [n] \cup \{\perp\}$, we use $r_j(\types)$ to denote the probability mass of $j$. Specifically, $r_{\perp}(\types)$ is the probability of selecting nothing. 

We will crucially use the following special class of selection rules, called \textit{permutation rules}. Each permutation rule specifies a total order $\pi$ over a subset $Y$ of identity-type pairs $\{(j,t)\}_{j\in [n], t \in T_j}$, and then selects the vertex whose identity together with its realized type appears first according to $\pi$. Note that when none of the $(j,t_j)$'s belongs to $Y$, we select nothing and return $\perp$. See the following for a more formal definition.

\begin{definition}[Permutation Rule $r^{\pi}$]

Suppose $\pi$ is a permutation over $Y \subset \{(j,t) \ \vert \ t \in T_j\}$. Based on ${\pi}$, $r^{\pi}$ denotes the following deterministic selection rule.

\begin{algorithm}[H]
\label{alg:selectionfunction}
  \SetAlgoLined
  \For{$i \leftarrow 1 \text{ to } |\pi|$}{
    {Suppose $\pi_i = (j, t)$.\\}
    \If{ $t_j = t$}{
        {Select $j$ and return $r^{\pi}(\types) = j$.}
    }
}
{Select nothing and return $r^{\pi}(\types) = \perp$.}
  \caption{Permutation Rule $r^{\pi}(\types) : T_1 \times T_2 \times \cdots \times T_n \rightarrow [n] \cup \{\perp\}$}
\end{algorithm}
\end{definition}

Previously, we have defined independent estimator with respect to $\opt$. Now we generalize this definition to any general selection rule $r$.

\paragraph{Independent Estimator for Selection Rule $r$.} On the arrival of each vertex $v_j$, fix its type $t_j$ and calculate the probability that $j$ is selected by $r$ when the types of all other vertices are resampled. I.e., 
$$x^r_{u,j}(t_j) = \E_{\tilde{\types}_{-j}}\left[r_j(t_j, \tilde{\types}_{-j})\right].$$ 
Note $x^{\opt}_{u,j}(t_j)$ is exactly the independent estimator defined in Section \ref{sec:warm-up}. 
Accordingly, we denote the cumulative mass of each element $u$ as $y^r_u(\types) = \sum_{j \in [n]} x^r_{u,j}(t_j)$.

\subsection{Proof Sketch} \label{subsec:strategy}
By Lemma \ref{lem:vertex-by-vertex}, our algorithm is $\Gamma$-competitive if for all $u \in L$, $\E\left[f(y^{\opt}_u)\right]\geq \Gamma \cdot \E\left[y^{\opt}_u\right]$ (where $f(y) = \min(y, 1)$ or $p(y)$). 

As discussed above, we think of $\opt$ as a selection rule and strengthening the above statement:
for an arbitrary selection rule $r$ and $u \in L$, $\E\left[f(y^{r}_u)\right]\geq \Gamma \cdot \E\left[y^{r}_u\right]$ holds. Therefore, the competitive ratio $\Gamma$ is lower bounded by the following optimization problem:
\begin{align}
\inf_{D_1, D_2, \dots, D_n} \inf_{ r} \frac{\E[f(y_u^r)]}{\E[y_u^r]} \label{opt-prob}
\end{align}

Our proof contains the following two steps:
\begin{itemize}
\item We first solve the inner optimization. We prove that for any fixed mean $\mu = \E\left[y^{r}_u\right]$, for any concave function $f$, the value of $\E\left[f\left(y^{r}_u\right)\right]$ is minimized by permutation rules.\footnote{We remark that the statement does not hold for all fixed distributions $D_1,D_2,\ldots,D_n$. On the other hand, we show that by optimizing the distributions simultaneously, we can without loss of generality restrict ourselves to permutation rules. See Lemma~\ref{lem:noniidworstper} for the formal statement.}
\item Next, we solve the outer optimization. For any fixed mean $\mu = \E\left[y^{r}_u\right]$, we characterize the worst case arrival distributions that minimize $\E\left[f(y^{r}_u)\right]$ under permutation rules.
\end{itemize}
Putting them together, the performance of the worst-case selection rule on its worst-case arrival for independent estimator gives the optimal constant $\Gamma$ for the above optimization problem. Furthermore, we show that under the worst-case distribution for independent estimator, all subset-resampling estimators have the same behavior as independent estimator, that implies the optimality of independent estimators among all subset-resampling estimators. Finally, we use computer assistance to calculate the numerical value of $\Gamma$.

Below, we elaborate the two steps for solving the optimization problem in more detail.
\paragraph{Selection Rule.} We formulate the inner optimization over $r$ by an optimization problem $\mathsf{CO}$ with concave objective and (matroid) polytope constraints. By concavity, its optimum must be achieved at extreme points of the polytope. 
Next, we adapt the approach in \cite{gamlath2019beating} to prove that every extreme point of the polytope corresponds to a permutation $\pi$ over $Y \subset \{(j,t) \ \vert \ t \in S_j\}$.

\paragraph{Arrival Distributions.} Next, we fix an arbitrary offline vertex $u \in L$ and a permutation rule $r$, and optimize over all possible arrival distributions. We prove that a careful subdivision of an online vertex could decreases the value of $\E[f(y_u^r)]$ while preserving the mean $\mu = \E[y_u^r]$. Informally, the subdivision increases the variance of $y^r_u$ and by the concavity of $f$, the value of $\E[f(y^r_u)]$ decreases. By applying the subdivision process iteratively, the worst case is achieved in the limit when each online vertex is infinitesimal.

\subsection{Hardness Result for Non I.I.D. Arrival}\label{sec:noniidhardness}

In this subsection, we give an upper bound for the fractional online matching problem with non i.i.d. arrivals.

\begin{proofof}{Theorem~\ref{thm:noniidhardness}}
Consider the following instance with $2$ offline vertices and $2$ online vertices. Suppose offline vertices are $\{u_1, u_2\}$, and online vertices are $\{v_1, v_2\}$. The vertices $v_1, v_2$ arrive in order, and $v_1$ deterministically connects to $u_1$ and $u_2$. For $v_2$, with probability $0.5$, it will have an edge to $u_1$, otherwise it will connect to $u_2$. The vertex weight for each offline vertex is $1$ (i.e., the unweighted case). See Figure \ref{fig:hardinstance}.

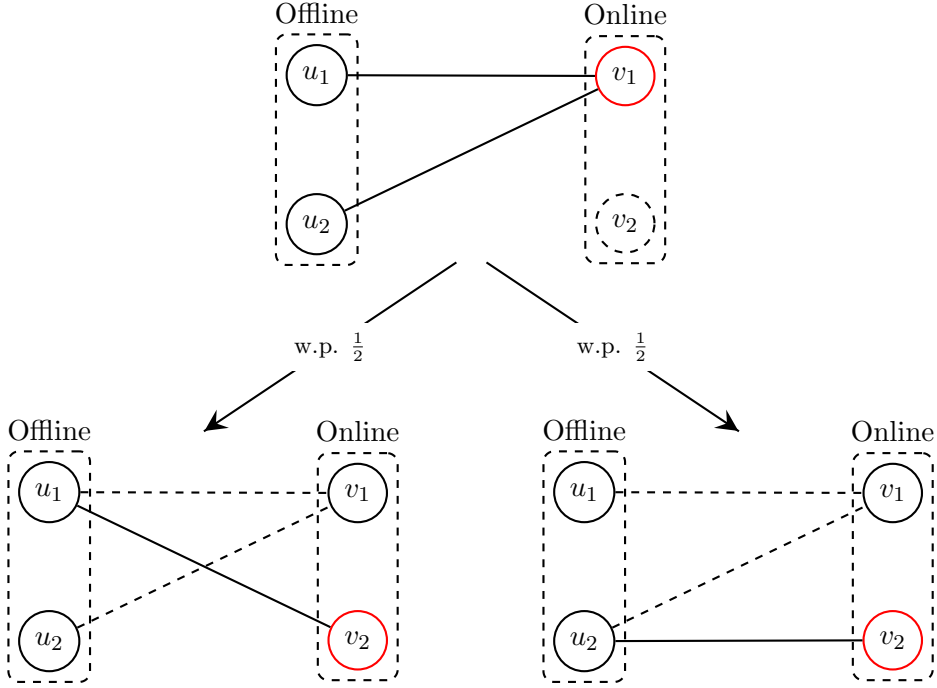
\begin{figure} 
\centering

\begin{tikzpicture}[thick,amat/.style={matrix of nodes,nodes in empty cells,
  fsnode/.style={draw,solid,circle,execute at begin node={$u_{\the\pgfmatrixcurrentrow}$}},
  ssnode/.style={draw,solid,circle,execute at begin node={$v_{\the\pgfmatrixcurrentrow}$}}}]

 \matrix (m1left) [amat,nodes=fsnode,label=above:Offline,row sep=3em,dashed,draw,rounded corners]  {
 \\
 \\ };

 \matrix (m1right) [amat,right=3cm of m1left,nodes=ssnode,label=above:Online,row sep=3em,dashed,draw,rounded corners]  {\node (m1-right) [draw=red,fill=none] {};\\
 \node (m1-right2) [dashed,fill=none] {};
 \\ };
\foreach \x in {1,...,2}
    {\draw  (m1left-\x-1) edge[thick] (m1-right); } 

\coordinate[right=1.5cm of m1left] (m1mida) ;
\coordinate[below=1.5cm of m1mida] (m1midb) ;
\coordinate[left=0.2cm of m1midb] (m1mid1) ;
\coordinate[right=0.2cm of m1midb] (m1mid2) ;

\coordinate[below=4cm of m1left] (TMP) ;

 \matrix (m2left)  [amat,left = 3cm of TMP, nodes=fsnode,label=above:Offline,row sep=3em,dashed,draw,rounded corners]  {
 \\
 \\ };

 \matrix(m2right) [amat,right=3cm of m2left,nodes=ssnode,label=above:Online,row sep=3em,dashed,draw,rounded corners]  
 {\\
 \node (m2right-2-1) [draw=red,fill=none] {};\\ };
\coordinate[right=1.5cm of m2left] (m2midb) ;
\coordinate[above=1.8cm of m2midb] (m2mid) ;

\foreach \x in {1,...,2}
    {\draw  (m2left-\x-1) edge[dashed, thick] (m2right-1-1); } 
 
 {\draw  (m2left-1-1) edge[thick] (m2right-2-1); } 

 \matrix (m3left)  [amat,right = 3cm of TMP, nodes=fsnode,label=above:Offline,row sep=3em,dashed,draw,rounded corners]  {
 \\
 \\ };

 \matrix(m3right) [amat,right=3cm of m3left,nodes=ssnode,label=above:Online,row sep=3em,dashed,draw,rounded corners]  {\\
\node (m3right-2-1) [draw=red,fill=none] {}; \\ };
 
\coordinate[right=1.5cm of m3left] (m3midb) ;
\coordinate[above=1.8cm of m3midb] (m3mid) ;

\foreach \x in {1,...,2}
    {\draw  (m3left-\x-1) edge[dashed, thick] (m3right-1-1); } 
 
 {\draw  (m3left-2-1) edge[thick] (m3right-2-1); } 
 
{\draw[decoration={markings,mark=at position 1 with
    {\arrow[scale=2,>=stealth]{>}}},postaction={decorate}] (m1mid1) -- (m2mid) node [midway,fill=white,anchor=center]{\footnotesize w.p. $\frac12$};}

{\draw[decoration={markings,mark=at position 1 with
    {\arrow[scale=2,>=stealth]{>}}},postaction={decorate}] (m1mid2) -- (m3mid) node [midway,fill=white,anchor=center]{\footnotesize w.p. $\frac12$};}

\end{tikzpicture}

\caption{The hard instance for fractional general arrival matching}
\label{fig:hardinstance}
\end{figure}

Notice that this instance will always have a perfect matching. Hence, the offline optimum equals $2$. 
However, any algorithm that matches $x_1$ fraction to $u_1$ and $x_2$ fraction to $u_2$ for the first online vertex $v_1$ will receive an expected total weight of at most $1.5$ after the realization of $v_2$, since
\[
    \E[\textsf{ALG}] \le \frac12 \left(x_1 + 1 \right) + \frac12 \left(x_2 + 1\right) \le \frac{3}{2},
\]   
where the first inequality follows from the fact that any vertex can match to at most a fraction of $1$, and the second inequality uses the fact that $x_1 + x_2\le 1$.
Therefore, no algorithm can achieve a competitive ratio better than $1.5/2 = 0.75$, that conclude the proof of the theorem.
\end{proofof} 
\subsection{Proof of Theorem~\ref{thm:noniid}}\label{sec:noniid-full}

\subsubsection{Permutation Rules} \label{subsec:worst-rule}

We prove that permutation rules are the worst among all selection rules for the inner optimization of \eqref{opt-prob} introduced in Section~\ref{subsec:strategy}. Formally, we prove Lemma~\ref{lem:noniidworstper} here.

\begin{definition}
For a distribution $D$ with support $T = \{a_1, a_2, \dots, a_n\}$, we say the distribution $D'$ with support $T' = S_1 \sqcup S_2 \sqcup \dots \sqcup S_n$\footnote{We use $\sqcup$ to denote a disjoint union of sets.} is a refinement of $D$ if and only if for all $i$, $\sum_{a \in S_i} \nonp^{D'}(a) = \nonp^D(a_j)$, where $\nonp^D, \nonp^{D'}$ are the probability mass functions of distributions $D,D'$ respectively. 

Moreover, for a sequence of distributions $\vect{D} = (D_1, D_2, \dots, D_n)$, we say $\vect{D'} = (D'_1, D'_2, \dots, D'_n)$ is its refinement if and only if each $D'_i$ is a refinement of $D_i$.
\end{definition}

\begin{lemma} \label{lem:noniidworstper}
Fix an arbitrary offline vertex $u \in L$ and the type distributions $\vect{D} = (D_1, D_2, \dots, D_n)$.
For any concave function $f$ that is non-negative when $x \geq 0$ and selection rule $r$ for $\vect{D}$ with $\E_{\types \sim \vect{D}}[f(y_u^{r})] = \mu$, there exists a refinement $\vect{D'}$ of $\vect{D}$ and a corresponding permutation rule $r^{\pi}$ with $\E_{\types \sim \vect{D'}}[f(y_u^{r^{\pi}})] = \mu$ such that $\E_{\types \sim \vect{D}}[f(y_u^{r}))] \geq \E_{\types \sim \vect{D'}}[f(y_u^{r^{\pi}})].$
\end{lemma}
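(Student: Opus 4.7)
The plan is to recast the inner optimization of~(\ref{opt-prob})---minimize $\E_{\types\sim\vect{D}}[f(y_u^r)]$ over selection rules $r$ satisfying $\E[y_u^r]=\mu$---as a concave minimization on a convex polytope, apply the extreme-point principle, and (after refining $\vect{D}$ to $\vect{D}'$) identify every minimizing extreme point as a permutation rule. Throughout fix $u\in L$ and write $y^r$ for $y_u^r$.

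\textbf{Concave minimization on a polytope.} Parameterize a selection rule by the variables $\{r_j(\types)\}$ for $j\in[n]\cup\{\perp\}$ and $\types\in T_1\times\cdots\times T_n$. The nonnegativity constraints $r_j(\types)\ge 0$, the simplex equalities $\sum_{j\in[n]}r_j(\types)+r_\perp(\types)=1$, the edge-incidence conditions ($r_j(\types)=0$ unless $v_j$ with type $t_j$ is adjacent to $u$), and the single linear equation $\E_\types[\sum_j r_j(\types)]=\mu$ cut out a convex polytope $\mathcal{R}(\vect{D},\mu)$. Since $y^r(\types)=\sum_j x_{u,j}^r(t_j)$ is linear in $r$ and $f$ is concave and nonnegative on $[0,\infty)$, the map $r\mapsto \E[f(y^r)]$ is concave on $\mathcal{R}(\vect{D},\mu)$ and hence attains its minimum at an extreme point $r^*$.

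\textbf{Extreme points are permutation rules.} Following the approach of~\cite{gamlath2019beating}, I would first argue that any extreme point of the unconstrained polytope is pointwise deterministic: for each $\types$, $r^*(\types)$ places all mass on a single element of $[n]\cup\{\perp\}$. The single additional hyperplane $\E[y^r]=\mu$ introduces at most one dimension of fractional slack, and this slack is absorbed by a refinement of $\vect{D}$---splitting a single type into two sub-types whose probabilities realize the fractional split exactly---so that $r^*$ becomes pointwise deterministic on the refinement $\vect{D}'$. On $\vect{D}'$, define a binary relation on identity-type pairs by declaring $(j,t)\prec(j',t')$ iff $r^*$ selects $j$ in every $\types$ with both $t_j=t$ and $t_{j'}=t'$. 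Antisymmetry and transitivity follow from a Gamlath-et-al.-style exchange argument: a violating cycle would furnish a nontrivial feasible direction preserving both the simplex and the mean equations, contradicting the extremality of $r^*$. The induced total order defines the permutation $\pi$, and the resulting rule $r^\pi$ lies in $\mathcal{R}(\vect{D}',\mu)$, so $\E_{\vect{D}'}[y^{r^\pi}]=\mu$ and $\E_{\vect{D}'}[f(y^{r^\pi})]\le \E_{\vect{D}'}[f(y^{\tilde r})]=\E_{\vect{D}}[f(y^r)]$, where $\tilde r$ is the trivial lift of $r$ to $\vect{D}'$.

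\textbf{Main obstacle.} The delicate step is the extreme-point characterization: verifying that the single degree of fractional slack introduced by the mean constraint can always be absorbed by a refinement, and that the relation $\prec$ induced by a pointwise-deterministic extreme point is necessarily acyclic. Both rely on constructing explicit feasible perturbations of $r^*$ within the polytope, and the acyclicity argument parallels the matroid-exchange argument of~\cite{gamlath2019beating}, adapted from matching polytopes to the selection-rule polytope above.
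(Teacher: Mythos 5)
Your overall framing---concave minimization over a polytope, minimum at an extreme point, then identify that extreme point with a permutation rule (after refining $\vect{D}$)---matches the spirit of the paper. But there is a genuine gap in the key step, and it comes from choosing the wrong polytope.

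You parameterize the polytope directly by the $\{r_j(\types)\}$ variables, so that without the mean constraint it is a product of simplices whose extreme points are the pointwise-deterministic rules. You then define $(j,t)\prec(j',t')$ by ``$r^*$ selects $j$ in every $\types$ with $t_j=t,t_{j'}=t'$'' and claim a $\prec$-cycle would give a nontrivial feasible direction contradicting extremality. This is false: a pointwise-deterministic rule is already a vertex of the product of simplices, so it admits no feasible perturbation preserving the simplex constraints regardless of whether $\prec$ is cyclic. Concretely, with $n=2$, $T_1=T_2=\{a,b\}$ and $r^*(a,a)=1,\ r^*(a,b)=2,\ r^*(b,a)=2,\ r^*(b,b)=1$, the induced relation contains the cycle $(1,a)\prec(2,a)\prec(1,b)\prec(2,b)\prec(1,a)$, yet $r^*$ is an extreme point. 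So extremality alone does not force $\prec$ to be an order, and your step from ``extreme point'' to ``permutation rule'' does not go through; you would additionally need to show the \emph{minimizing} extreme point is acyclic, which is a new argument you have not supplied.

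The paper sidesteps exactly this difficulty by not working in the $r_j(\types)$ space. Since $y^r_u(\types)=\sum_j z^r_j(t_j)/\nonp_j(t_j)$, the objective depends on $r$ only through the aggregated variables $z_j(t)=x^r_{u,j}(t)\nonp_j(t)$. The paper derives necessary constraints on $z$ (Lemma~\ref{lem:constraints_for_x}): these are the polymatroid inequalities $\sum_{j}\sum_{t\in S_j} z_j(t)\le g(S_1,\dots,S_n)$ with $g$ strictly submodular, plus the mean equality. Minimizing the concave objective over \emph{this} polytope (the problem $\mathsf{CO}$) gives a lower bound on $\E[f(y_u^r)]$, and because the feasible region is polymatroidal, its extreme points admit a chain of tight sets $\vect{\tilde S}_1\subsetneq\cdots\subsetneq\vect{\tilde S}_{|Y|}$ with $|\vect{\tilde S}_k\setminus\vect{\tilde S}_{k-1}|=1$ (Lemma~\ref{lem:polytope}, adapted from \cite{gamlath2019beating}). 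That chain is precisely the permutation $\pi$; the case split on $g(\vect{\tilde S}_{|Y|})\lessgtr\mu$ is where the refinement enters, by splitting one type. The nested-tight-set structure is what replaces your acyclicity argument, and it is available only after passing to the $z$-polytope; that reduction is the missing ingredient in your proposal.
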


For any fixed type $t$, we use $\nonp_j(t)$ to denote the probability mass of $t$ in $D_j$. We shall formalize an optimization problem $\mathsf{CO}$ to lower bound $\E_{\types \sim \vect{D}}[f(y_u^{r}))]$ for all selection rules $r$. To motivate this, we prove the following properties of selection rules.

For each selection rule $r$, let $z^r_j(t) \eqdef x^r_{u,j}(t) \cdot \nonp_j (t)$. Recall that $x_{u,j}^r(t)$ is the probability that selection rule matches $u,v_j$ together conditioning on the type of vertex $j$ to be $t$. By multiplying it with $\nonp_j(t)$, $z^r_j(t)$ is the probability that type $t$ is realized and selected by the selection rule $r$.

\begin{lemma} \label{lem:constraints_for_x}
For each selection rule $r$ with $\E[y^r_u] = \mu$, it satisfies the following conditions: 
\begin{align*}
	& \sum_{j\in[n]} \sum_{t\in T_j} z^r_j(t) = \mu & \\
	& \sum_{j\in[n]} \sum_{t\in S_j} z^r_j(t) \le g(S_1,S_2,\cdots,S_n) & \forall (S_1, S_2, \dots, S_n): S_i \subseteq T_i\\
	& z^r_j(t) \geq 0 & \forall j \in [n], t \in T_j
\end{align*}
where $g(S_1, S_2, \cdots, S_n) \eqdef 1 - \prod_{j\in [n]}\left(1 - \sum_{t \in S_j} \nonp_j(t)\right)$.
\end{lemma}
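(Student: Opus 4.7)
The non-negativity condition (3) is immediate: by definition $z^r_j(t) = \nonp_j(t)\cdot x^r_{u,j}(t) = \nonp_j(t)\cdot \E_{\tilde{\types}_{-j}}[r_j(t,\tilde{\types}_{-j})]$, and both factors are non-negative.

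For the equality in (1), the plan is a short chain of identities that moves the sum over types back inside the expectation. Writing
\[
\sum_{t\in T_j} z^r_j(t) = \sum_{t\in T_j} \nonp_j(t)\cdot \E_{\tilde{\types}_{-j}}[r_j(t,\tilde{\types}_{-j})] = \E_{\types}[r_j(\types)],
\]
and then summing over $j$, we obtain $\sum_{j}\sum_{t\in T_j} z^r_j(t) = \E_{\types}\bigl[\sum_{j}r_j(\types)\bigr]$. On the other hand, by the definition of $y_u^r$ and the unbiased-estimator computation already used in Lemma~\ref{lem:vertex-by-vertex}, $\mu = \E[y_u^r] = \sum_{j}\E_{t_j}[x^r_{u,j}(t_j)] = \sum_{j}\sum_{t}\nonp_j(t)x^r_{u,j}(t)$, matching the previous display.

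The substantive step is the upper bound (2). Fix sets $S_j \subseteq T_j$. Using exactly the same unfolding as above,
\[
\sum_{j}\sum_{t\in S_j} z^r_j(t) = \sum_{j} \E_{\types}\bigl[r_j(\types)\cdot \indic[t_j\in S_j]\bigr] = \E_{\types}\Bigl[\sum_{j} r_j(\types)\cdot \indic[t_j\in S_j]\Bigr].
\]
Now I argue that the integrand is dominated pointwise by $\indic\bigl[\exists j:\,t_j\in S_j\bigr]$. Indeed, if no $j$ satisfies $t_j\in S_j$, then every summand is zero; otherwise, since $r$ selects at most one element, $\sum_j r_j(\types)\cdot \indic[t_j\in S_j] \le \sum_j r_j(\types) \le 1$. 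Taking expectations and using independence of the type realizations across $j$,
\[
\E_{\types}\Bigl[\sum_{j} r_j(\types)\cdot \indic[t_j\in S_j]\Bigr] \le \Pr\bigl[\exists j:\,t_j\in S_j\bigr] = 1 - \prod_{j\in[n]}\Bigl(1 - \sum_{t\in S_j}\nonp_j(t)\Bigr) = g(S_1,\ldots,S_n).
\]
The main (minor) obstacle is only to observe cleanly that $\sum_j r_j(\types)\cdot\indic[t_j\in S_j]$ is bounded by this single indicator rather than by a naive sum of indicators, which would have given a weaker union-bound of $\sum_j \sum_{t\in S_j}\nonp_j(t)$ instead of the tighter product expression needed for later matroid/polytope arguments. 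With that observation in hand, the proof is complete.
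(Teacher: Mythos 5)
Your proposal is correct and follows essentially the same argument as the paper's proof: unfold $z^r_j(t)$ into an expectation over $\types$, then bound the integrand by $\indic[\exists j:\,t_j\in S_j]$ using the fact that the selection rule places total mass at most $1$ across $[n]$ for any fixed $\types$. The only cosmetic point is that ``$r$ selects at most one element'' for a randomized rule really means $\sum_j r_j(\types)\le 1$, which you do write explicitly, so the reasoning is sound.
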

\begin{proof} 
The first equation follows from the assumption that $\E[y^r_u] = \mu$ and
\[
\E[y^r_u] = \sum_{j=1}^n \E_{t\sim D_j}[x^r_{u,j}(t)] = \sum_{j=1}^n \sum_{t \in T_j} x^r_{u,j}(t) \nonp_j(t).
\]
The second condition holds since 
\begin{multline*}
\sum_{j=1}^n \sum_{t\in S_j} z^r_j(t) = \sum_{j=1}^n \sum_{t \in S_j} x^r_{u,j}(t) \nonp_j(t) = \sum_{j=1}^n \sum_{t_j \in S_j} \E_{\types_{-j}}[r_j(t_j, \types_{-j})] \cdot \nonp_j(t_j) \\
= \sum_{\types} \Pr[\types \text{ is realized}] \cdot \sum_{j: t_j \in S_j} r_j(\types) \le \Pr_{\types}[\exists j \in [n], \text{$t_j \in S_j$}] = g(S_1, S_2, \dots, S_n),
\end{multline*}
where the inequality follows from the simple observation that for each type vector $\types$, the selection $r$ selects at most one of the $j \in [n]$. The last equality follows from the definition of function $g$.
\end{proof}

Now, consider the optimization problem $\mathsf{CO}$ below, where $g(S_1,S_2,\dots,S_n)$ is defined as in Lemma \ref{lem:constraints_for_x}. Let $\nonp(\types) \eqdef \nonp_1(t_1)  \nonp_2(t_2) \cdots \nonp_n(t_n)$ be the probability that types $\types$ are realized.
\begin{align}
\underset{\{z_j(t_j)\}}{\text{minimize:}} \quad &\sum_{\types} \nonp(\types) \cdot f\left( \sum_{j\in [n]}  z_j(t_j)  /\nonp_j(t_j)\right)\tag{$\mathsf{CO}$} \\
\text{subject to:} \quad &\sum_{j\in[n]} \sum_{t\in T_j} z_j(t)  = \mu \label{eq:NoniidPoly2} \\
\label{eq:NoniidPoly1} & \sum_{j\in[n]} \sum_{t\in S_j} z_j(t) \le g(S_1,S_2,\dots,S_n) & \forall (S_1, S_2, \dots, S_n): S_i\subseteq T_i \\
&z_j(t)\geq 0 &\forall j \in [n], t \in T_j \nonumber
\end{align}

By Lemma~\ref{lem:constraints_for_x}, we have that for every selection rule $r$ and every offline vertex $u \in L$, $\E_{\types \sim D}[f(y^r_u)] \geq \opt_{\mathsf{CO}}$, where $\opt_{\mathsf{CO}}$ is the optimal solution to the above optimization problem.
Next, we adapt the following lemma from \cite{gamlath2019beating} to show that the optimal solution of $\mathsf{CO}$ corresponds to a permutation $\pi$. We define the following notations. 
For a set sequences $\vect{A} = (A_1, A_2, \dots, A_n)$, we interchangeably view it as a set of index-type pairs $\vect{A} = \{(i,t) \mid t \in A_i \}$.
Consequently, we use 1) $\vect{A} \subseteq \vect{B}$ to denote that $i \in [n]$, $A_i \subseteq B_i$, 2) $\vect{A} \setminus \vect{B}$ to denote the set $\{(i, t) \mid t \in A_i \setminus B_i\}$, and 3) $\vect{A} \cup \vect{B}$ (and $\vect{A} \cap \vect{B}$) to denote the set sequence $(A_1 \cup B_1, \ldots, A_n \cup B_n)$ (and $(A_1 \cap B_1, \ldots, A_n \cap B_n)$).

\begin{lemma} [Lemma 12 of \cite{gamlath2019beating}, adapted] \label{lem:polytope}
For any concave function $f(y)$, let the optimal solution of $\mathsf{CO}$ be $\{z_j^*(t)\}$ and $Y = \{(j,t) \mid z_j^*(t) > 0 \}$. 
Then, there exists a sequence of set sequences $\vect{\tilde S}_1 \subsetneq \vect{\tilde S}_2 \subsetneq \cdots \subsetneq \vect{\tilde S}_{|Y|} \subseteq \vect{T} = (T_1, T_2, \dots, T_n)$ such that:
\begin{itemize}
\item For $1 \le k \le |Y| - 1$, the constraint \eqref{eq:NoniidPoly1} corresponding to $\vect{\tilde S_k}$ is tight. I.e., suppose $\vect{\tilde S_k} = (S_{k,1}, S_{k,2}, \dots, S_{k, n})$ where $S_{k, i}\subseteq T_i$,  then
$$\sum_{j\in[n]} \sum_{t\in S_{k,j}} z_j^*(t) = g \left( S_{k,1},S_{k,2},\cdots,S_{k,n} \right)$$
\item For $k = |Y|$, we have 
	$$\sum_{j\in [n]} \sum_{t\in S_{|Y|,j}} z_j^*(t) = \min \left( \mu, g \left( S_{|Y|,1},S_{|Y|,2},\cdots,S_{|Y|,n} \right) \right)$$
\end{itemize}
Moreover, $|\vect{\tilde S}_k \setminus \vect{\tilde S}_{k - 1}| = 1$ for every $1\le k \le |Y|$, where $\vect{\tilde S}_0=\emptyset$; and $\vect{\tilde S}_{|Y|}=Y$ corresponds to the set of non-zero variables.
\end{lemma}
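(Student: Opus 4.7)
The plan is to reduce the claim to a standard polymatroid extreme-point analysis, following the template of Lemma~12 in \cite{gamlath2019beating}. First I would verify that the objective of $\mathsf{CO}$ is concave in $\{z_j(t)\}$: each inner argument $\sum_{j \in [n]} z_j(t_j)/\nonp_j(t_j)$ is linear, $f$ is concave, and the weights $\nonp(\types)$ are nonnegative, so the objective is a nonnegative combination of concave functions. Since the feasible set is a polytope, a minimum is attained at an extreme point; fix such an optimal extreme point $\{z_j^*(t)\}$ with support $Y = \{(j,t) : z_j^*(t) > 0\}$. The nonnegativity constraints $z_j(t) = 0$ for $(j,t) \notin Y$ are automatically tight, so we may restrict attention to the $|Y|$ free coordinates.

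Next I would observe that $g(\vect{S}) = 1 - \prod_j(1 - \sum_{t \in S_j} \nonp_j(t)) = \Pr[\exists j : t_j \in S_j]$ is monotone and submodular on the ground set $\Omega = \{(j,t) : j \in [n], t \in T_j\}$, since it is a coverage probability over independent events. Hence the feasible region of $\mathsf{CO}$ is the intersection of a submodular polytope with the hyperplane $\{\sum_{j,t} z_j(t) = \mu\}$ and the nonnegative orthant. The family of tight inequality constraints at $z^*$ is closed under union and intersection: if $\vect{A}, \vect{B}$ are both tight, then combining submodularity of $g$ with modularity of the left-hand side gives $g(\vect{A}) + g(\vect{B}) \ge g(\vect{A} \cup \vect{B}) + g(\vect{A} \cap \vect{B}) \ge \sum_{\vect{A} \cup \vect{B}} z^* + \sum_{\vect{A} \cap \vect{B}} z^* = g(\vect{A}) + g(\vect{B})$, forcing equality throughout, so both $\vect{A} \cup \vect{B}$ and $\vect{A} \cap \vect{B}$ are tight. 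The tight sets therefore form a lattice.

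Then, since $z^*$ is an extreme point with $|Y|$ free coordinates, it is pinned down by $|Y|$ linearly independent tight constraints drawn from the inequalities and the budget equality. Using the lattice structure, a standard uncrossing argument yields a maximal chain $\vect{\tilde S}_1 \subsetneq \cdots \subsetneq \vect{\tilde S}_{|Y|} = Y$ in this lattice with $|\vect{\tilde S}_k \setminus \vect{\tilde S}_{k-1}| = 1$ for every $k$: any jump of size $\ge 2$ would leave an additional direction in the nullspace of the active constraints, contradicting extremality. For $k < |Y|$ the element $\vect{\tilde S}_k$ makes its inequality constraint tight; for $k = |Y|$ the identity $\vect{\tilde S}_{|Y|} = Y$ together with feasibility yields $\sum_{(j,t) \in \vect{\tilde S}_{|Y|}} z_j^*(t) = \sum_{j,t} z_j^*(t) = \mu$, which combined with the inequality $\sum_{(j,t) \in \vect{\tilde S}_{|Y|}} z_j^*(t) \le g(\vect{\tilde S}_{|Y|})$ produces the $\min$ form in the statement.

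The main obstacle is the increment-size-one property of the chain, which requires carefully executing the uncrossing step so that the rank of the resulting lattice of tight sets equals exactly $|Y|$. I would follow the argument of \cite{gamlath2019beating} essentially verbatim, since the submodular structure of $g$ and the ambient polymatroid setup are identical to theirs; the only mild adaptation is to handle the budget equality $\sum_{j,t} z_j(t) = \mu$ as a separate tight constraint, which naturally provides the top link of the chain and yields the case distinction in the lemma.
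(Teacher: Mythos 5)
Your proposal follows the same high-level route as the paper (concavity of the objective forces the optimum to an extreme point; submodularity of $g$ gives a nested structure to the tight constraints; counting gives $\vect{\tilde S}_{|Y|}=Y$ and size-one increments), but you diverge on one technical point that leaves a genuine gap. The paper first observes that, WLOG $0 < \nonp_j(t) < 1$, the coverage function $g$ is \emph{strictly} submodular; then the inequality
\[
g(\vect{A})+g(\vect{B}) = \sum_{\vect{A}\cap\vect{B}} z^* + \sum_{\vect{A}\cup\vect{B}} z^* \le g(\vect{A}\cap\vect{B})+g(\vect{A}\cup\vect{B})
\]
for any two tight sets $\vect{A},\vect{B}$ contradicts strict submodularity unless $\vect{A}\subseteq\vect{B}$ or $\vect{B}\subseteq\vect{A}$. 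In other words, \emph{all} tight sets of \eqref{eq:NoniidPoly1} at the extreme point are already nested. You instead use ordinary submodularity, derive closure of the tight family under union and intersection (a lattice), and then pick a maximal chain. This is where the argument does not close as sketched.

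Two problems with the lattice-plus-maximal-chain version. First, your justification of the size-one increment property --- ``any jump of size $\geq 2$ would leave an additional direction in the nullspace, contradicting extremality'' --- implicitly perturbs two coordinates in $\vect{\tilde S}_k\setminus\vect{\tilde S}_{k-1}$ in opposite directions. If one of those coordinates lies outside $Y$ (i.e.\ $z^*_j(t)=0$ there), the perturbation is blocked by the nonnegativity constraint, so no contradiction arises; to rule this out you need that all tight sets are contained in $Y$, which requires the strict monotonicity of $g$ (again furnished by $\nonp_j(t)\in(0,1)$) and is not stated in your argument. Second, and more importantly, since you only have a lattice of tight sets rather than a single chain, a perturbation that respects the chosen maximal chain may still violate a tight constraint that sits \emph{off} the chain. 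The paper's strict-submodularity argument avoids this entirely because it shows the tight family \emph{is} a chain, not merely that it has chains inside it. Your approach can in principle be salvaged with more standard polymatroid machinery (showing a spanning chain of tight sets exists and bounds the rank), but as written this is the missing idea, and you should either import the strict-submodularity observation or do the additional uncrossing rank argument explicitly.
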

\begin{proof}
Without loss of generality, we assume that $0 < \nonp(t) < 1$ for all types $t$. Then the function $g(S_1, S_2, \dots, S_n)$ is strictly submodular (c.f. Lemma 11 of \cite{gamlath2019beating}). 

Note that the objective function of $\mathsf{CO}$ is a composition of the concave function $f(y)$ and a linear function of the variables $\{z_j(t_j)\}_{j \in [n], t_j \in T_j}$. Thus, the objective is concave and the optimum value must be attained at one of the extreme points of the polytope defined by the constraints. Thus, it suffices to show that all extreme points of the polytope satisfy our claim.

For extreme point $\{z_j(t)\}$ of the polytope, let $Y = \{(j, t)\given z_j(t) > 0\}$ be the set of non-zero variables.  Since it is an extreme point, at least $|Y|$ constraints of \eqref{eq:NoniidPoly2} and \eqref{eq:NoniidPoly1} are tight. Let $\vect{A} = (A_1, A_2, \dots, A_n), \vect{B} = (B_1, B_2, \dots, B_n)$ be two set sequences whose corresponding constraints of \eqref{eq:NoniidPoly1} are tight. We have:
\begin{multline}
\label{eq:StrictModular}
g(\vect{A}) + g(\vect{B}) = \sum_{j \in [n]} \sum_{t \in A_j} z_j(t) + \sum_{j \in [n]} \sum_{t\in B_j} z_j(t)\\
= \sum_{j \in [n]} \sum_{t \in A_j\cap B_j} z_j(t) + \sum_{j \in [n]} \sum_{t\in A_j \cup B_j} z_j(t) \le g(\vect{A}\cap \vect{B}) + g(\vect{A} \cup \vect{B})
\end{multline}
where the last inequality follows from constraint \eqref{eq:NoniidPoly1}. However, if $\vect{A}\not\subseteq \vect{B}$ and $\vect{B} \not\subseteq \vect{A}$, inequality \eqref{eq:StrictModular} will contradict the strict submodularity of function $g$. Thus, for any $\vect{A}, \vect{B}$ whose corresponding constraints of \eqref{eq:NoniidPoly1} are tight, either $\vect{A}\subseteq \vect{B}$ or $\vect{B} \subseteq \vect{A}$.

Consequently, if $|Y|$ constraints  \eqref{eq:NoniidPoly1} are tight, then there must be $|Y|$ corresponding sets $\vect{\tilde S}_1 \subsetneq \vect{\tilde S}_2\subsetneq \dots \subsetneq \vect{\tilde S}_{|Y|}$ and for all $1 \leq i \leq |Y|$, $|\vect{\tilde S}_{i} \setminus \vect{\tilde S}_{i - 1}| = 1$. 

We are left with the case when $|Y| - 1$ constraints in \eqref{eq:NoniidPoly1} are tight and \eqref{eq:NoniidPoly2} is tight. Let $\vect{\tilde S}_1 \subsetneq \vect{\tilde S}_2\subsetneq \dots \subsetneq \vect{\tilde S}_{|Y| - 1}$ be the $|Y|-1$ set sequences whose corresponding constraints of \eqref{eq:NoniidPoly1} are tight. In this case, we let $\vect{\tilde S}_{|Y|} = Y$ and verify that $\vect{\tilde S}_{|Y| - 1} \subsetneq \vect{\tilde S}_{|Y|}$. For the sake of contradiction, suppose $\vect{\tilde S}_{|Y| - 1} = \vect{\tilde S}_{|Y|} = Y$. Then we must have $g(\vect{\tilde S}_{|Y| - 1}) = \mu$ because $\vect{\tilde S}_{|Y| - 1} = Y$ constains all nonzero variables and the corresponding constraint of \eqref{eq:NoniidPoly1} is tight. The constraint \eqref{eq:NoniidPoly2} is therefore dominated by $\sum_{j \in [n]} \sum_{t \in S_{|Y| - 1, j}} z_j(t) \leq g(\vect{\tilde S}_{|Y| - 1})$ in \eqref{eq:NoniidPoly1}. This implies that \eqref{eq:NoniidPoly2} can be removed, and there must be $|Y|$ tight constraints in \eqref{eq:NoniidPoly1} contradicting our assumption of having only $|Y| - 1$ tight constraints in \eqref{eq:NoniidPoly1}. So we must have that $\tilde{S}_{|Y| - 1}$ is a strict subset of $\tilde S_{|Y|}$.

Consequently, we also have $|\tilde S_{i} \setminus \tilde S_{i - 1}| = 1$ for all $1 \leq i \leq |Y|$. This finishes our proof. 
\end{proof}

Now, we are ready to conclude the proof of our main lemma.

\begin{proofof}{Lemma~\ref{lem:noniidworstper}}
Given distributions $\vect{D}$ and $u \in L$, we solve the corresponding optimization problem and let $\{z_j^*(t)\}$ be the optimal solution. Applying Lemma~\ref{lem:polytope}, we get $|Y|$ set sequences $\vect{\tilde{S}}_1, \vect{\tilde{S}}_2, \ldots, \vect{\tilde{S}}_{|Y|}$ that satisfy the stated properties. Consider the  two cases below. 
\paragraph{Case 1.} If $g(\vect{\tilde{S}}_{|Y|}) \leq \mu$, we simply define $\vect{D'} = \vect{D}$ and $\pi_i$ be the unique element in $\vect{\tilde{S}}_i \setminus \vect{\tilde{S}}_{i - 1}$. We verify that $\opt_{\mathsf{CO}} = \E_{\types \sim D}[f(y_u^{r^\pi})]$. For each $i$, suppose $\pi_i = (j,t)$. We have
\begin{multline*}
x^{r^\pi}_j(t) \nonp_j(t) = \E_{\types \sim D}[r^{\pi}_j(\types) \cdot \indic[t_j = t] ] = g(\vect{\tilde{S}}_i) - g(\vect{\tilde{S}}_{i - 1}) \\
= \sum_{j \in [n]}\sum_{t' \in S_{i,j}} z^*_j(t') - \sum_{j \in [n]} \sum_{t' \in S_{i - 1,j}} z^*_j(t') = z^*_{j}(t)
\end{multline*}
Consequently,
\begin{multline*}
\E_{\types \sim \vect{D'}}[f(y_u^{r^\pi})] = \sum_{\types} \nonp(\types) \cdot f\left(\sum_{j\in [n]} x^{r^\pi}_j(t_j) \right) = \sum_{\types} \nonp(\types) \cdot f\left(\sum_{j \in [n]} \frac{r_j^*(t_j)}{\nonp_j(t_j)}\right) = \opt_{\mathsf{CO}} \le \E_{\types \sim \vect{D}}[f(y_u^{r})] .
\end{multline*}

\paragraph{Case 2.} If $g(\vect{\tilde{S}}_{|Y|}) > \mu$, consider the following refinement. Let the unique element in $\vect{\tilde{S}}_{|Y|} \setminus \vect{\tilde{S}}_{|Y| - 1}$ be $(j^*, t^*)$. We split the type $t^*$ into two types $t_a$ and $t_b$ and modify the distribution to $D'_{j^*}$ with $\nonp^{D'_{j^*}}(t_a) + \nonp^{D'_{j^*}}(t_b) = \nonp^{D_{j^*}}(t^*)$. We explain below how the values of $\nonp^{D'_{j^*}}(t_a), \nonp^{D'_{j^*}}(t_b)$ are chosen.

For each $1 \leq i \leq |Y| - 1$, let $\vect{\tilde{S}'}_i = \vect{\tilde{S}}_i$. 
Suppose $\vect{\tilde{S}}_{|Y|} = (S_{|Y|,1}, S_{|Y|,2}, \dots, S_{|Y|,n})$. 
Let $\vect{\tilde{S}'}_{|Y|} = (S_{|Y|,1}, S_{|Y|,2}, \dots, (S_{|Y|,j^*} \setminus \{t^*\}) \cup \{t_a\}, \dots, S_{|Y|,n})$. Note that $g(\vect{\tilde{S}'}_{|Y|})$ is monotone in $\nonp^{D'_{j^*}}(t_a)$, we therefore can set the value of $\nonp^{D'_{j^*}}(t_a)$ so that $g(\vect{\tilde{S}'}_{|Y|}) = \mu$. 

For each $j \neq j^*$, let $D'_j = D_j$. Define $\pi_i$ to be the unique element in $\vect{\tilde{S}'}_i \setminus \vect{\tilde{S}'}_{i - 1}$ for $1 \leq i \leq |Y|$. We verify that $\E_{\types \sim \vect{D'}} [f(y_u^{r^{\pi}})] \leq \opt_{\mathrm{CO}}$. 
Similar to the previous case, we first calculate the value of $x^{r^\pi}_j(t) \cdot  \nonp^{\vect{D'}}_j(t)$ for each $\pi_i = (j, t)$. 
We consider the two cases depending on the type $t$:
\begin{itemize}
\item If $t \neq t_a$, we have that 
\begin{multline*}
x^{r^\pi}_j(t) \nonp^{\vect{D'}}_j(t) = \E_{\types \sim \vect{D'}}[r^{\pi}_j(\types) \cdot \indic[t_j = t] ] = g(\vect{\tilde{S}'}_i) - g(\vect{\tilde{S}'}_{i - 1}) \\
= \sum_{j\in[n]} \sum_{t' \in S'_{i,j}} z^*_j(t') - \sum_{j\in [n]} \sum_{t' \in S'_{i - 1,j}} z^*_j(t') = z^*_{j}(t).
\end{multline*}
\item If $t = t_a$, we have that 
\begin{multline*}
x^{r^\pi}_{j^*}(t_a) \nonp^{\vect{D'}}_{j^*}(t_a) = \E_{\types \sim \vect{D'}}[r^{\pi}_{j^*}(\types) \cdot \indic[t_{j^*} = t_a] ] = g(\vect{\tilde{S}'}_i) - g(\vect{\tilde{S}'}_{i - 1}) \\
= \sum_{j \in [n]}\sum_{t' \in S'_{i,j}} z^*_j(t') - \sum_{j\in[n]} \sum_{t' \in S'_{i - 1,j}} z^*_j(t') = z^*_{j}(t^*).	
\end{multline*}
\end{itemize}
Therefore, we have
\begin{multline*}
\E_{\types \sim \vect{D'}}[f(y_u^{r^{\pi}})] = \sum_{\types} \nonp^{\vect{D'}}(\types) \cdot f\left(\sum_{j\in [n]} x_j^{r^{\pi}} (t_j)\right) \\
= \sum_{\types: t_{j^*} \neq t_a} \nonp^{\vect{D}}(\types) \cdot f\left(\sum_{j \in [n]} \frac{z^*_j(t_j)}{\nonp^{\vect{D}}_j(t_j)}\right) +  \sum_{\types: t_{j^*} = t_a} \nonp^{\vect{D'}}(\types) \cdot  f\left(\sum_{j\ne j^*} \frac{z^*_j(t_j)}{\nonp^{\vect{D}}_j(t_j)} + \frac{z^*_{j^*}(t^*)}{\nonp^{\vect{D'}}_{j^*}(t_a)} \right) \\
\le \sum_{\types: t_{j^*} \neq t_a} \nonp^{\vect{D}}(\types) \cdot f\left(\sum_{j \in [n]} \frac{z^*_j(t_j)}{\nonp^{\vect{D}}_j(t_j)}\right) +  \sum_{\types: t_{j^*} = t_a} \nonp^{\vect{D}}(t_{j^*}=t^*,\typesmi[j^*]) \cdot  f\left(\sum_{j\ne j^*} \frac{z^*_j(t_j)}{\nonp^{\vect{D}}_j(t_j)} + \frac{z^*_{j^*}(t^*)}{\nonp^{\vect{D}}_{j^*}(t^*)} \right) = \opt_{\mathsf{CO}}
\end{multline*}

Here, the inequality is equivalent to 
\[
\nonp^{\vect{D'}}_{j^*}(t_a) \cdot f\left(\sum_{j\ne j^*} \frac{z^*_j(t_j)}{\nonp^{\vect{D}}_j(t_j)} + \frac{z^*_{j^*}(t^*)}{\nonp^{\vect{D'}}_{j^*}(t_a)} \right) \leq \nonp^{\vect{D}}_{j^*}(t^*) \cdot  f\left(\sum_{j\ne j^*} \frac{z^*_j(t_j)}{\nonp^{\vect{D}}_j(t_j)} + \frac{z^*_{j^*}(t^*)}{\nonp^{\vect{D}}_{j^*}(t^*)} \right),
\]
which follows from the fact that $\nonp^{\vect{D'}}_{j^*}(t_a) \leq \nonp^{\vect{D}}_{j^*}(t^*)$ and the following inequality: for any concave function $f$ with $f(c) \geq 0$ and $a \leq b$, we have 
\[
a \cdot f \left(c + \frac{x}{a}\right) \leq (b - a) \cdot f(c) + a \cdot f\left(c + \frac{x}{a}\right) = b \cdot \left(\frac{b - a}{b} \cdot f(c) + \frac{a}{b} \cdot f\left(c + \frac{x}{a} \right) \right) \leq b \cdot f\left(c + \frac{x}{b}\right).
\]
This concludes the proof of $\E_{\types \sim \vect{D'}} [f(y_u^{r^{\pi}})] \leq \opt_{\mathsf{CO}} \le \E_{\types \sim \vect{D}}[f(y_u^{r})]$.
\end{proofof}
 \subsubsection{Arrival Distributions}
\label{subsec:outer}

By Lemma~\ref{lem:noniidworstper}, for any distribution $\vect{D}$ and selection rule $r$, there exists a refinement $\vect{D'}$ and a permutation rule $r^{\pi}$ which gives worse ratio for the optimization problem \eqref{opt-prob}. Hence it is without loss of generality restricting ourselves to permutation rules.
We explicitly find the worst-case arrivals under permutation rules in this section. Specifically, we construct a splitting operation of online vertices, and prove that the value $\E[f(y)]$ always decreases after each split, given $f$ is concave.\footnote{In the proof of Lemma~\ref{lem:noniidworstper}, we did a similar operation which split the type $t^* \in T_{j^*}$ of an online vertex $v_{j^*}$. It worths pointing out the difference that here the number of online vertex is increased after splitting, while in Lemma~\ref{lem:noniidworstper}, the number of vertices is fixed.} Before formally state our main lemma, we set up necessary notations and define the operation. 

\paragraph{Notations.} 
We shall fix an offline vertex $u\in L$ in the analysis. For notation simplicity, we drop the subscript of $u$. We focus on instances $\mathcal{I} = (\pi, R, \vect{D})$ that are parameterized by a permutation $\pi$, a set of all online vertices $R$ and corresponding distributions $\vect{D} = (D_1, D_2, \dots, D_{|R|})$. We use $T_j,\nonp_j$ to denote the support and the probability mass function of distribution $D_j$ for each $j$. 

\paragraph{Splitting an Online Vertex.} 	Given an instance $\mathcal{I} = (\pi, R, \vect{D})$, a vertex $v_j \in R$, and an arbitrary $\varepsilon > 0$. Consider the following splitting operation.

Let $T_j = \{a_1, a_2, \dots, a_k, \varnothing\}$ be the support of vertex $v_j$, where we use $\varnothing$ to represent the types that are not involved in the support of $\pi$. I.e., those types would not be selected by the permutation rule $r^{\pi}$ and we can safely merge them to a single type that we denote by $\varnothing$. Suppose the types $\{a_1,a_2,\dots,a_k\}$ are in ascending order with respect to $\pi$, i.e. $(\nj, a_1) <_{\pi} (\nj, a_2) <_{\pi} \dots <_{\pi} (\nj, a_{k})$.
We split the vertex $v_j$ into two vertices $v_{j'}$ and $v_{j''}$ with the following distributions:
\begin{align*}
t_{j'}=
\begin{cases}
a_1'  & \text{w.p. } \frac{\nonp_{\nj}(a_1) - \varepsilon}{1 -\varepsilon}\\
a_\ell \quad \forall 2\le \ell \le k, & \text{w.p. } \frac{\nonp_{\nj}(a_\ell)}{1-\varepsilon} \\
\varnothing & \text{w.p. } \frac{\nonp_{\nj}(\varnothing)}{1-\varepsilon}
\end{cases}
\quad\text{and}\quad
t_{j''} = 
\begin{cases}
a_1'' & \text{w.p. } \varepsilon \\
\varnothing & \text{w.p. } 1 - \varepsilon
\end{cases}
\end{align*}
We denote the two distributions by $D_{j'},D_{j''}$ and they are well-defined for $\varepsilon \in (0, \nonp_j(a_1)]$. 

Consider a new instance with online vertices $R' = R \setminus \{v_{\nj}\} \cup \{v_{j'}, v_{j''}\}$, distributions $\vect{D'} = (\vect{D}_{\text{-}j}, D_{j'}, D_{j''})$, and the permutation $\pi'$ defined below.
Suppose $$\pi  = (\cdots, (\nj, a_1), \cdots, (\nj, a_2),\cdots, (\nj,a_k), \cdots).$$ 
Then $\pi'$ is defined as $$\pi' = (\cdots, (j', a_1'), (j'', a_1''),\cdots, (j', a_2), \cdots, (j', a_k)).$$ 
That is, we change $(j, a_i)$ to $(j', a_i)$ for every $i \ge 2$ and substitute $(j',a_1'), (j'',a_1'')$ for $(j, a_1)$. The order of other types remains unchanged. 

Now we are ready to state the main lemma of this section. It states that the instance $\mathcal{I'}$ after an arbitrary split is always worse than the original instance $\mathcal{I}$. Therefore, the worst competitive ratio is achieved in the limit case when we keep splitting the instance. The implication of the lemma shall be explained in the next section.

\begin{lemma}
\label{lem:split}
	  For any instance $\mathcal{I} = (\pi, R, \vect{D})$, let $\mathcal{I'} = \left(\pi', R', \vect{D'}\right)$ be the instance after splitting $v_{\nj} \in R$. For any concave function $f$, we have 
\begin{align}\frac{\E_{\types \sim \vect{D}}\left[f(y^{r^{\pi}})\right]}{\E_{\types \sim \vect{D}}\left[y^{r^{\pi}}\right]} \ge \frac{\E_{\types \sim \vect{D'}}[f(y^{r^{\pi'}})]}{\E_{\types \sim \vect{D'}}[y^{r^{\pi'}}]}. \label{eq:split}\end{align}
\end{lemma}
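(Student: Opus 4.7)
The plan is to exploit the per-vertex independent decomposition $y^{r^\pi}(\types) = \sum_i X_i$ with $X_i := x_i^{r^\pi}(t_i)$ depending only on $t_i$. The first step is a ``locality'' observation: for every index $i \notin \{j, j', j''\}$, one has $x_i^{r^{\pi'}}(t_i) = x_i^{r^{\pi}}(t_i)$, because the $j$-block contribution ``no $j$-pair before $(i, t_i)$ realizes'', equal to $1 - \sum_{m=1}^{\ell}\nonp_j(a_m)$ in $\pi$, matches the product $(1 - \nonp_{j'}(a_1') - \sum_{m=2}^{\ell}\nonp_{j'}(a_m))(1-\varepsilon)$ over the two new indices $j', j''$ in $\pi'$ --- an immediate algebraic identity from the split. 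Coupling the $t_i$'s for $i \notin \{j,j',j''\}$ identically across the two instances, the common remainder $W := \sum_{i \notin \{j,j',j''\}} X_i$ then agrees, and the lemma reduces to comparing $X_j$ with the independent sum $X'_{j'} + X'_{j''}$.

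Let $q_\ell$ denote the probability that no pair strictly before $(j, a_\ell)$ in $\pi$ realizes, so that $q_1 \ge \cdots \ge q_k$. Then $X_j = q_\ell$ with probability $\nonp_j(a_\ell)$ and $X_j = 0$ with probability $\nonp_j(\varnothing)$, while $X'_{j'} \in \{q_1,\, q_\ell(1-\varepsilon),\, 0\}$ and $X'_{j''} \in \{q_1(1-\nonp_j(a_1))/(1-\varepsilon),\, 0\}$ with the corresponding probabilities. A direct check using the identity $(\nonp_j(a_1) - \varepsilon) + \varepsilon(1-\nonp_j(a_1)) = (1-\varepsilon)\nonp_j(a_1)$ confirms $\E[X'_{j'} + X'_{j''}] = \E[X_j]$ and hence $\E[y^{r^\pi}] = \E[y^{r^{\pi'}}]$; the ratio inequality therefore reduces to the pointwise statement $\E[f(W + X_j)] \ge \E[f(W + X'_{j'} + X'_{j''})]$ for every $W$ and every concave $f$, which in turn follows (by conditional Jensen) once I show that $X'_{j'} + X'_{j''}$ is a mean-preserving spread of $X_j$ --- equivalently, construct a joint coupling of $(t_j, t_{j'}, t_{j''})$ respecting the prescribed marginals under which $\E[X'_{j'} + X'_{j''} \mid X_j] = X_j$.

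My construction is piecewise in $t_j$. On $t_j = \varnothing$, set $(t_{j'}, t_{j''}) = (\varnothing, \varnothing)$ deterministically, which trivially matches both sides and consumes the $(\varnothing, \varnothing)$-mass exactly. On $t_j = a_\ell$ for $\ell \ge 2$, distribute the $\nonp_j(a_\ell)$-mass across the configurations $\{(a_\ell, \varnothing), (a_\ell, a_1''), (\varnothing, a_1'')\}$ (plus small mass on other $(a_m, \cdot)$ pairs if needed) in proportions chosen so that both the three affected marginals and the conditional mean $q_\ell$ are simultaneously achieved. On $t_j = a_1$, distribute the $\nonp_j(a_1)$-mass across the remaining configurations so that the residual marginals and the conditional mean $q_1$ are realized. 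The solvability of the resulting linear systems reduces to the identities $d := 1 - \nonp_j(a_1)$, $\,d + \nonp_j(a_1) - \varepsilon = 1 - \varepsilon$, and the monotonicity $q_1 \ge q_\ell$, which together ensure that all weights are non-negative.

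The delicate step will be the $t_j = a_1$ phase, which must simultaneously feed the atoms $q_1 d/(1-\varepsilon)$ (from $(\varnothing, a_1'')$) and $q_\ell(1-\varepsilon) + q_1 d/(1-\varepsilon)$ (from $(a_\ell, a_1'')$ for $\ell \ge 2$) --- values with no natural counterpart in the support of $X_j$ --- while keeping the conditional mean at $q_1$ and respecting all six target marginals. The key feasibility observation is $q_1 d/(1-\varepsilon) \le q_1 \le q_1(1 + d/(1-\varepsilon))$, so that $q_1$ can be realized as a convex combination of the bracketing ``shifted'' atoms, and the intermediate atoms for $\ell \ge 2$ can be absorbed using $q_\ell \le q_1$. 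Should this explicit coupling prove too cumbersome, a clean fallback is to verify the equivalent upper-tail inequality $\E[(X'_{j'} + X'_{j''} - s)_+] \ge \E[(X_j - s)_+]$ for every real $s$ by piecewise bookkeeping over the $O(k)$ subintervals determined by the ordering of the six $X'$-atoms, together with the matched means.
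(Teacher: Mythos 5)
Your framework is sound and in fact a nicer abstraction than the paper's: locality ($x_i^{r^{\pi'}} = x_i^{r^\pi}$ for $i\notin\{j,j',j''\}$, verified by the algebraic identity you state), mean-matching, and reducing \eqref{eq:split} to showing that $X'_{j'}+X'_{j''}$ is a mean-preserving spread of $X_j$ and then invoking conditional Jensen with the independent $W$. By Strassen's theorem this MPS claim is \emph{equivalent} to what the paper proves (namely $\E[g(X_j)]\ge\E[g(X'_{j'}+X'_{j''})]$ for all concave $g$), so the plan is not wrong. But the central step --- actually constructing the martingale coupling --- is left as a sketch, and the sketch as written is not obviously feasible. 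On $t_j = a_\ell$ ($\ell\ge2$) you propose to realize the conditional mean $q_\ell$ from the atoms $(a_\ell,\varnothing)$, $(a_\ell,a_1'')$, $(\varnothing,a_1'')$, but when $\nonp_j(a_1)$ is large and $\varepsilon$ sizeable, all three values can lie strictly below $q_\ell$: e.g.\ $q_1=q_\ell$, $\nonp_j(a_1)=0.9$, $\varepsilon=0.5$ gives atoms at $0.5q_\ell$, $0.7q_\ell$, $0.2q_\ell$, none $\ge q_\ell$. You hedge with ``plus small mass on other $(a_m,\cdot)$ pairs if needed,'' but then the $(a_1',\cdot)$ mass must be shared with the $t_j=a_1$ phase and the nonnegativity of the resulting transport plan is no longer controlled by the simple identities and monotonicity you cite. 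The fallback (checking the stop-loss transform $\E[(Z-s)_+]\ge\E[(X_j-s)_+]$ for all $s$) is a correct equivalent route but is also unexecuted and requires the same amount of case analysis.

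By contrast, the paper avoids constructing any coupling: it fixes $c$, sets $g(x)=f(c+x)$, and directly establishes the concave-order inequality \eqref{eq:noniidaim} by summing two local concavity estimates --- a Jensen step for the split of $a_1$ (Equation~\eqref{eq:noniida1}) and a secant-slope comparison for each $a_\ell$, $\ell\ge 2$ (Claim~\ref{claim:igeq1}, using $x'_{j'}(a_\ell)=(1-\varepsilon)x_j(a_\ell)$). Each claim is a one-line concavity computation; summing them gives \eqref{eq:noniidaim} without ever naming the MPS. So: right decomposition, right invariants, right target; but the key construction is a gap, and the feasibility issue above is a real obstruction to the naive version of your coupling. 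To close it, either (i) verify the stop-loss inequality by sorting all $2k+2$ atoms of $Z$ and doing the bookkeeping, or (ii) replace the explicit coupling by the paper's two local secant-slope claims, which together \emph{imply} the MPS exists without you having to exhibit it.
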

\begin{proof}
For notation simplicity, within the proof of the lemma, we use $x_i(t), x_i'(t)$ to denote $x_i^{r^\pi}(t_i), x_i^{r^{\pi'}}(t)$, and $y(\types), y'(\types)$ to denote $y^{r^{\pi}}(\types), y^{r^{\pi'}}(\types)$. 

We first observe that the mean of $y,y'$ are equal:

\begin{multline*}
\E_{\types \sim \vect{D}}[y] = 1 - \prod_{i: v_i \in R}  \nonp^{D_i}(\varnothing) = 1 - \prod_{i \ne j: v_i \in R} \nonp^{D_i'}(\varnothing) \cdot \nonp^{D_j}(\varnothing) \\
= 1 - \prod_{i \ne j: v_i \in R} \nonp^{D_i'}(\varnothing) \cdot \nonp^{D_{j'}}(\varnothing) \cdot \nonp^{D_{j''}}(\varnothing)= \E_{\types \sim \vect{D'}}[y'],
\end{multline*}
where the third equality follows from the definition of $D_{j'}$ and $D_{j''}$.

Hence, it suffices to prove that $\E_{\types \sim \vect{D}}\left[f(y)\right] \geq \E_{\types \sim \vect{D'}}\left[f(y')\right]$. Note we have 
\begin{equation}
\label{eq:y_and_y'}
y(\types) = \sum_{\substack{v_{\ell} \in R \\ \ell \neq j}} x_\ell(t_\ell) + x_j(t_j), \quad y'(\types) = \sum_{\substack{v_{\ell} \in R' \\ \ell \neq j',j''}} x'_{\ell} (t_{\ell}) + x'_{j'}(t_{j'}) + x'_{j''}(t_{j''}). 
\end{equation}

It is straightforward to verify that for every $\ell \neq j$, the value of $x_{\ell}(t) = x'_\ell(t)$. This follows by the definition of the permutation rule and we omit the tedious proof. 

Moreover, our splitting operation guarantees that for arbitrary constant of $c$:
\begin{equation}
\label{eq:noniidaim}
\E_{t_\nj}\left[f\left(c + x_\nj(t_\nj)\right)\right] \ge \E_{t_{j'},t_{j''}}\left[f\left(c + x'_{j'}(t_{j'}) + x'_{j''}(t_{j''})\right)\right],
\end{equation}
whose proof is deferred to the end of the section. 
Combining this inequality with the fact that $x_\ell(t) = x'_\ell(t)$ for $\ell \ne j$, we have that
\begin{multline*}
\E_{\types \sim \vect{D}}[f(y(\types))] = \E_{\types \sim \vect{D}}\left[f\left(\sum_{\substack{v_{\ell} \in R \\ \ell \neq j}} x_\ell(t_\ell) + x_j(t_j)\right)\right] \\
\ge \E_{\types \sim \vect{D'}}\left[ f\left(\sum_{\substack{v_{\ell} \in R' \\ \ell \neq j',j''}} x'_{\ell} (t_{\ell}) + x'_{j'}(t_{j'}) + x'_{j''}(t_{j''})\right)\right] = \E_{\types \sim \vect{D'}}[f(y'(\types)],
\end{multline*}
that concludes the proof of the lemma.
\end{proof}

\begin{proofof}{\eqref{eq:noniidaim}}
We define $g(x) = f(c + x)$ which is also a concave function. We first expand the left hand side of \eqref{eq:noniidaim} by definition. Suppose $T_j = \{a_1, a_2, \dots, a_k, \varnothing\}$. We have
\begin{align}
    \begin{split}\label{eq:noniidLHS}
        \E_{t_\nj\sim D_\nj}\left[g\left(x_\nj(t_\nj)\right)\right] = \sum_{i=1}^k \nonp^{D_j}(a_i)\cdot g(x_\nj(a_i)) + \left( 1 - \sum_{i=1}^k\nonp^{D_j}(a_i)\right)\cdot g(0).
    \end{split}
\end{align}
We also expand the right hand side. Suppose $T_{j'} = \{a'_1, a_2, \dots, a_k, \varnothing\}$. For simplicity, we define $\tilde a_1 = a'_1$, and $\tilde a_i = a_i$ for $i > 1$. 

$$g(x'_{j'}(t_{j'}) + x'_{j''}(t_{j''})) = \begin{cases}
g(x'_{j'}(\tilde{a}_i) + x'_{j''}(a_1'')) & \text{when $t_{j'} = \tilde{a}_i, t_{j''} = a_1''$ w.p. $\varepsilon\cdot \nonp^{D_{j'}}(\tilde a_i)$} \\ g(x'_{j'}(\tilde{a}_i)) & \text{when $t_{j'} = \tilde{a}_i, t_{j''} = \varnothing$ w.p. $(1-\varepsilon)\cdot \nonp^{D_{j'}} (\tilde a_i)$} \\ g(x'_{j''}(a''_1)) & \text{when $t_{j'}=\varnothing, t_{j''} = a''_1$ w.p. $\left(1 - \sum_{i=1}^k \nonp^{D_{j'}} (\tilde a_i)\right)\cdot \varepsilon$}\\g(0) & \text{when $t_{j'}=\varnothing, t_{j''} = \varnothing $ w.p. $\left(1 - \sum_{i=1}^k \nonp^{D_{j'}}(\tilde{a}_i)\right)\cdot (1 - \varepsilon)$}
\end{cases}$$

And we get the following equation:

\begin{align}
\label{eq:noniidRHS}
    \begin{split}
       \E_{\substack{t_{j'} \sim D_{j'} \\ t_{j''} \sim D_{j''}}} \left[g\left(x_{j'}'(t_{j'})+x_{j''}'(t_{j''})\right)\right]
         &=  \sum_{i=1}^k \left(\varepsilon \nonp^{D_{j'}}(\tilde a_i) \cdot  g\left(  x_{j'}'(\tilde a_i)+x_{j''}'(a_1'') \right) + (1 - \varepsilon) \nonp^{D_{j'}}(\tilde a_i) \cdot  g\left(  x_{j'}'(\tilde a_i)\right)\right) \\
            & \quad\  + \left(1 - \sum_{i=1}^k \nonp^{D_{j'}}(\tilde a_i) \right)\cdot \varepsilon \cdot g\left( x_{j''}'(a_1'')\right) + \left(1 - \sum_{i=1}^k \nonp^{D_{j'}}(\tilde a_i)\right)\cdot (1-\varepsilon)\cdot g(0) 
    \end{split}
\end{align}

We now handle the terms of \eqref{eq:noniidRHS} with $i > 1$ . 

\begin{claim} \label{claim:igeq1}
For any concave function $g$, suppose $i > 1$ (which means $\tilde a_i = a_i$),
    \begin{multline}    
    \label{eq:noniidai}
    \varepsilon \nonp^{D_{j'}} (a_i) \cdot  g\left(  x_{j'}'(a_i)+x_{j''}'(a_1'') \right) + (1 - \varepsilon) \nonp^{D_{j'}} (a_i) \cdot  g\left(  x_{j'}'(a_i)\right)\\
    \le \nonp^{D_j}(a_i)\cdot g(x_j(a_i)) + \varepsilon \nonp^{D_{j'}}(a_i) g(x_{j''}'(a_1''))
    \end{multline}
\end{claim}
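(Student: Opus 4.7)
The plan is to reduce the claim to a one-dimensional concavity comparison on $g$, using an explicit identity for $x_{j'}'(a_i)$. First I would inspect the ordering $\pi'$ to compute $x_{j'}'(a_i)$ for $i \geq 2$. Given $t_{j'} = a_i$, the entries preceding $(j', a_i)$ in $\pi'$ split into three groups: (i) the original non-$j$ prefix before $(j, a_i)$ in $\pi$, (ii) the inserted entry $(j'', a_1'')$, and (iii) the entries $(j', a_1'), (j', a_2), \ldots, (j', a_{i-1})$. Since $t_{j'} = a_i$, none of the group-(iii) entries nor $(j', a_1')$ can match, while $(j'', a_1'')$ matches iff $t_{j''} = a_1''$, independently of everything else. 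Hence $v_{j'}$ is selected exactly when no entry of group (i) matches \emph{and} $t_{j''} \neq a_1''$, giving the key identity
\[
x_{j'}'(a_i) \;=\; (1-\varepsilon)\cdot x_j(a_i).
\]

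Next, using $\nonp^{D_j}(a_i) = (1-\varepsilon)\,\nonp^{D_{j'}}(a_i)$ for $i \geq 2$ (a direct consequence of the split) together with the identity above, I would divide the claim by the common factor $\nonp^{D_{j'}}(a_i)$. Writing $A = x_j(a_i)$ and $v = x_{j''}'(a_1'')$, so that $x_{j'}'(a_i) = (1-\varepsilon) A$, the claim reduces to
\[
\varepsilon\cdot g\bigl((1-\varepsilon) A + v\bigr) + (1-\varepsilon)\cdot g\bigl((1-\varepsilon) A\bigr) \;\leq\; (1-\varepsilon)\cdot g(A) + \varepsilon\cdot g(v).
\]
Both sides are weighted averages of values of $g$ (each with total weight $1$), and a quick check shows they share the same mean $(1-\varepsilon) A + \varepsilon v$. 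Rearranging, the inequality is equivalent to the chord-slope comparison
\[
\frac{g\bigl(v + (1-\varepsilon) A\bigr) - g(v)}{(1-\varepsilon) A} \;\leq\; \frac{g(A) - g\bigl((1-\varepsilon) A\bigr)}{\varepsilon A},
\]
between the secant slopes of $g$ on the intervals $\bigl[v,\, v+(1-\varepsilon) A\bigr]$ and $\bigl[(1-\varepsilon) A,\, A\bigr]$.

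The final step is an appeal to the concavity of $g$: the secant slope of a concave function is non-increasing as the interval slides to the right. The main obstacle will be pinning down the precise relative positioning of the two intervals so that the slope comparison points in the claimed direction. I would unfold $v = \frac{1 - \nonp_j(a_1)}{1 - \varepsilon}\cdot x_j(a_1)$, use the monotonicity $x_j(a_1) \geq x_j(a_i) = A$ (which follows since $(j, a_1)$ precedes $(j, a_i)$ in $\pi$), and combine these with the admissibility bound $\varepsilon \leq \nonp_j(a_1)$ to compare the endpoints (or midpoints) of the two intervals, then invoke concavity of $g$ to close the proof.
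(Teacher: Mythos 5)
Your reduction mirrors the paper's own proof almost exactly: both compute $x_{j'}'(a_i) = (1-\varepsilon)x_j(a_i)$, divide by $\nonp^{D_{j'}}(a_i)$ using $\nonp^{D_j}(a_i) = (1-\varepsilon)\nonp^{D_{j'}}(a_i)$, and arrive at the same-mean, secant-slope formulation. But the last step --- pinning down the relative position of the two intervals so that concavity pushes the slope comparison the right way --- is precisely where a genuine obstacle arises, and the constraints you propose to invoke do not close it. Your chord comparison needs the interval $[v,\,v+(1-\varepsilon)A]$ to sit to the right of $[(1-\varepsilon)A,\,A]$, which requires $v \ge (1-\varepsilon)A$ (and a fortiori $v\ge\varepsilon A$). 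But $v = x_j(a_1)\cdot\frac{1-\nonp_j(a_1)}{1-\varepsilon}$, and when $\nonp_j(a_1)$ is close to $1$ the factor $\frac{1-\nonp_j(a_1)}{1-\varepsilon}$ can be far smaller than $1-\varepsilon$, overwhelming the gain from $x_j(a_1)\ge A$, so neither $x_j(a_1)\ge A$ nor $\varepsilon\le\nonp_j(a_1)$ gives you the required ordering of endpoints.

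Concretely, take $\nonp^{D_j}(a_1)=0.9$, $\nonp^{D_j}(a_2)=0.1$, $\varepsilon=0.1$, with nothing else in $\pi$ before $(j,a_2)$, so $x_j(a_1)=x_j(a_2)=A=1$. Then $v = x_{j''}'(a_1'')=1/9 < 0.9=(1-\varepsilon)A$, the interval $[(1-\varepsilon)A,A]=[0.9,1]$ lies strictly \emph{inside} $[v,\,v+(1-\varepsilon)A]=[1/9,\,91/90]$, and for a concave function the secant slope on the inner (right-aligned) interval is the \emph{smaller} one, so the comparison goes the wrong way. Indeed for $g(y)=-y^2$ the Claim's left-hand side evaluates to about $-0.0924$ while its right-hand side is about $-0.1001$, and the stated inequality fails. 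This is not a defect unique to your plan: the paper's one-line lemma (invoked with only $c>a$, and with an apparent sign error in the displayed consequence) also needs $d\ge b$ \emph{and} $c\ge a$, and neither is verified; Claim~\ref{claim:igeq1} is in fact false as written. In the example above, the overall inequality \eqref{eq:noniidaim} (and hence Lemma~\ref{lem:split}) still holds because the $i=1$ inequality \eqref{eq:noniida1} carries enough slack to absorb the deficit at $i=2$, so the term-by-term decomposition is what breaks; any correct proof of \eqref{eq:noniidaim} has to account for this cross-$i$ cancellation rather than proving the Claim separately for each $i>1$.
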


\begin{proof}
Suppose $d > c, b> a$, and $c > a$, from the concavity of $g$, it follows that $$g(d) - g(c) \le \frac{d-c}{b-a} \left(g(b) - g(a)\right).$$
    
Notice that by our definition of splitting, $ x_{j'}'(a_i) = (1 - \varepsilon) \cdot x_j(a_i) $ for $i > 1$. 
We take $ d = x_{j'}'(a_i)+x_{j''}'(a_1''), c = x_{j''}'(a_1''), b = x_{j}(a_i), a= x_{j'}'(a_i) = (1-\varepsilon)\cdot x_{j}(a_i)$ and get
$$g(x_{j'}'(a_i)+x_{j''}'(a_1'')) - g( x_{j''}'(a_1''))\ge \frac{1 - \varepsilon}{\varepsilon} \left(g(x_{j}(a_i)) - g(x_{j'}'(a_i))\right).$$

Rewrite the formula above and multiply both sides by $\nonp^{D_{j'}}(a_i)$, we then get \eqref{eq:noniidai}.
\end{proof}

Now we are left with the case when $i = 1$.

\begin{claim}
\begin{multline}
\label{eq:noniida1}
\frac{\nonp^{D_j}(a_1) - \varepsilon}{1 - \varepsilon}\cdot \varepsilon\cdot g\left(  x_{j'}'(a'_1)+x_{j''}'(a''_1) \right)+   \left(\nonp^{D_j}(a_1) - \varepsilon\right) \cdot g\left( x_{j'}'(a'_1) \right) \\
+\left(1 - \frac{\nonp^{D_j}(a_1) - \varepsilon}{1 - \varepsilon}\right)\cdot \varepsilon \cdot g\left( x_{j''}'(a_1'')\right) \leq \nonp^{D_j}(a_1) \cdot g( x_j(a_1))
\end{multline}
\end{claim}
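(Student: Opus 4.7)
\textbf{Proof proposal for \eqref{eq:noniida1}.} The plan is to reduce the inequality to a single application of the concavity of $g$, after first working out how the three quantities $x_j(a_1)$, $x'_{j'}(a'_1)$, and $x'_{j''}(a''_1)$ are related. Write $p := \nonp^{D_j}(a_1)$ so that $\nonp^{D_{j'}}(a'_1) = \frac{p-\varepsilon}{1-\varepsilon}$ and $1-\nonp^{D_{j'}}(a'_1) = \frac{1-p}{1-\varepsilon}$.

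First I would compute the two ``new'' selection probabilities explicitly. By definition of the permutation rule, $x'_{j'}(a'_1)$ is the probability that $t_i \ne a$ for every pair $(i,a)$ ranked strictly before $(j',a'_1)$ in $\pi'$. Because $\pi'$ is obtained from $\pi$ by replacing $(j,a_1)$ with the consecutive block $(j',a'_1),(j'',a''_1)$ (and relabelling the later $(j,a_i)$ as $(j',a_i)$), the set of pairs preceding $(j',a'_1)$ in $\pi'$ is exactly the set of pairs preceding $(j,a_1)$ in $\pi$. Since the marginal distribution of every $t_i$ with $i \ne j,j',j''$ is unchanged, this yields $x'_{j'}(a'_1) = x_j(a_1)$. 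For $x'_{j''}(a''_1)$, the set of pairs preceding $(j'',a''_1)$ in $\pi'$ picks up the single extra pair $(j',a'_1)$; using independence of $t_{j'}$ from all other coordinates one obtains
\[
x'_{j''}(a''_1) \;=\; \bigl(1-\nonp^{D_{j'}}(a'_1)\bigr)\cdot x_j(a_1) \;=\; \tfrac{1-p}{1-\varepsilon}\cdot x_j(a_1).
\]

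Next, set $X := x_j(a_1) = x'_{j'}(a'_1)$ and $Y := x'_{j''}(a''_1) = \tfrac{1-p}{1-\varepsilon}X$. Let $\alpha := \tfrac{\varepsilon(p-\varepsilon)}{1-\varepsilon}$, $\beta := p-\varepsilon$, $\gamma := \tfrac{\varepsilon(1-p)}{1-\varepsilon}$, so that \eqref{eq:noniida1} becomes
\[
\alpha\cdot g(X+Y) \;+\; \beta\cdot g(X) \;+\; \gamma\cdot g(Y) \;\le\; p\cdot g(X).
\]
A direct check gives $\alpha+\gamma = \varepsilon$ and $\alpha+\beta+\gamma = p$, so after subtracting $\beta\cdot g(X)$ from both sides the inequality is equivalent to
\[
\tfrac{\alpha}{\varepsilon}\, g(X+Y) \;+\; \tfrac{\gamma}{\varepsilon}\, g(Y) \;\le\; g(X).
\]
The key structural identity is $Y = \tfrac{\gamma}{\varepsilon}X$ (which follows because $\tfrac{\gamma}{\varepsilon} = \tfrac{1-p}{1-\varepsilon}$), and from it one verifies
\[
\tfrac{\alpha}{\varepsilon}(X+Y) + \tfrac{\gamma}{\varepsilon}\,Y \;=\; \tfrac{\alpha}{\varepsilon}X + Y \;=\; \tfrac{\alpha}{\varepsilon}X + \tfrac{\gamma}{\varepsilon}X \;=\; X.
\]
Since $\tfrac{\alpha}{\varepsilon},\tfrac{\gamma}{\varepsilon}\ge 0$ and sum to $1$, Jensen's inequality for the concave function $g$ immediately yields the displayed inequality, and the claim follows.

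\textbf{Main obstacle.} The computation itself collapses to one invocation of concavity, so the only substantive work is the bookkeeping that establishes $x'_{j'}(a'_1)=X$ and $x'_{j''}(a''_1)=\tfrac{1-p}{1-\varepsilon}X$; these are where independence of the split coordinate $t_{j'}$ and the preservation of the prefix of $\pi$ up to the split point are used. Once the identity $Y=\tfrac{\gamma}{\varepsilon}X$ is visible, the remainder is routine.
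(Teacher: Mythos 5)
Your proof is correct and follows essentially the same route as the paper: both reduce the claim to Jensen's inequality for the concave function $g$, using the fact that the coefficients sum to $\nonp^{D_j}(a_1)$ and the weighted average of the arguments equals $\nonp^{D_j}(a_1)\cdot x_j(a_1)$. The paper applies Jensen directly to the three-term combination, while you first make the identities $x'_{j'}(a'_1) = x_j(a_1)$ and $x'_{j''}(a''_1) = \tfrac{1-p}{1-\varepsilon}x_j(a_1)$ explicit and peel off the middle term to reduce to a two-point Jensen application --- a slightly cleaner presentation of the same underlying argument.
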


\begin{proof}
By definition, we have the following facts,
\begin{multline}
\label{eq:noniida1E}
\frac{\nonp^{D_j}(a_1) - \varepsilon}{1 - \varepsilon}\cdot \varepsilon\cdot \left(  x_{j'}'(a'_1)+x_{j''}'(a_1'') \right)+   \left(\nonp^{D_j}(a_1) - \varepsilon\right) \cdot  x_{j'}'(a'_1)  \\
+\left(1 - \frac{\nonp^{D_j}(a_1) - \varepsilon}{1 - \varepsilon}\right)\cdot \varepsilon \cdot x_{j''}'(a_1'') = \nonp^{D_j}(a_1) \cdot x_j(a_1)
\end{multline}
and 
\begin{equation}
\label{eq:noniida1Prob}
\frac{\nonp^{D_j}(a_1) - \varepsilon}{1 - \varepsilon}\cdot \varepsilon +   \left(\nonp^{D_j}(a_1) - \varepsilon\right) +\left(1 - \frac{\nonp^{D_j}(a_1) - \varepsilon}{1 - \varepsilon}\right)\cdot \varepsilon  = \nonp^{D_j}(a_1).
\end{equation}

Then, inequality~\eqref{eq:noniida1} directly follows from \eqref{eq:noniida1E} , \eqref{eq:noniida1Prob} and Jensen's inequality.
\end{proof}

Finally, by summing up \eqref{eq:noniida1} and \eqref{eq:noniidai} from $2$ to $k$, we have
\begin{align}\label{eq:noniidresult}
    \begin{split}
        &\nonp^{D_j}(a_1) \cdot g\left( x_j(a_1)\right) + \sum_{i=2}^k\left(\nonp^{D_j}(a_i)\cdot g(x_j(a_i)) + \varepsilon \nonp^{D_{j'}}(a_i) g(x_{j''}'(a_1''))\right)\\
        \ge & \frac{\nonp^{D_j}(a_1) - \varepsilon}{1 - \varepsilon}\cdot \varepsilon\cdot g\left(  x_{j'}'(a_1')+x_{j''}'(a_1'') \right)+   \left(\nonp^{D_j}(a_1) - \varepsilon\right) \cdot g\left( x_{j'}'(a_1') \right)+\left(1 - \frac{\nonp^{D_j}(a_1) - \varepsilon}{1 - \varepsilon}\right)\cdot \varepsilon \cdot g\left( x_{j''}'(a_1'')\right)\\
        & + \sum_{i=2}^k \left(\varepsilon \nonp^{D_{j'}}(a_i) \cdot  g\left(  x_{j'}'(a_i)+x_{j''}'(a_1'') \right) + (1 - \varepsilon) \nonp^{D_{j'}}(a_i) \cdot  g\left(  x_{j'}'(a_i)\right)\right) 
    \end{split}\end{align}

Rearranging the formula above, we find that \eqref{eq:noniidresult} is equivalent to \eqref{eq:noniidaim} by \eqref{eq:noniidLHS} and \eqref{eq:noniidRHS}.
\end{proofof}
 \subsubsection{The Worst-case Distribution for Independent Estimator} \label{appendix:worst-case-distribution}

Recall that the competitive ratio $\Gamma$ is lower bounded by the following optimization problem:
\begin{equation}\label{eq:noniidtarget}
	\inf_{D_1, D_2, \dots, D_n} \inf_{r} \frac{\E[f(y_u^r)]}{\E[y_u^r]}
\end{equation}

In Section~\ref{subsec:worst-rule}, we show that permutation rules are the worst among all possible selection rules. In Section~\ref{subsec:outer}, we show that splitting the instance while preserving the mean $\mu = \E[y_u^r]$ would only decreases the ratio.
Given an instance $\mathcal{I}=(\pi,R,\vect{D})$, by applying the splitting operation multiple times to every online vertex $v_j \in R$, we first convert the distributions $\vect{D}$ into ``Bernoulli'' distributions, i.e., each online vertex $v_j$ either has type $t_j$, or has type $\varnothing$. This can be achieved by setting $\varepsilon = \nonp_j(a_1)$ in the splitting operation.
Next, we fix an sufficiently small $\varepsilon$ and continue splitting the instance until the probability of each vertex being realized (i.e., its type is not $\varnothing$) is at most $\varepsilon$.
According to Lemma~\ref{lem:split}, the ratio monotonically decreases during the above procedure.
Finally, we can remove all vertices with realized probability smaller than $\varepsilon$. The impact of such change can be made vanishingly small as $\varepsilon \to 0$. At the end, we get a regularized instance defined as the following.

\begin{definition}[Worst-case distribution for fixed $\mu$]
	Fixing $\mu = \E[y_u^r]$, $\mathcal{W}_n$ is an instance with $n$ online vertices and a permutation selection rule $r^{\pi}$. For the $i$-th arrived vertex $v_i$, it has probability $\epsilon$ to be type $t_i$, and otherwise it has no edges. Here $\epsilon$ satisfies that $1 - (1-\epsilon)^n = \mu$. The permutation selection rule $r^{\pi}$ always selects the type $t_k$ with the \textbf{maximal} subscript.	
\end{definition}

\begin{proofof}{Theorem~\ref{thm:noniid}}
We first give the competitive ratio for both fractional and integral non-i.i.d. matching and then show that independent estimator reaches the best possible ratio among subset-resampling estimators.
    
According to the discussion above, the infimum of the optimization problem is achieved in the limit case of $\mathcal{W}_n$ when $n\rightarrow \infty$ for arbitrary fixed number of $\mu$. We use computer assistance to get a numerical result. We leave the details of our implementations in Section~\ref{sec:exper}, and plot the numerical results in the following figure:
\begin{figure}[H]
    \centering
    \includegraphics[scale=0.65]{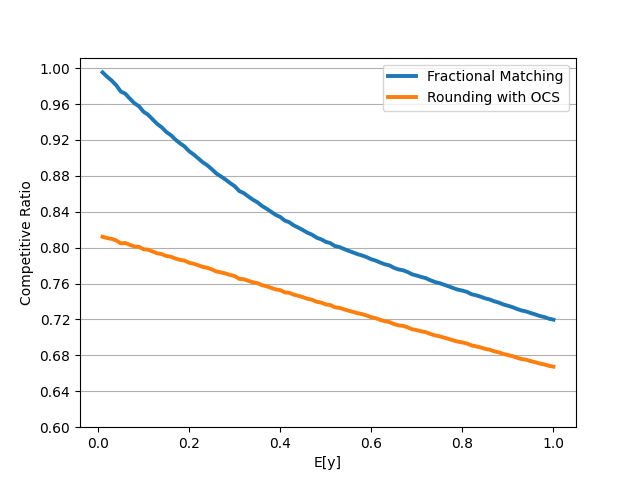}
    \caption{Competitive ratio for different $\mu$}
    \label{fig:noniidexperiments}
\end{figure}

We conclude that the competitive ratios for fractional and integral matching are respectively $0.718$ and $0.666$.
Finally, notice that in the definition of $\mathcal{W}_n$, we specify the permutation $\pi$ to be the reverse order of the arrival of vertices. 
Observe that for such instances, all subset-resampling estimators described in Section~\ref{sec:iid} are equivalent to the independent estimator, since the presence of types before vertex $v_i$ shall not influence the selection of type $t_i$ under such permutation rule. 
In other words, for this restricted family of algorithms, they all achieve the same competitive ratio as the independent estimator for such instances. 
As we have proved this ratio is the worst case ratio for the independent estimator, we conclude that the independent estimator is optimal among all subset-resampling algorithms.
\end{proofof}
 
\subsection{Experiments in Non I.I.D. Arrivals}\label{sec:exper}

In this section we will present the details of our experiments in non-i.i.d. arrivals. The codes which generate all the numerical results presented in this section are open-sourced at this \href{https://github.com/wuzhuangtai00/STOC22-Fractional-Stochastic-Matching}{link}. 

As we have shown in Section~\ref{appendix:worst-case-distribution}, the worst case distribution $\mathcal{W}_n$ is as following: Given the parameter $\mu$ which is the probability that the selection function $f^\pi$ will select one online vertex, there are $n$ online vertices arrive in order. For the $i$-th arrived vertex, it has a probability $p$ to have type $t_i'$, and otherwise it has an empty type that will never be selected by the selection function $f^{\pi}$. The selection function $f^{\pi}$ will always try to select a type with the maximal index, and so for the $j$-th arrived online vertex, the independent estimator $x_{j}^{f^{\pi}}$ is the probability that no vertex with an index larger than $j$ has a non-empty set, which is $(1 - p)^{n-j}$. Therefore, we could easily draw a sample of $$y = \sum_{j=1}^n x_j^{f^{\pi}}(t_i)$$ from the worst case distribution given $n$ and $\mu$ by the following algorithm:

\begin{algorithm}[H]
\label{alg:noniidexperiment}
  \SetAlgoLined
    \Parameter{$\mu$: The probability that selection function selects one vertex.  \\ $n$: Number of offline vertices. }
    
  \KwResult{ $y \leftarrow \sum_{j = 1}^n x_j^{f}(t_j)$ : A sample of sum of independent estimators drawn from worst case distribution.}

    {$p \leftarrow 1 - (1 - \mu) ^ {\frac{1}{n}}$\;}
  \tcp{$p$ is the number satisfying $1 - (1 - p) ^ n = \mu$}
  
  {$y \leftarrow 0$\;}
  \For{$i \leftarrow 1 \text{ to } n$}{
    $x_i \leftarrow (1 - p) ^ {n - i}$, $\text{with probability } p$, and $x_i \leftarrow 0$ otherwise\\
    $y \leftarrow y + x_i$\;
}
  \caption{Draw one sample from the worst case distribution}
\end{algorithm}

Notice that the ratio of fractional and integral matching are  the minimum of the following optimization problem when $f(x)$ respectively equals to $\min(x, 1)$ and $f(x) = 1 - \exp\left(-x-\frac{x^2}{2}-\frac{4-2\sqrt{3}}3x^3\right)$:
\[
\inf_{D_1, D_2, \dots, D_n} \inf_{r} \frac{\E[f(y_u^r)]}{\E[y_u^r]}.
\]
We discretize $\mu$ to be multiples of $0.01$, and set $n$ to be $1000$ for our numerical experiment. Due to the concentration bound for summation of independent random variable, our experiment leads to a reasonably accurate estimation of the real performance of our algorithm.  
\bibliographystyle{plain}
\bibliography{matching}

\appendix

\section{From Second Moment to Competitive Ratio} \label{appendix:variance-LP}

This section is devoted to the proof of Lemma \ref{lem:variance-LP}. 

\begin{reminder}{Lemma \ref{lem:variance-LP}}
Suppose a non-negative random variable $y$ satisfies that $\E[y] = \mu$ and $\E[y^2] \leq \gamma$. 
\begin{enumerate}[label=(\alph*)]
	\item If $\mu \in [0,1]$ and $\gamma = \mu + \frac{1}{2}\mu^2 $, then $\E[\min(y,1)] \geq 0.646 \cdot \E[y]$;
	\item If $\mu \in [0,1]$ and $\gamma = \mu + \frac{1}{2}\mu^2 $, then $\E[p(y)] \geq 0.634 \cdot \E[y]$;
	\item If $\mu \in [0,1]$ and $\gamma = 1.05771\mu + 0.231\mu^2$, then $\E[\min(y,1)] \geq 0.731 \cdot \E[y]$.
	\item If $\mu \in [0,1]$ and $\gamma = 1.05771\mu + 0.231\mu^2$, then $\E[p(y)] \geq 0.704 \cdot \E[y]$.
\end{enumerate}
\end{reminder}

\paragraph*{Proof intuition}

Informally speaking, let $q(y)$ be the p.d.f. of $y$ with support $[0,+\infty)$. We are solving the following LP  \footnote{This formulation is for intuition only. For example, it does not take care of mass point of the distribution of $y$.} where $f(y)$ is either $\min(y,1)$ or $p(y)$.

\begin{align*}
\mathrm{minimize} \ \ &\int_{0}^{\infty} f(y) q(y) \,dy \\	
\mathrm{s.t.} \ \ & \int_0^\infty y^2 q(y) \, dy \leq \gamma \\
				  & \int_0^\infty y q(y) \, dy = \mu\\
				  & \int_0^\infty q(y) \, dy = 1\\
				  & \forall y \geq 0 \ , \ q(y) \geq 0 
\end{align*}

 Then its dual problem is the following.
\begin{align*}
\mathrm{maximize} \ \ & a\gamma + b\mu + d\\
\mathrm{s.t.} \ \ & \forall y \geq 0 \ , \ a y^2+by+d\leq f(y) \\
&a \leq 0
\end{align*}

Intuitively speaking, the dual problem hints us that we can use a quadratic function $ay^2 + by + d$ with $a \leq 0$ to lower bound $1 - \exp(-y - y^2 /2 - cy^3)$, and then plug in our bound for $\E[y]$ and $\E[y^2]$. This naturally gives our desired lower bound for the oringal LP.

\begin{proofof}{Lemma \ref{lem:variance-LP}}

\paragraph*{(a)}

Since we have to prove the lemma for all $0 \leq \mu \leq 1$, intuitively, we need to lower bound the function $\E[\min(y,1)]$ with different quadratic functions for different $\mu$. It turns out that two different quadratic functions sufffice. 

When $\mu \leq 0.35$, we choose quadratic function $a_1y^2 +b_1y+d_1$ with $a_1 = -0.3, b_1 = 1, d_1 = 0$. It lower bounds $\min(y,1)$, and therefore $$\E[\min(y,1)] \geq a_1 \E[y^2] + b_1 \E[y] + d_1 \geq a_1 \gamma + b_1 \mu + d_1.$$

Plug in $\gamma = \mu + \frac{1}{2}\mu^2$, this proves $\E[\min(y,1)] \geq 0.646\mu$ when $0 \leq \mu \leq 0.35$.

When $\mu > 0.35$, we take $a_2y^2 +b_2y+d_2$ with $a_2 = -0.35368, b_2 = 1.20735, d_2 = -0.03040$. It lower bounds $\min(y,1)$, and therefore $$\E[\min(y,1)] \geq a_2 \E[y^2] + b_2 \E[y] + d_2 \geq a_2 \gamma + b_2 \mu + d_2.$$

Plug in $\gamma = \mu + \frac{1}{2}\mu^2$, this proves $\E[\min(y,1)] \geq 0.646\mu$ when $0.35 < \mu \leq 1$.

\paragraph*{(b)}

We follow essentially the same approach as (a). 

When $\mu \leq 0.4$, we still choose quadratic function $a_1y^2 +b_1y+d_1$ with $a_1 = -0.3, b_1 = 1, d_1 = 0$. 

When $\mu > 0.4$, we choose $a_2y^2 +b_2 y+d_2$ with $a_2= -0.3099, b_2 = 1.1108, d_2 = -0.0113$. 

These two functions both lower bound $\E[p(y)]$, and together with $\gamma = \mu + \frac{1}{2}\mu^2$, following same argument as (a), it proves $\E[p(y)] \geq 0.634\mu$

\paragraph*{(c)}

In this case, we need more quadratic functions. 

When $\mu \leq 0.07$, we choose quadratic function $a_1y^2 +b_1y+d_1$ with $a_1 = -0.25, b_1 = 1, d_1 = 0$. 

When $0.07 \leq \mu < 0.21$, we choose $a_2 y^2 + b_2 y + d_2$ with $a_2 = -0.2622, b_2 = 1.0242, d_2 = -0.0006$.

When $0.21 \leq \mu < 0.37$, we choose $a_3 y^2 + b_3 y + d_3$ with $a_3 = -0.2907, b_3 = 1.0813, d_3 = -0.0057$.

When $0.37 \leq \mu < 0.64$, we choose $a_4 y^2 + b_4 y + d_4$ with $a_4 = -0.3265, b_4 = 1.1528, d_4 = -0.0179$.

When $0.64 \leq \mu < 0.78$, we choose $a_5 y^2 + b_5 y + d_5$ with $a_5 = -0.3714, b_5 = 1.2427, d_5 = -0.0397$.

When $0.78 \leq \mu < 0.91$, we choose $a_6 y^2 + b_6 y + d_6$ with $a_6 = -0.4295, b_6 = 1.3589, d_6 = -0.0750$. 

When $\mu \geq 0.91$, we choose $a_7 y^2 + b_7 y + d_7$ with $a_7 =  -0.4654, b_7 = 1.4307, d_7 = -0.0997$.

These functions all lower bound $\E[\min(y,1)]$, and together with $\gamma = 1.05771\mu + 0.231\mu^2$, following same argument as (a), it proves $\E[\min(y,1)] \geq 0.731\mu$.

\paragraph*{(d)}

When $\mu \leq 0.4$, we choose quadratic function $a_1y^2 +b_1y+d_1$ with $a_1 = -0.252, b_1 = 1, d_1 = 0$. 

When $\mu > 0.4$, we take $a_2y^2 +b_2y+d_2$ with $a_2 = -0.347711, b_2 = 1.180665, d_2 = -0.028471$.

These two functions both lower bound $\E[p(y)]$, and together with $\gamma = 1.05771\mu + 0.231\mu^2$, following same argument as (a), it proves $\E[p(y)] \geq 0.704\mu$.

\end{proofof} 
\end{document}